\renewcommand\footnotetextcopyrightpermission[1]{} 
\newcommand{\ie}{i.e.,\ }
\newcommand{\eg}{e.g.,\ }
\newcommand{\Sup}{\reflectbox{\textnormal{\textsf{\fontfamily{phv}\selectfont S}}}\hspace{.2ex}}
\newcommand{\Inf}{\raisebox{.6\depth}{\rotatebox{-30}{\textnormal{\textsf{\fontfamily{phv}\selectfont \reflectbox{J}}}}\hspace{-.1ex}}}
\newcommand{\Quant}{\reflectbox{\textnormal{\textsf{\fontfamily{phv}\selectfont Q}}}\hspace{.2ex}}
\newcommand{\SupV}[1]{\Sup #1\colon}
\newcommand{\InfV}[1]{\Inf #1\colon}
\newcommand{\qprefix}[1]{\sfsymbol{Prefix} \left( #1 \right) \xspace}
\newcommand{\qprefixnoarg}{\sfsymbol{Prefix}\xspace}
\newcommand{\qprefixnoarginvert}{\overline{\qprefixnoarg}\xspace}
\newcommand{\interpret}{\ensuremath{\mathfrak{I}}\xspace}
\newcommand{\pstate}{\pstatea}
\newcommand{\pstatea}{\ensuremath{\sigma}\xspace}
\newcommand{\pstateb}{\ensuremath{\tau}\xspace}
\newcommand{\relPrime}[2]{#1 \bot #2}
\newcommand{\TT}{\TTa\xspace}
\newcommand{\TTa}{\ensuremath{a}\xspace}
\newcommand{\TTb}{\ensuremath{b}\xspace}
\newcommand{\TTc}{\ensuremath{c}\xspace}
\newcommand{\BB}{\ensuremath{\BBa}\xspace}
\newcommand{\BBa}{\ensuremath{\varphi}\xspace}
\newcommand{\BBb}{\ensuremath{\psi}\xspace}
\newcommand{\BBc}{\ensuremath{\xi}\xspace}
\newcommand{\FF}{\ensuremath{f}\xspace}
\newcommand{\FG}{\ensuremath{g}\xspace}
\newcommand{\FH}{\ensuremath{h}\xspace}
\newcommand{\RR}{\ensuremath{\RRa}\xspace}
\newcommand{\RRa}{\ensuremath{r}\xspace}
\newcommand{\RRb}{\ensuremath{s}\xspace}
\newcommand{\RRc}{\ensuremath{t}\xspace}
\newcommand{\XX}{\ensuremath{x}\xspace}
\newcommand{\XY}{\ensuremath{y}\xspace}
\newcommand{\XZ}{\ensuremath{z}\xspace}
\newcommand{\VV}{\ensuremath{v}\xspace}
\newcommand{\VW}{\ensuremath{w}\xspace}
\newcommand{\VU}{\ensuremath{u}\xspace}
\newcommand{\VVcut}{\VV_{\sfsymbol{Cut}}}
\newcommand{\toExp}[1]{\iverson{#1}}
\newcommand{\PP}{\ensuremath{P}\xspace}
\newcommand{\isNat}{\ensuremath{N}\xspace}
\newcommand{\dcut}[1]{\ensuremath{\sfsymbol{Cut} \left( #1 \right)} \xspace}
\newcommand{\dcutzero}[1]{\ensuremath{\sfsymbol{\underline{Cut}} \left( #1 \right)} \xspace}
\newcommand{\DDvar}[2]{\ensuremath{D[#1,  #2 ]}\xspace}
\newcommand{\DD}[1]{\DDvar{\VVcut}{#1}}
\newcommand{\dexpvar}[2]{\ensuremath{\sfsymbol{Dedekind} [#1,  #2 ]} \xspace}
\newcommand{\dexp}[1]{\dexpvar{\VVcut}{#1}}
\newcommand{\gnum}{num}
\newcommand{\seqelemsymbol}{\sfsymbol{Elem}}
\newcommand{\seqelem}[3]{\ensuremath{\seqelemsymbol\left( #1, #2 ,#3 \right)}\xspace}
\newcommand{\rseqelemsymbol}{\sfsymbol{RElem}}
\newcommand{\rseqelem}[3]{\ensuremath{\rseqelemsymbol\left( #1, #2 ,#3 \right)}\xspace}
\newcommand{\seqnum}[1]{\ensuremath{\langle #1 \rangle}\xspace}
\newcommand{\stateseqnum}[1]{\ensuremath{\langle (#1) \rangle}\xspace}
\newcommand{\stateseq}[3]{\ensuremath{\sfsymbol{StateSequence}_{\varseq{#1}}\left(#2, #3 \right)}\xspace}
\newcommand{\sequencesymbol}{\sfsymbol{Sequence}}
\newcommand{\sequence}[2]{\sequencesymbol \left(#1, #2\right)}
\newcommand{\rsequencesymbol}{\sfsymbol{RSequence}}
\newcommand{\rsequence}[2]{\rsequencesymbol \left(#1, #2\right)}
\newcommand{\encodesstate}[2]{\sfsymbol{EncodesState}_{\varseq{#1}} \left( #2 \right)}
\newcommand{\gPair}{\sfsymbol{Pair}\xspace}
\newcommand{\facexp}[1]{\ensuremath{\sfsymbol{Fac} \left( #1\right)} \xspace}
\newcommand{\harmexp}[1]{\ensuremath{\sfsymbol{Harmonic} \left( #1\right)} \xspace}
\newcommand{\gproductsymbol}{\sfsymbol{Product}}
\newcommand{\gproductvar}[3]{\ensuremath{\gproductsymbol \left[ #1, #2, #3 \right]}\xspace}
\newcommand{\gproduct}[2]{\gproductvar{\vprod}{#1}{#2}}
\newcommand{\vprod}{\VV_{\textnormal{prod}} \xspace}
\newcommand{\gsumsymbol}{\sfsymbol{Sum}}
\newcommand{\gsumvar}[3]{\ensuremath{\gsumsymbol \left[ #1, #2, #3 \right]}\xspace}
\newcommand{\gsum}[2]{\gsumvar{\vsum}{#1}{#2}}
\newcommand{\gapply}[3]{\sfsymbol{Subst}_{\varseq{#1}} \left[ #2, #3  \right]}
\newcommand{\gsubst}[3]{\sfsymbol{Subst}_{\varseq{#1}} \left[ #2,#3\right]}
\newcommand{\vsum}{\VV_{\textnormal{sum}} \xspace}
 \newcommand{\pathexpsymbol}{\sfsymbol{Path}}
 \newcommand{\pathexppost}[3]{\pathexpsymbol\left[ #1 \right] \left( #2,#3 \right)}
\newcommand{\pathexp}[2]{\pathexppost{\FF}{#1}{#2}}
\newcommand{\QSL}{\sfsymbol{QSL}\xspace}
\newcommand{\SL}{\sfsymbol{SL}\xspace}
\newcommand{\sfsymbol}[1]{\textsf{\upshape {#1}}}
\newcommand{\ttsymbol}[1]{\texttt{\upshape {#1}}}
\newcommand{\wpsymbol}{\sfsymbol{wp}}
\renewcommand{\wp}[2]{\wpsymbol\llbracket#1\rrbracket\left(#2\right)}
\newcommand{\wpC}[1]{\wpsymbol\llbracket#1\rrbracket}
\newcommand{\wlpsymbol}{\sfsymbol{wlp}}
\newcommand{\wlp}[2]{\wlpsymbol\llbracket#1\rrbracket\left(#2\right)}
\newcommand{\conditionalPair}[2]{{\let\oldarraystretch\arraystretch}\renewcommand{\arraystretch}{1}~\holter{~\raisebox{.5ex}{${#1}$}~}{~\raisebox{.125ex}{${#2}$}~}~\renewcommand{\arraystretch}{\oldarraystretch}}
\newcommand{\cc}{\ensuremath{C}} 
\newcommand{\guard}{\BB} 
\newcommand{\ee}{\ensuremath{e}} 
\newcommand{\pp}{\ensuremath{p}} 
\newcommand{\ff}{\ensuremath{X}} 
\newcommand{\fg}{\ensuremath{Y}}
\newcommand{\fh}{\ensuremath{Z}}
\newcommand{\SKIP}{\ttsymbol{skip}}
\newcommand{\AssignSymbol}{\mathrel{\textnormal{\texttt{:=}}}}
\newcommand{\ASSIGN}[2]{\ensuremath{#1 \AssignSymbol #2}}
\newcommand{\AVAILLOC}[1]{\PosNats}
\newcommand{\COMPOSE}[2]{\ensuremath{{#1}{\,;}~ {#2}}}
\newcommand{\PCHOICE}[3]{\ensuremath{\left\{\, {#1} \,\right\}\mathrel{\left[\,#2\,\right]}\left\{\, {#3} \,\right\}}}
\newcommand{\IFSYMBOL}{\ensuremath{\textnormal{\texttt{if}}}}
\newcommand{\IF}[1]{\ensuremath{\IFSYMBOL\,\left(\, {#1} \,\right)\,\{}}
\newcommand{\ELSESYMBOL}{\ensuremath{\textnormal{\texttt{else}}}}
\newcommand{\ELSE}{\ensuremath{\}\,\ELSESYMBOL\,\{}}
\newcommand{\ITE}[3]{\ensuremath{\IFSYMBOL\,\left(\, {#1} \,\right)\,\left\{\, {#2} \,\right\}\,\ELSESYMBOL\,\left\{\, {#3} \,\right\}}}
\newcommand{\WHILESYMBOL}{\ensuremath{\textnormal{\texttt{while}}}}
\newcommand{\WHILE}[1]{\ensuremath{\WHILESYMBOL \left(\, {#1} \,\right)\left\{\right.}}
\newcommand{\WHILEDO}[2]{\ensuremath{\WHILESYMBOL \left(\, {#1} \,\right)\left\{\, {#2} \,\right\}}}
\newcommand{\pgcl}{\textnormal{\sfsymbol{pGCL}}\xspace}   
\newcommand{\Vars}{\ensuremath{\mathsf{Vars}}\xspace}   
\newcommand{\Terms}{\ensuremath{\mathsf{AExpr}}\xspace}   
\newcommand{\Bools}{\ensuremath{\mathsf{Bool}}\xspace}   
\newcommand{\Nats}{\ensuremath{\mathbb{N}}\xspace}
\newcommand{\PosNats}{\ensuremath{\mathbb{N}_{>0}}\xspace}
\newcommand{\Ints}{\ensuremath{\mathbb{Z}}\xspace}
\newcommand{\Rats}{\ensuremath{\mathbb{Q}}\xspace}
\newcommand{\PosReals}{\mathbb{R}_{\geq 0}}
\newcommand{\PosRats}{\mathbb{Q}_{\geq 0}}
\newcommand{\PosRealsInf}{\mathbb{R}_{\geq 0}^\infty}
\newcommand{\E}{\mathbb{E}}
\newcommand{\SyntE}{\ensuremath{\mathsf{Exp}}\xspace}
\newcommand{\FOArith}[1]{\mathbf{A}_{#1}}
\newcommand{\FOArithPosRats}{\FOArith{\PosRats}}
\newcommand{\FOArithNats}{\FOArith{\Nats}}
\newcommand{\monus}{\mathbin{\dot-}}
\newcommand{\abs}[1]{|#1|}
\newcommand{\exprod}{{}\odot{}}
\newcommand{\FV}[1]{\ensuremath{\sfsymbol{FV}\left(#1\right)}}
\newcommand{\toFOPosRats}[1]{#1_{\PosRats}}
\newcommand{\iverson}[1]{\left[ {#1} \right]}
\newcommand{\subst}[2]{\left[ {#1} \middle/ {#2}\right]}
\newcommand{\statesubst}[2]{\left[ {#1} \mapsto {#2}\right]}
\newcommand{\charwpsym}{\Phi}
\newcommand{\charwp}[3]{\charwpsym_{#3}} 
\newcommand{\charwpn}[4]{\charwpsym^{#4}_{#3}} 
\newcommand{\eval}[1]{\ensuremath{\left\llbracket {#1} \right\rrbracket}}
\newcommand{\sem}[3]{\ensuremath{\tensor*[^{#2}_{}]{\left\llbracket {#1} \right\rrbracket}{}}}
\newcommand{\semleft}[3]{\ensuremath{\tensor*[^{#2}_{}]{\llbracket {#1} }{}}}
\newcommand{\semright}{\rrbracket}
\newcommand{\States}{\Sigma}
\newcommand{\To}{\rightarrow}
\newcommand{\true}{\mathsf{true}}
\newcommand{\false}{\mathsf{false}}
\newcommand{\mydot}{\text{{\Large\textbf{.}}~}}
\newcommand{\varseq}[1]{\ensuremath{\mathbf{#1}}\xspace}
\newcommand{\equivstatesrel}[1]{\sim_{#1}}
\newcommand{\equivstates}[3]{#1 \equivstatesrel{#2} #3}
\newcommand{\partitionedstates}[1]{\States_{\varseq{#1}} }
\newcommand{\statepred}[2]{\iverson{#1}_{#2}}
\newcommand{\qiff}{\quad\textnormal{iff}\quad}
\newcommand{\qqiff}{\qquad\textnormal{iff}\qquad}
\newcommand{\qqand}{\qquad\textnormal{and}\qquad}
\newcommand{\ppreceq}{~{}\preceq{}~}
\newcommand{\defeq}{~{}\triangleq{}~}
\newcommand{\ddefeq}{~{}\defeq{}~}
\newcommand{\eeq}{~{}={}~}
\newcommand{\eequiv}{~{}\equiv{}~}
\newcommand{\pplus}{~{}+{}~}
\newcommand{\ccdot}{~{}\cdot{}~}
\newcommand{\qqlongrightarrow}{\qquad{}\longrightarrow{}\qquad}
\newcommand{\mmid}{~{}|{}~}
\newcommand{\qmid}{\quad{}|{}\quad}
\newcommand{\lleq}{~{}\leq{}~}
\newcommand{\LL}{~{}<{}~}
\newcommand{\iin}{~{}\in{}~}
\newcommand{\setcomp}[2]{\left\{\, {#1} ~\middle|~ {#2} \,\right\}}
\newcommand{\blue}[1]{\textcolor{DodgerBlue3}{#1}}
\newcommand{\lfp}{\ensuremath{\textnormal{\sfsymbol{lfp}}~}}
\begin{document}

\title{Relatively Complete Verification of Probabilistic Programs} 
\subtitle{An Expressive Language for Expectation-based Reasoning}                     


\author[Batz]{Kevin Batz}
\authornote{Batz and Katoen are supported by the ERC AdG 787914 FRAPPANT.}
\affiliation{
  \institution{RWTH Aachen University, Germany}            
  \country{Germany}                    
}
\email{kevin.batz@cs.rwth-aachen.de}          

\author[Kaminski]{Benjamin Lucien Kaminski}
\affiliation{
  \institution{University College London, United Kingdom}            
  \country{United Kingdom}                    
}
\email{b.kaminski@ucl.ac.uk}          

\author[Katoen]{Joost-Pieter Katoen}
\authornotemark[1]
\affiliation{
  \institution{RWTH Aachen University, Germany}            
  \country{Germany}                    
}
\email{katoen@cs.rwth-aachen.de}          

\author[Matheja]{Christoph Matheja}
\affiliation{
  \institution{ETH Z\"urich, Switzerland}            
  \country{Switzerland}                    
}
\email{cmatheja@inf.ethz.ch}          

\begin{abstract}
        We study \emph{a syntax for specifying quantitative \enquote{assertions}}---functions mapping program states to numbers---for probabilistic program verification.
        We prove that our syntax is expressive in the following sense: Given any probabilistic program $\cc$, if a function $\FF$ is expressible in our syntax, then the function mapping each initial state $\pstate$ to the expected value of $\FF$ evaluated in the final states reached after termination of $\cc$ on $\sigma$ (also called the weakest preexpectation $\wp{\cc}{\FF}$) is also expressible in our syntax.
        
        As a consequence, we obtain a \emph{relatively complete verification system}  for reasoning about expected values and probabilities in the sense of Cook:
        Apart from proving a single inequality between two functions given by syntactic expressions in our language, 
        given $\FF$, $\FG$, and $\cc$, we can check whether $\FG \preceq \wp{\cc}{\FF}$. \\ \\
        \noindent
        \emph{\today: This is a revised version, correcting technical issues in the proofs of Theorem \ref{thm:dedekind_nf} and \ref{thm:prod_exp}.}
\end{abstract}


\maketitle


\section{Introduction}\label{sec:introduction}

Probabilistic programs are ordinary programs whose execution may depend on the outcome of random experiments, such as
sampling from primitive probability distributions or branching on the outcome of a coin flip.
Consequently, running a probabilistic program (repeatedly) on a single input generally gives not a single output but a \emph{probability distribution} over outputs.

Introducing randomization into computations is an important tool for the design and analysis
of \emph{efficient algorithms}~\cite{Motwani99}.
However, increasing efficiency by randomization often comes at the price of introducing a non-zero probability of producing incorrect outputs.
Furthermore, even though a program may be efficient \emph{in expectation}, individual executions may exhibit a long---even infinite---run time~\cite{DBLP:conf/rta/BournezG05,DBLP:journals/jacm/KaminskiKMO18}.

Reasoning about these probabilistic phenomena is hard.
For instance, deciding termination of probabilistic programs has been shown to be strictly more complex than for ordinary
programs~\cite{DBLP:conf/mfcs/KaminskiK15,acta19}.
Nonetheless, probabilistic program verification is an active research area.
After seminal work on probabilistic program semantics by \citet{Kozen1979,Kozen1981}, many different techniques have been developed, see \cite{DBLP:conf/popl/HartSP82} for an early example.
Modern approaches include, amongst others, 
martingale-based techniques~\cite{DBLP:conf/cav/ChakarovS13,DBLP:conf/popl/ChatterjeeFNH16,DBLP:conf/cav/ChatterjeeFG16,DBLP:conf/popl/ChatterjeeNZ17,DBLP:conf/aplas/HuangFC18,DBLP:conf/vmcai/FuC19} and
weakest-precondition-style calculi~\cite{benni_diss,McIverM05,DBLP:conf/pldi/NgoC018,DBLP:journals/jacm/KaminskiKMO18,DBLP:journals/pacmpl/BatzKKMN19}.
The former can be phrased in terms of the latter, and all of the aforementioned techniques can be understood as instances or extensions of Kozen's probabilistic propositional dynamic logic (\textsf{PPDL})~\cite{Kozen1983,Kozen1985}.

\paragraph{Probabilistic program verification, extensionally}

There are two perspectives for reasoning about programs: the \emph{extensional} and the \emph{intensional}.
Whereas intensional approaches provide a syntax, \ie a formal language, for  assertions, extensional approaches admit arbitrary assertions and dispense with considerations about syntax altogether---they treat assertions as purely mathematical entities.

A standard technique for probabilistic program verification that takes the extensional approach is the \emph{weakest preexpectation \textnormal{($\wpsymbol$)} calculus} of~\citet{McIverM05}---itself an instance of Kozen's \textsf{PPDL}~\cite{Kozen1983,Kozen1985}.
Given a probabilistic program~$\cc$ and \emph{some function} $\FF$ (called the \emph{postexpectation}), mapping (final) states to numbers, the weakest preexpectation $\wp{\cc}{\FF}$ is a mapping from (initial) states to numbers, such that%
\begin{align*}
	\wp{\cc}{\FF}(\sigma) \eeq \begin{array}{l}
		\textnormal{Expected value of $\FF$, measured in final states reached}\\
		\textnormal{after termination of $\cc$ on initial state $\sigma$}~.
		\end{array}
\end{align*}%
For probabilistic programs with \emph{discrete} probabilistic choices, the $\wpsymbol$ calculus can be defined for \emph{arbitrary} real-valued postexpectations $\FF$~\cite{benni_diss,McIverM05}.

\paragraph{Probabilistic program verification, intensionally}
While the extensional approach often yields elegant formalisms, it is unsuitable for developing practical verification
tools, which ultimately rely on some syntax for assertions.
In particular, we cannot---in general---rely on the property, implicitly assumed in the extensional approach, that
there is no distinction between assertions representing the same mathematical entity: a tool may not realize that $4 \cdot 0.5$
and $\sum_{i=0}^{\infty} \nicefrac{1}{2^i}$ represent the same mathematical entity (the number $2$).

An example of intensional probabilistic program verification is the 
verifier of \citet{DBLP:conf/pldi/NgoC018} which specifies a simple syntax which is extensible by user-specified base and rewrite functions.

\paragraph{Main contribution}
Given a calculus for program verification and an assertion language, two fundamental questions immediately arise:%
\begin{enumerate}
  \item \emph{Soundness:} \emph{Are only true assertions derivable} in the calculus?
  \item \emph{Completeness}: \emph{Can every true assertion be derived} \underline{and} \emph{is it expressible} in the assertion language?
\end{enumerate}%
While soundness is typically a \emph{must} for any verification system,
completeness is---as noted by \citet{DBLP:journals/fac/AptO19} in their recent survey of 50 years of Hoare logic---a 
\enquote{subtle matter and requires careful analysis}.

In fact, to the best of our knowledge, existing probabilistic program verification techniques 
(including all of the above references amongst many other works) 
either take the extensional approach
or do not aim for completeness.
In this paper, we take the intensional path and make the following contribution to 
formal reasoning about probabilistic programs: 
\begin{center}%
\vspace{.5em}%
\fbox{\parbox{0.98\textwidth}{%
	\begin{center}
	We provide a simple formal \emph{language of functions} for probabilistic program verification such that: 
	
	\vspace{.5em}%
	If $\FF$ is syntactically expressible, then $\wp{\cc}{f}$ is syntactically expressible.
	\end{center}
}}%
\vspace{.5em}%
\end{center}
%
A language from which we can draw functions $\FF$ with the above property is called \emph{expressive}.
Having an expressive language renders the $\wpsymbol$ calculus \emph{relatively} complete~\cite{Cook1978SoundnessAC}:
Given functions $\FF$ and $\FG$ in our language and a probabilistic program $\cc$, suppose we want to verify $\FG \preceq \wp{\cc}{\FF}$, where $\preceq$ denotes the point-wise order of functions	mapping states to numbers.
Due to expressiveness,
we can effectively construct in our language a function $\FH$ representing $\wp{\cc}{\FF}$.
Hence, verification is complete \emph{modulo} checking whether the inequality $\FG \preceq \FH$ between two functions in our language holds. 
Indeed, Hoare logic is also only complete \emph{modulo} deciding an implication between two formulae in the language of \mbox{first-order arithmetic~\cite{DBLP:journals/fac/AptO19}}.

\paragraph{Challenges and usefulness}
Notice that providing \emph{some} expressive language is rather easy:
A~single{\-}ton language that can only represent the null-function is trivially expressive since, for any program $\cc$, the expected value of $0$ is $0$. That is, $\wp{\cc}{0} = 0$.
The challenge in a quest for an expressive language for probabilistic program verification is hence to find a language that (i) is closed under taking weakest preexpectations and (ii) can express \emph{interesting (quantitative) properties}.

Indeed, our language can: 
For instance, it is capable of expressing \emph{termination probabilities} (via~$\wp{\cc}{1}$---the expected value of the constant function 1).
These can be \emph{irrational numbers} like the reciprocal of the golden ratio~$\sfrac{1}{\varphi}$~\cite{OlmedoKKM16}.
In general, termination probabilities carry a high internal degree of complexity~\cite{acta19}.
Our language can also express probabilities over program variables on termination of a program and that can be expressed in terms of $\pi$, $\sqrt 3$ and so forth. These can e.g., be generated by Buffon machines, \ie probabilistic programs that only use Bernoulli experiments~\cite{DBLP:conf/soda/FlajoletPS11}.

Termination probabilities already hint at one of the technical challenges we face:
Even starting from a constant function like $1$, our language has to be able to express mappings from states to highly complex real numbers.
Another challenge we face is that when constructing $\wp{\cc}{\FF}$, due to probabilistic branching in combination with loops, considering single execution traces is not enough:
We have to collect all terminating traces and average over the values of $\FF$ in terminal states.
We attack these challenges via Gödel numbers for rational sequences and encodings of Dedekind cuts.

Aside from termination probabilities, our language is capable of expressing \emph{a wide range of practically relevant functions}, like \emph{polynomials} or \emph{Harmonic numbers}.
Polynomials are a common subclass of ranking functions\footnote{In probabilistic program analysis terminology: ranking supermartingales.} for automated probabilistic termination analysis; harmonic numbers are ubiquitous in expected runtime analysis. 
We present more scenarios covered by our syntax and avenues for future work in Sections~\ref{sec:applications} and~\ref{sec:conclusion}.

Overall, we believe that an expressive \emph{syntax} for probabilistic program verification is what really expedites a search for tractable fragments of both programs and \enquote{assertion} language in the first place.
Studying such fragments may also yield additional insights:
For example, \citet{DBLP:journals/tocl/Kozen00} and \citet{DBLP:journals/isci/KozenT01} studied the propositional fragment of Hoare logic and showed that it is subsumed 
by an extension of KAT---Kleene algebra with tests.

\paragraph{Further related work}
Relative completeness of Hoare logic was shown by \citet{Cook1978SoundnessAC}.
\citet{winskel} and \citet{loeckx1984foundations} proved expressiveness of first-order arithmetic for Dijkstra's weakest precondition calculus.
For \emph{separation logic}~\cite{DBLP:conf/lics/Reynolds02}---a very successful logic for compositional reasoning about \emph{pointer programs}---expressiveness was shown by~\citet{expressiveness_sl_conference,expressiveness_sl}, almost a decade later than the logic was originally developed and started to be used.

Perhaps most directly related to this paper is the work by \citet{DBLP:journals/ijfcs/HartogV02} on a Hoare-like logic for verifying probabilistic programs.
They prove relative completeness (also in the sense of \citet{Cook1978SoundnessAC}) of their logic for \emph{loop-free} probabilistic programs and \emph{restricted postconditions}; they leave expressiveness for loops as an open problem: \enquote{It is not clear whether the probabilistic predicates are sufficiently expressive [\dots] for a given while loop.}

%
%
%

\subsubsection*{Organization of the paper}

We give an introduction to \emph{syntax, extensional semantics, and verification systems} for probabilistic programs, in particular the weakest preexpectation calculus, in~\Cref{sec:extensional}.
We formulate the \emph{expressiveness problem} in \Cref{sec:towards-expressiveness}.
We \emph{define the syntax and semantics} of our \emph{expressive language of expectations} in \Cref{sec:syntax}.
We \emph{prove expressiveness of our language for loop-free probabilistic programs} in \Cref{sec:expressiveness:loop-free}.
We then move to proving expressiveness of our language for loops.
We \emph{outline the expressiveness proof for loops} in \Cref{sec:proof-outline} and \emph{do the full technical proof} throughout Sections~\ref{sec:embedding}~--~\ref{sec:expressiveness}.
In \Cref{sec:negatives} and \Cref{sec:applications}, we discuss extensions and a few scenarios in which our language could be useful; we conclude in \Cref{sec:conclusion}.


\section{Probabilistic Programs --- The Extensional Perspective}
\label{sec:extensional}
We briefly recap classical reasoning about probabilistic programs \'{a} la \citet{Kozen1985}, which is agnostic of any 
particular syntax for expressions or formulae---it takes an \emph{extensional} approach.

\subsection{The Probabilistic Guarded Command Language}
\label{sec:pgcl}
We consider the imperative probabilistic programming language $\pgcl$ featuring discrete probabilistic choices---branching on outcomes of coin flips---as well as standard control-flow instructions.

\subsubsection{Syntax}
Formally, a program $\cc$ in $\pgcl$ adheres to the grammar
\begin{align*}
\cc \qqlongrightarrow 	
&  \SKIP 	\tag{effectless program} \\
& \qmid \ASSIGN{\XX}{\TT}	\tag{assignment} \\
& \qmid \COMPOSE{\cc}{\cc}	\tag{sequential composition} \\
& \qmid \PCHOICE{\cc}{p}{\cc}	\tag{probabilistic choice} \\
& \qmid  \ITE{\BB}{\cc}{\cc}	\tag{conditional choice} \\
& \qmid  \WHILEDO{\BB}{\cc}~,	\tag{while loop} 
\end{align*}
where $x$ is taken from a countably infinite set of \emph{variables} $\Vars$,
$\TT$ is an \emph{arithmetic expression} over variables,
$p \in [0,1] \cap \Rats$ is a rational probability, and
$\BB$ is a Boolean expression (also called \emph{guard}) over variables.
For an overview of metavariables $\cc$, $\XX$, $\TT$, $\BB$, \dots, used throughout this paper, see \Cref{tab:metavariables} at the end of this section.

For the moment, we assume that both arithmetic and Boolean expressions are standard expressions without bothering to provide them with a concrete syntax. 
However, we will require them to adhere to a concrete syntax which we provide in Sections~\ref{sec:syntax:terms}~and~\ref{sec:syntax:bool}.
%

\subsubsection{Program States}
\label{sec:program-states}
A program state $\pstate$ maps each variable in $\Vars$ to its value---a positive rational number in $\PosRats$.\footnote{To keep the presentation simple, we consider only \emph{unsigned} variables; we discuss this design choice and an extension to signed variables, which can also evaluate to negative rationals, in Section~\ref{sec:negatives}.}
To ensure that the set of program states is countable,\footnote{Working with probabilistic programs over a countable set of states avoids technical issues related to measurability.} 
we restrict ourselves to states in which at most finitely many variables---intuitively those that appear in a given program---are assigned non-zero values; every state can thus be understood as a finite mapping that only keeps track of assignments to non-zero values.
Formally, the set $\States$ of program states is
\begin{align*}
  \States \eeq \setcomp{ 
    \pstate\colon \Vars \to \PosRats ~
    }{
    \vphantom{\big(} \setcomp{ x \in \Vars }{ \pstate(x) \neq 0} \textnormal{ is finite}
  }~.
\end{align*}
We use metavariables $\pstatea$, $\pstateb$, \dots, for program states, see also \Cref{tab:metavariables}.
We denote by $\sem{\ee}{\pstate}{}$
the evaluation of (arithmetic or Boolean) expression $\ee$ in $\pstate$, i.e.,
the value obtained from evaluating $\ee$ after replacing every variable $x$ in $\ee$ 
by $\pstate(x)$.
We define the semantics of expressions more formally in \Cref{sec:semantics}.

\subsubsection{Forward Semantics}
\label{sec:forward-semantics}
One of the earliest ways to give semantics to a probabilistic program~$\cc$ is by means of \emph{forward-moving measure transformers} \cite{Kozen1979,Kozen1981}.
These transform an initial state~$\pstate$ into a probability distribution $\mu_\cc^{\pstate}$ over final states (\ie a measure on $\States$).
We consider Kozen's semantics the \emph{reference} forward semantics.
More operational semantics are provided in the form of probabilistic transition systems~\cite{GretzKM14,benni_diss}, where programs describe potentially infinite Markov chains whose state spaces comprise of program states, or trace semantics~\cite{DBLP:conf/esop/CousotM12,DBLP:conf/birthday/PierroW16,acta19}, where the traces are sequences of program states and each trace is assigned a certain probability.

In any of these semantics, the probabilistic choice $\PCHOICE{\cc_1}{p}{\cc_2}$ flips a coin with bias $p$ towards heads. 
If the coin yields heads, $\cc_1$ is executed (with probability $p$); otherwise, $\cc_2$.
Moreover, $\SKIP$ does nothing.
$\ASSIGN{x}{\TT}$ assigns the value of expression $\TT$ (evaluated in the current program state) to $x$.
The sequential composition $\COMPOSE{\cc_1}{\cc_2}$ first executes $\cc_1$ and then $\cc_2$.
The conditional choice $\ITE{\BB}{\cc_1}{\cc_2}$ executes $\cc_1$ if the guard $\BB$ is satisfied; otherwise, it executes $\cc_2$.
Finally, the loop $\WHILEDO{\BB}{\cc}$ keeps executing the loop body $\cc$ as long as $\BB$ evaluates to true.

\subsection{Weakest Preexpectations}

%
Dually to the forward semantics, probabilistic programs can also be provided with semantics in the form of \emph{backward-moving random variable transformers}, originally due to~\citet{Kozen1983,Kozen1985}.
This paper is set within this dual view, which is a standard setting for probabilistic program verification.

\subsubsection{Expectations}
\label{sec:sem-exp}
\emph{Floyd-Hoare logic}~\cite{Hoare1969,floyd1967assigning} as well as the \emph{weakest precondition calculus} of \citet{Dijkstra1976} employ first-order predicates for reasoning about program correctness.
For probabilistic programs, \citet{Kozen1983,Kozen1985} was the first to generalize from predicates to measurable functions (or random variables). 
Later, \citet{McIverM05} coined the term \emph{expectation}---not to be confused with expected value---for such functions.
In reference to Dijkstra's weakest precondition calculus, their verification system is called the \emph{weakest preexpectation calculus}.

Formally, the set $\E$ of \emph{semantic expectations} is defined as
\begin{align*}
  \E \eeq \setcomp{ \ff }{ \ff\colon \States \to \PosRealsInf }~,
\end{align*}
\ie functions $\ff$ that associate a non-negative \emph{quantity} (or infinity) to each program state.
We use metavariables $\ff$, $\fg$, $\fh$ for semantic expectations.

Expectations form the \emph{assertion \enquote{language}} of the weakest preexpectation calculus.
However, we note that---so far---expectations are \emph{in no way defined syntactically}:
They are just the whole set of functions from $\States$ to $\PosRealsInf$.
It is hence borderline to speak of a \emph{language}.
The goal of this paper is to provide a \emph{syntactically defined subclass of\:~$\E$}---\ie an \emph{actual language}---such that formal reasoning about probabilistic programs can take place completely within this class.

We furthermore note that we work with more general expectations than \citet{McIverM05}, who only allow \emph{bounded} expectations, \ie expectations $\ff$ for which there is a bound $\alpha \in \PosReals$ such \mbox{that $\forall \pstate\colon \ff(\pstate) \leq \alpha$}.
In contrast to McIver and Morgan, our structure $(\E,\, \preceq)$ of \emph{unbounded} expectations forms a \emph{complete lattice} with least element $0$ and greatest element $\infty$, where $\preceq$ lifts the standard ordering  $\leq$
on the (extended) reals to expectations by pointwise application. That is, 
\begin{align*}
	\ff \ppreceq \fg \qquad \text{iff} \qquad \forall \pstate  \in \States \colon\quad \ff(\pstate) \lleq \fg(\pstate)~.	
\end{align*}%
Examples of (bounded) expectations include, for instance, \citet{Iverson1962} brackets $\iverson{\BB}$, which associate to a Boolean expression $\BB$ its indicator function:\footnote{We use $\lambda$-expressions to denote functions; function $\lambda x\mydot f$ applied to $a$ evaluates to $f$ in which $x$ is replaced by $a$.}%
\begin{align*}
	\label{eqn:sem_iverson}
	\iverson{\BB} \eeq \lambda \pstate \mydot 
	\begin{cases}
		1, &\text{if}~\sem{\BB}{\pstate}{} = \true \\
		0, &\text{if}~ \sem{\BB}{\pstate}{}= \false~.
	\end{cases}
\end{align*}%
Iverson brackets embed Boolean predicates into the set of expectations, rendering McIver and Morgan's calculus a conservative extension of Dijkstra's calculus.

Examples of \emph{unbounded} expectations are arithmetic expressions over variables, like%
\begin{align*}
	x + y \eeq \lambda \pstate\mydot \pstate(x) + \pstate(y)~,
\end{align*}%
where we point-wise lifted common operators on the reals, such as $+$, to operators on expectations.
Strictly speaking, \emph{McIver and Morgan's calculus cannot handle expectations like $x + y$} off-the-shelf.

We denote by $\ff\subst{x}{\TT}$ the \enquote{substitution} of variable $x$ by expression $a$ in expectation $\ff$, i.e.,
\begin{align*}
	\ff\subst{x}{\TT} \eeq \lambda \pstate \mydot  \ff\Bigl(\pstate \statesubst{x}{\sem{\TT}{\pstate}{}} \Bigr)~, 
	\qquad\text{where}\qquad 
	\pstate \statesubst{x}{r} \eeq \lambda y \mydot \begin{cases}
			r, & \textnormal{if } y = x, \\
			\pstate(y), & \textnormal{else}.
		\end{cases}
\end{align*}

\subsubsection{Backward Semantics: The Weakest Preexpectation Calculus}

\begin{figure}[t]
	\begin{center}
		\begin{adjustbox}{max width=.8\textwidth}
				\begin{tikzpicture}[node distance=4mm, decoration={snake,pre=lineto,pre length=.5mm,post=lineto,post length=1mm, amplitude=.2mm}]
					\draw[lightgray,use as bounding box,draw=none] (-3.75,-.875) grid (3.25, 3.25);
					\node (sigma) at (0, 3) {\Large$\boldsymbol{~\pstate}$};
					
					\node[gray,inner sep=0pt, outer sep=0pt] (branch) at (-0.225, 1.4) {$\bullet$};
					
					\node (tau1) at (-2, 0) {$\bullet$};
					\node (tau2) at (-.5, .25) {$\bullet$};
					\node (tau3) at (1, .125) {$\bullet$};
					\node[inner sep=0pt] (taudots) at (2.5, 0) {$\ddots$};
					
					\node[below of=tau1] {$\ff(\tau_1)$};
					\node[below of=tau2] {$\ff(\tau_2)$};
					\node[below of=tau3] {$\ff(\tau_3)$};

					\node (Exp) at (-3.1, -0.25) {\Large $\textbf{\textsf{Exp}}\boldsymbol{\Bigl[}$};
					\node at (3.1, -0.25) {\Large $\boldsymbol{\Bigr]}$};

					\node[gray] (C) at (0.5, 1.5) {$\cc$};
					
					\draw[gray,decorate,->,thick] (sigma) -- (tau1);
					
					\draw[gray,decorate,thick] (sigma) -- (branch);
					
					\draw[gray,decorate,->,thick] (branch) -- (tau2);
					\draw[gray,decorate,->,thick] (branch) -- (tau3);
					\draw[gray,decorate,thick] (sigma) -- (taudots);
					
					\draw[decorate,lightgray] (tau1) -- (tau2) -- (tau3) -- (taudots);
					
					\draw[gray] (sigma) edge[|->,bend right=45,above left,thick] node {\Large$\wp{\cc}{\ff}$} (Exp);
				\end{tikzpicture}
				\qquad\quad
				\begin{tikzpicture}[node distance=4mm, decoration={snake,pre=lineto,pre length=.5mm,post=lineto,post length=1mm, amplitude=.2mm}]
					\draw[lightgray,use as bounding box,draw=none] (-3.75,-.875) grid (3.25, 3.25);
					\node (sigma) at (0, 3) {\Large$\boldsymbol{~\pstate'}$};
					
					\node[gray,inner sep=0pt, outer sep=0pt] (branch) at (.2, 1.25) {$\bullet$};
					
					\node (tau1) at (-2, 0) {$\bullet$};
					\node (tau2) at (-.75, .25) {$\bullet$};
					\node (tau3) at (1, -0.25) {$\bullet$};
					\node[inner sep=0pt] (taudots) at (2.5, 0) {$\ddots$};
					
					\node[below of=tau1] {$\ff(\tau_1')$};
					\node[below of=tau2] {$\ff(\tau_2')$};
					\node[below of=tau3] {$\ff(\tau_3')$};

					\node at (-3.1, -0.25) {\Large $\textbf{\textsf{Exp}}\boldsymbol{\Bigl[}$};
					\node at (3.1, -0.25) {\Large $\boldsymbol{\Bigr]}$};

					\node[gray] (C) at (1, 1) {$\cc$};
					
					\draw[gray,decorate,->,thick] (sigma) -- (tau1);
					
					\draw[gray,decorate,thick] (sigma) -- (branch);
					
					\draw[gray,decorate,->,thick] (branch) -- (tau2);
					\draw[gray,decorate,->,thick] (branch) -- (tau3);
					\draw[gray,decorate,thick] (sigma) -- (taudots);
					
					\draw[decorate,lightgray] (tau1) -- (tau2) -- (tau3) -- (taudots);
					
					\draw[gray] (sigma) edge[|->,bend right=45,above left,thick] node {\Large$\wp{\cc}{\ff}$} (Exp);
				\end{tikzpicture}%
		\end{adjustbox}%
	\end{center}%
	\caption{The weakest preexpectation $\wp{\cc}{\ff}$ maps every initial state $\pstate$ to the expected value of $\ff$, measured with respect to the final distribution over states reached after termination of program $\cc$ on input $\pstate$. 
	$\wpC{\cc}$ is backward-moving in the sense that it transforms an $\ff\colon \States \To \PosRealsInf$, evaluated in final states after termination of $\cc$, into $\wp{\cc}{\ff}\colon \States \To \PosRealsInf$, evaluated in initial states before execution of $\cc$.}%
	\label{fig:prog-execution}%
\end{figure}%
Suppose we are interested in the expected value of the quantity (expectation) $\ff$ after termination of $\cc$.
In analogy to Dijkstra, $\ff$ is called the \emph{postexpectation} and the sought-after expected value is called the \emph{weakest preexpectation} of $\cc$ with respect to \emph{postexpectation} 
$\ff$, denoted $\wp{\cc}{\ff}$~\cite{McIverM05}.
As the expected value of $\ff$ generally depends on the initial state $\pstate$ on which $\cc$ is executed, the \emph{weakest preexpectation} $\wp{\cc}{\ff}$ is itself also a map of type $\E$, mapping an initial program state $\pstate$ to the \emph{expected value} of $\ff$ (measured in the final states)
after successful termination of $\cc$ on $\pstate$, see \Cref{fig:prog-execution}. %
The weakest preexpectation calculus is a backward semantics in the sense that it transforms a postexpectation $\ff \in \E$, evaluated in final states after termination of $\cc$, into a preexpectation $\wp{\cc}{\ff} \in \E$, evaluated in initial states before execution of $\cc$.

Between forward-moving measure transformers and backward-moving expectation transformers, there exists the following duality established by Kozen:%
\begin{theorem}[Kozen Duality \textnormal{[\citeyear{Kozen1983,Kozen1985}]}]%
\label{thm:kozen-duality}%
	If $\mu_{\cc}^{\pstate}$ is the distribution over final states obtained by running $\cc$ on initial state $\pstate$, then for any postexpectation $\ff$,%
\begin{align*}
	\sum_{\tau \in \States} \mu_{\cc}^{\pstate}(\tau) \cdot \ff(\tau) \eeq \wp{\cc}{\ff}(\pstate)~.
\end{align*}
\end{theorem}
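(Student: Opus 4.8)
The plan is to prove the identity by structural induction on the program $\cc$, exploiting that both the forward measure transformer $\pstate \mapsto \mu_\cc^\pstate$ (Kozen's compositional measure-transformer semantics) and the backward transformer $\wpC{\cc}$ (given by its standard compositional defining equations) are defined by recursion over the grammar of $\pgcl$. Abbreviating the left-hand side by $\E_\ff(\cc,\pstate) = \sum_{\pstateb \in \States} \mu_\cc^\pstate(\pstateb)\cdot \ff(\pstateb)$, the goal is to establish $\E_\ff(\cc,\pstate) = \wp{\cc}{\ff}(\pstate)$ for all postexpectations $\ff$ and all states $\pstate$; the theorem then asserts that these two independently defined objects agree.

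For the atomic statements the claim is immediate. For $\SKIP$ the forward semantics yields the Dirac measure at $\pstate$, so both sides equal $\ff(\pstate) = \wp{\SKIP}{\ff}(\pstate)$; for $\ASSIGN{\XX}{\TT}$ it yields the Dirac measure at $\pstate\statesubst{\XX}{\sem{\TT}{\pstate}{}}$, so both sides equal $\ff\bigl(\pstate\statesubst{\XX}{\sem{\TT}{\pstate}{}}\bigr) = \ff\subst{\XX}{\TT}(\pstate) = \wp{\ASSIGN{\XX}{\TT}}{\ff}(\pstate)$. The probabilistic choice $\PCHOICE{\cc_1}{p}{\cc_2}$ and the conditional $\ITE{\BB}{\cc_1}{\cc_2}$ follow from the two induction hypotheses together with, respectively, linearity of the sum in the convex combination $p\cdot\mu_{\cc_1}^\pstate + (1{-}p)\cdot\mu_{\cc_2}^\pstate$, and a case split on $\sem{\BB}{\pstate}{}$. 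For sequential composition I would unfold the Kleisli-style composition $\mu_{\cc_1 ; \cc_2}^{\pstate}(\pstateb) = \sum_{\rho} \mu_{\cc_1}^{\pstate}(\rho)\cdot\mu_{\cc_2}^{\rho}(\pstateb)$, substitute it into $\E_\ff$, and reorder the resulting double sum (legitimate because every summand is non-negative, i.e.\ by Tonelli for series); applying the induction hypothesis first to $\cc_2$ with postexpectation $\ff$ and then to $\cc_1$ with postexpectation $\wp{\cc_2}{\ff}$ yields $\wp{\cc_1}{\wp{\cc_2}{\ff}}(\pstate) = \wp{\cc_1 ; \cc_2}{\ff}(\pstate)$.

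The loop $\WHILEDO{\BB}{\cc}$ is the crux and the main obstacle, and here I would use fixed-point characterizations on both sides. On the backward side $\wp{\WHILEDO{\BB}{\cc}}{\ff} = \lfp \charwpsym_\ff$, where the characteristic functional is $\charwpsym_\ff(\fg) = \iverson{\neg\BB}\cdot\ff + \iverson{\BB}\cdot\wp{\cc}{\fg}$; since $(\E,\preceq)$ is a complete lattice with least element $0$ and $\charwpsym_\ff$ is continuous, Kleene's theorem gives $\lfp\charwpsym_\ff = \sup_n \charwpsym_\ff^{\,n}(0)$. On the forward side, the loop measure is the supremum of the sub-probability measures $\mu_n^\pstate$ recording runs that leave the loop within $n$ iterations, obtained by the same unrolling recursion. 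The heart of the argument is an inner induction on $n$ showing $\charwpsym_\ff^{\,n}(0)(\pstate) = \sum_{\pstateb} \mu_n^\pstate(\pstateb)\cdot\ff(\pstateb)$, whose inductive step invokes the \emph{outer} induction hypothesis for the loop body $\cc$ applied to the expectation $\charwpsym_\ff^{\,n-1}(0)$.

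Finally I would pass to the limit in $n$. The subtle point—what makes the loop case genuinely delicate—is justifying the interchange of $\sup_n$ with the infinite sum over $\States$: because both $\charwpsym_\ff^{\,n}(0)$ and the sub-measures $\mu_n^\pstate$ increase monotonically in $n$, the monotone convergence theorem applies and licenses
\[
\wp{\WHILEDO{\BB}{\cc}}{\ff}(\pstate) = \sup_n \sum_{\pstateb} \mu_n^\pstate(\pstateb)\cdot\ff(\pstateb) = \sum_{\pstateb} \Bigl(\sup_n \mu_n^\pstate(\pstateb)\Bigr)\cdot\ff(\pstateb) = \sum_{\pstateb} \mu_{\WHILEDO{\BB}{\cc}}^\pstate(\pstateb)\cdot\ff(\pstateb),
\]
which is exactly the claimed identity. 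Working with unbounded, $\PosRealsInf$-valued expectations is essential here: it is precisely completeness of the lattice together with monotone convergence that makes both suprema well-defined and lets them commute.
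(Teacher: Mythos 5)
Your induction is sound, but there is nothing in the paper to compare it against: \Cref{thm:kozen-duality} is stated as Kozen's classical result and never proved, and indeed the forward semantics $\mu_\cc^\pstate$ is never formally defined in the paper---it is only referenced. The closest thing the paper has to a proof of this fact is \Cref{thm:wp_prob_times_exp} in Appendix~\ref{app:expressiveness}, an \emph{internalized} surrogate of the duality, namely $\wp{\cc}{\ff} = \lambda \sigma_0 \mydot \sum_{\sigma \in \partitionedstates{x}} \wp{\cc}{\statepred{\sigma}{\varseq{x}}}(\sigma_0) \cdot \ff(\sigma)$, i.e.\ the same identity with the measure $\mu_\cc^{\sigma_0}(\sigma)$ replaced by the purely backward quantity $\wp{\cc}{\statepred{\sigma}{\varseq{x}}}(\sigma_0)$, so that no forward semantics is ever needed. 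The proof of that theorem follows your strategy almost line by line: structural induction with the hypothesis quantified over all postexpectations, reordering of non-negative double sums for sequential composition, and, for loops, Kleene iteration (\Cref{lem:kleene_for_wp}), an inner induction on the unrolling depth $k$ whose step invokes the outer hypothesis on the loop body, and finally an interchange of suprema with infinite sums---your monotone-convergence step, carried out there by swapping suprema and using that an infinite series is the supremum of its partial sums. The one caveat against your proposal as a self-contained proof: it presupposes compositional properties of the forward semantics, namely the Kleisli equation $\mu_{\COMPOSE{\cc_1}{\cc_2}}^{\pstate}(\pstateb) = \sum_{\rho} \mu_{\cc_1}^{\pstate}(\rho)\cdot\mu_{\cc_2}^{\rho}(\pstateb)$ and the identification of the loop measure with $\sup_n \mu_n^\pstate$. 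These are either definitions or nontrivial facts about Kozen's measure-transformer semantics; until they are fixed, your induction establishes the duality only \emph{relative} to them. This is unavoidable for any honest proof of the statement, and it is precisely what the paper sidesteps by proving the wp-internal formulation instead.
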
%
\noindent{}%
In particular, if $\ff = \iverson{\BB}$, then $\wp{\cc}{\ff}(\pstate)$ is the \emph{probability} that
running $\cc$ on $\pstate$ terminates in a final state satisfying $\BB$---thus generalizing Dijkstra's weakest preconditions.

As with standard weakest preconditions, weakest preexpectations are not determined monolithically for the whole program $C$ as characterized above.
Rather, they are determined \emph{compositionally} using a backward-moving \emph{expectation transformer}
\begin{align*}
\wpsymbol \colon \pgcl \to (\E \to \E)
\end{align*}
which is defined recursively on the structure of $\cc$ according to the rules in Figure~\ref{table:wp}.%
\begin{figure}[t]

	\renewcommand{\arraystretch}{1.5}

\begin{tabular}{@{\hspace{1em}}l@{\hspace{2em}}l}
	\hline\hline
	$\boldsymbol{\cc}$			& $\boldsymbol{\textbf{\textsf{wp}}\,\left \llbracket \cc\right\rrbracket  \left(\ff \right)}$ \\
	\hline
	$\SKIP$					& $\ff$ 																					\\
	$\ASSIGN{x}{\TT}$			& $\ff\subst{x}{\TT}$ \\
	$\COMPOSE{\cc_1}{\cc_2}$		& $\wp{\cc_1}{\vphantom{\big(}\wp{\cc_2}{\ff}}$ \\
	$\PCHOICE{\cc_1}{\pp}{\cc_2}$		& $\pp \cdot \wp{\cc_1}{\ff} + (1- \pp) \cdot \wp{\cc_2}{\ff}$ \\
	$\ITE{\guard}{\cc_1}{\cc_2}$		& $\iverson{\guard} \cdot \wp{\cc_1}{\ff} + \iverson{\neg \guard} \cdot \wp{\cc_2}{\ff}$ \\
	$\WHILEDO{\guard}{\cc'}$		& $\lfp \fg\mydot \iverson{\neg \guard} \cdot \ff + \iverson{\guard} \cdot \wp{\cc'}{\fg}$ \\
	\hline\hline
\end{tabular}
\caption{Rules defining the weakest preexpectation of program $\cc$ with respect to postexpectation $\ff$.}
\label{table:wp}
\end{figure}
Most of these rules are standard: 
$\wpC{\SKIP}$ is the identity as $\SKIP$ does not modify the program state.
For the assignment $\ASSIGN{x}{\TT}$, \mbox{$\wp{\ASSIGN{x}{\TT}}{\ff}$} substitutes in $\ff$ the assignment's left-hand side $x$ by its
right-hand side $\TT$.
For sequential composition, $\wp{\COMPOSE{\cc_1}{\cc_2}}{\ff}$ first determines the weakest preexpectation $\wp{\cc_2}{\ff}$
which is then fed into $\wpC{\cc_1}$ as a postexpectation.
For both the probabilistic choice $\PCHOICE{\cc_1}{p}{\cc_2}$ and the conditional choice $\ITE{\BB}{\cc_1}{\cc_2}$, the
weakest preexpectation with respect to $\ff$ yields a convex sum $p \cdot \wp{\cc_1}{\ff} + (1-p) \cdot \wp{\cc_2}{\ff}$.
In the former case, the weights are given by the probability~$p$.
In the latter case, they are determined by the guard $\BB$, \ie we have  $p = \iverson{\BB}$ and $1 - \iverson{\BB} = \iverson{\neg \BB}$.

The weakest preexpectation of a loop is given by the least fixed point of its unrollings, \ie 
\begin{align*}
	\wp{\WHILEDO{\BB}{\cc'}}{\ff} \eeq \lfp \fg \mydot \charwp{\BB}{\cc'}{\ff}(\fg)~,
\end{align*}%
where the \emph{characteristic function} 
$\charwp{\BB}{\cc'}{\ff}$ of $\WHILEDO{\BB}{\cc'}$ with respect to $\ff \in \E$ is defined as
\begin{align*}
\charwp{\BB}{\cc'}{\ff}\colon \quad \E \to \E, \quad \fg ~{}\mapsto{}~ \iverson{\neg \guard} \cdot \ff + \iverson{\BB} \cdot \wp{\cc'}{\fg}~.
\end{align*}
Since $(\E, \leq)$ is a complete lattice and $\charwp{\BB}{\cc'}{\ff}$ is monotone, fixed points 
exist due to the Knaster-Tarski fixed point theorem; we take the least fixed point because we reason about total correctness.

Throughout this paper, we exploit that $\charwp{\BB}{\cc'}{\ff}$ is, in fact, 
Scott-continuous (cf.~\cite{OlmedoKKM16}).
Kleene's theorem then allows us to approximate the least fixed point iteratively:
\begin{lemma}[\textnormal{\citet{kleene1952introduction}}]
	\label{lem:kleene_for_wp}
	We have 
	%
	\begin{align*}
	\wp{\WHILEDO{\BB}{\cc'}}{ \ff}
	\eeq
	\lfp \fg \mydot \charwp{\BB}{\cc'}{\ff}(\fg)
	\eeq
	\sup_{n \in \Nats} \charwpn{\BB}{\cc'}{\ff}{n}(0)~,
	\end{align*}
    where $0 = \lambda \pstate\mydot 0$ is the constant-zero expectation
    and $\charwpn{\BB}{\cc'}{\ff}{n}(\fg)$ denotes the $n$-fold application of 
    $\charwp{\BB}{\cc'}{\ff}$ to $\fg$.
\end{lemma}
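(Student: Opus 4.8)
The plan is to recognize the claim as a direct instance of the Kleene fixed-point theorem applied to the characteristic function $\charwp{\BB}{\cc'}{\ff}$. The first equation, $\wp{\WHILEDO{\BB}{\cc'}}{\ff} = \lfp \fg \mydot \charwp{\BB}{\cc'}{\ff}(\fg)$, holds by definition of $\wpC{\WHILEDO{\BB}{\cc'}}$ as given in \Cref{table:wp} and the text following it, so it requires no work. All the content lies in the second equation, for which I would invoke the two facts already established in the surrounding text: that $(\E, \preceq)$ is a complete lattice with least element $0 = \lambda\pstate\mydot 0$, and that $\charwp{\BB}{\cc'}{\ff}$ is Scott-continuous. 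Existence of the least fixed point is already guaranteed by Knaster--Tarski; Kleene's theorem refines this by identifying it with the supremum of the iteration chain.

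Concretely, I would first form the \emph{Kleene chain} $0 \preceq \charwp{\BB}{\cc'}{\ff}(0) \preceq \charwpn{\BB}{\cc'}{\ff}{2}(0) \preceq \cdots$ and argue it is ascending: the base case $0 \preceq \charwp{\BB}{\cc'}{\ff}(0)$ holds because $0$ is the least element, and the inductive step follows from monotonicity of $\charwp{\BB}{\cc'}{\ff}$, which is implied by Scott-continuity. Since $\E$ is a complete lattice, the supremum $\sup_{n \in \Nats} \charwpn{\BB}{\cc'}{\ff}{n}(0)$ exists. Next I would show this supremum is a fixed point: because $\{\charwpn{\BB}{\cc'}{\ff}{n}(0)\}_{n \in \Nats}$ is a chain (hence directed) and $\charwp{\BB}{\cc'}{\ff}$ is Scott-continuous, it commutes with the supremum,
\[
\charwp{\BB}{\cc'}{\ff}\Bigl( \sup_{n \in \Nats} \charwpn{\BB}{\cc'}{\ff}{n}(0) \Bigr) \eeq \sup_{n \in \Nats} \charwpn{\BB}{\cc'}{\ff}{n+1}(0) \eeq \sup_{n \in \Nats} \charwpn{\BB}{\cc'}{\ff}{n}(0)~,
\]
where the last step uses that prepending the least element $\charwpn{\BB}{\cc'}{\ff}{0}(0) = 0$ to the chain does not change its supremum.

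Finally, to see the supremum is the \emph{least} fixed point, I would take any fixed point $\fg$ with $\charwp{\BB}{\cc'}{\ff}(\fg) = \fg$, note $0 \preceq \fg$, and conclude by monotonicity and induction that $\charwpn{\BB}{\cc'}{\ff}{n}(0) \preceq \charwpn{\BB}{\cc'}{\ff}{n}(\fg) = \fg$ for every $n$, whence $\sup_{n \in \Nats} \charwpn{\BB}{\cc'}{\ff}{n}(0) \preceq \fg$. Together these two steps identify the supremum with $\lfp \fg \mydot \charwp{\BB}{\cc'}{\ff}(\fg)$, which is the desired second equality.

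The only genuinely non-routine ingredient is the Scott-continuity of $\charwp{\BB}{\cc'}{\ff}$; everything else is standard order theory. As the surrounding text indicates, continuity of the characteristic function reduces to Scott-continuity of $\wpC{\cc'}$ (which in turn propagates through the structural rules of \Cref{table:wp}), and this is where the real content sits---we take it from \cite{OlmedoKKM16}. Granting continuity, the argument above is a verbatim instance of Kleene's theorem, so the main obstacle is conceptual bookkeeping rather than any hard estimate.
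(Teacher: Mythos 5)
Your proof is correct, and it is precisely the standard Kleene fixed-point argument that the paper invokes by citation rather than proving: the paper gives no proof of this lemma, relying on \citet{kleene1952introduction} together with Scott-continuity of the characteristic function taken from \cite{OlmedoKKM16}, which are exactly the two ingredients you use. Your decomposition---first equality by definition of $\wpsymbol$ for loops, second by the ascending-chain, continuity, and least-fixed-point steps---matches the intended reading of the citation.
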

%
%
\begin{table}[t]
	\caption{Metavariables used throughout this paper.}%
	\label{tab:metavariables}%
	\renewcommand{\arraystretch}{1.25}%
	\begin{tabular}{l@{\qquad}l@{\qquad}l@{\qquad}l}
		\hline\hline
		\textbf{Entities}			& \textbf{Metavariables}  				& \textbf{Domain}		& \textbf{Defined in}			\\
		\hline
		Natural numbers 		& $n,\, i,\, j,\, k$ 				& $\Nats$				&						\\
		Positive rationals 		& $\RRa,\, \RRb,\, \RRc$ 				& $\PosRats$			&						\\
		Positive extended reals 	& $\alpha,\, \beta,\, \gamma$			& $\PosRealsInf$		&						\\
		Rational probabilities 	& $p,\, q$ 							& $[0,\, 1] \cap \Rats$	&						\\[.75em]
		Variables				& $\XX,\, \XY,\, \XZ,\, \VV,\, \VW,\, \VU, \gnum$	& \Vars				& \Cref{sec:pgcl}	 		\\
		%
		Arithmetic expressions 	& $\TTa,\, \TTb$ 				& \Terms				& \Cref{sec:syntax:terms}		\\
		Boolean expressions 	& $\BBa,\, \BBb,\, \BBc$	 			& \Bools				& \Cref{sec:syntax:bool}		\\[.75em]
		Syntactic expectations 	& $\FF,\, \FG,\, \FH$ 					& \SyntE				& \Cref{sec:syntax:exp}		\\
		Semantic expectations 	& $\ff,\, \fg,\, \fh$ 					& $\E$				& \Cref{sec:sem-exp}		\\[.75em]
		Programs 				& $C$ 							& \pgcl 				& \Cref{sec:pgcl}			\\
		Program states 		& $\sigma,\, \tau$ 					& $\States$ 			& \Cref{sec:program-states}	\\
		\hline\hline
	\end{tabular}%
	\renewcommand{\arraystretch}{1}%
\end{table}%
%


\section{Towards an Expressive Language for Expectations}
\label{sec:towards-expressiveness}
As long as we take the extensional approach to program verification, \ie we admit all expectations in $\E$, 
reasoning about expected values of $\pgcl$ programs is \emph{complete}: 
For every program $\cc$ and postexpectation $\ff$, it is, in principle, possible to find an expectation $\wp{\cc}{\ff} \in \E$ which---by the above soundness property---coincides with the
expected value of $\ff$ after termination of $\cc$.

The main goal of this paper is to enable (relatively) complete verification of probabilistic programs by taking an \emph{intensional} approach. 
That is, we use the same verification technique described in Section~\ref{sec:extensional} (\ie the weakest preexpectation calculus) but%
\begin{center}
	fix a set $\SyntE$ of syntactic expectations $\FF$.
\end{center}%
We use metavariables $\FF$, $\FG$, $\FH$, \dots, for syntactic expectations, as opposed to \mbox{$\ff$, $\fg$, $\fh$, \dots,} for semantic expectations in $\E$, see also \Cref{tab:metavariables}. 
While $\FF$ itself is merely a syntactic entity to begin with, we denote by $\eval{\FF}$ the corresponding semantic expectation in $\E$.
Having a syntactic set of expectations at hand immediately raises the question of \emph{expressiveness}: 
\begin{center}
	For $\FF \in \mathbf{E}$, is the weakest preexpectation $\wp{\cc}{\eval{\FF}}$ again expressible in $\mathbf{E}$?
\end{center}%
\begin{definition}[Expressiveness of Expectations]
\label{def:expressiveness}
    The set $\SyntE$ of syntactic expectations is \emph{expressive}
    iff
    for all programs $\cc$ and all $\FF \in \SyntE$
    there exists a syntactic expectation $\FG \in \SyntE$,
    such that%
	\begin{align*}
		\wp{\cc}{\eval{\FF}} = \eval{\FG}~. \tag*{$\triangle$}
	\end{align*}%
\end{definition}
\noindent%
Notice that constructing \emph{some} expressive set of syntactic expectations is straightforward.
For example, the set $\SyntE = \{ 0 \}$, which consists of a single expectation $0$---interpreted as the constant expectation $\eval{0} = \lambda \pstate\mydot 0$---is expressive: $\wp{\cc}{\eval{0}} = \eval{0}$ holds for every $\cc$ by strictness of $\wpsymbol$.\footnote{$\wpsymbol$ being strict means that $\wp{\cc}{0} = 0$ for every $\cc$, see~\cite{benni_diss}.}

The main challenge is thus to find a syntactic set $\SyntE$ that (i) can be proven expressive \emph{and} (ii)~covers interesting properties---at the very least, it should cover all Boolean expressions $\BB$ (to reason about probabilities) and all arithmetic expressions $\TT$ (to reason about expected values).

%


\section{Syntactic Expectations}
\label{sec:syntax}

We now describe the syntax and semantics for a set $\SyntE$ of syntactic expectations which we will (in the subsequent sections) prove to be expressive and which can be used to express interesting properties such as, amongst others, the expected value of a variable $x$, the probability to terminate, the probability to terminate in a set described by a first-order arithmetic predicate $\BB$, etc.

\subsection{Syntax of Arithmetic Expressions}
\label{sec:syntax:terms}

We first describe a \emph{syntax for arithmetic expressions}, which form \emph{precisely the right-hand-sides of \underline{assi}g\underline{nments} that we allow in \pgcl programs}.
Naturally, the syntax of arithmetical expressions will reoccur in our syntax of expectations.
Formally, the set \Terms of arithmetic expressions is given by%
\begin{align*}
	\TT \qqlongrightarrow 	
	& \RR \iin \PosRats 	\tag{non-negative rationals} \\
	& \qmid  \XX \iin \Vars 	\tag{$\PosRats$-valued variables} \\
	%
	%
	& \qmid  \TT + \TT 		\tag{addition} \\
	& \qmid  \TT \cdot \TT~, 	\tag{multiplication} \ \\
	& \qmid \TT \monus \TT~, \tag{subtraction truncated at 0 (\enquote{monus})}
\end{align*}
where $\Vars$ is a \emph{countable} set of $\PosRats$-valued variables.
We use metavariables $\RRa,\, \RRb,\, \RRc$ for non-negative rationals, $\XX,\, \XY,\, \XZ,\, \VV,\, \VW,\, \VU$ for variables, and $\TTa,\, \TTb,\, \TTc$ for arithmetic expressions, see also \autoref{tab:metavariables}.%


\subsection{Syntax of Boolean Expressions}
\label{sec:syntax:bool}

We next describe a \emph{syntax for Boolean expressions} over $\Terms$, which form \emph{precisely the g\underline{uards} that we allow in \pgcl programs} (for conditional choices and while loops).
Again, the syntax of Boolean expressions will also naturally reoccur in our syntax of expectations.
Formally, the set \Bools of Boolean expressions is given by%
\begin{align*}
	\BB \qqlongrightarrow 	
	& \TT < \TT 			\tag{strict inequality of arithmetic expressions} \\
	& \qmid  \BB \wedge \BB 	\tag{conjunction} \\
	& \qmid  \neg \BB~. 		\tag{negation}
\end{align*}%
We use metavariables $\BBa,\, \BBb,\, \BBc$ for Boolean expressions, see also \autoref{tab:metavariables}.

The following expressions are syntactic sugar with their standard interpretation and semantics:
\begin{align*}
    \false~,
    \qquad 
    \true~,
    \qquad
    \BBa \vee \BBb~,
    \qquad
    \BBa \longrightarrow \BBb~,
    \qquad 
    \TTa = \TTb~,
    \qqand
    \TTa \leq \TTb~.
\end{align*}%
%
%
%
%

\subsection{Syntax of Expectations}
\label{sec:syntax:exp}

We now describe the syntax of a set of \emph{expressive expectations} which can be used as both pre- and postexpectations for the verification of probabilistic programs.
Formally, the set \SyntE of \emph{syntactic expectations} is given by%
\begin{align*}
	\FF \qqlongrightarrow 	
	& \TT 		\tag{arithmetic expressions} \\
	& \qmid \iverson{\BB} \cdot \FF ~\mid~   \FF \cdot \iverson{\BB}  \tag{guarding} \\
	& \qmid  \FF + \FF 		\tag{addition} \\
	& \qmid \TT \cdot \FF ~\mid~   \FF \cdot \TT  \tag{scaling by arithmetic expressions} \\
	& \qmid \SupV{\XX} \FF  	\tag{supremum over $\XX$} \\
	& \qmid \InfV{\XX} \FF~.  	\tag{infimum over $\XX$}
	%
	%
	%
	%
	%
\end{align*}
As mentioned before, we use metavariables $\FF,\, \FG,\, \FH$ for syntactic expectations, see also \autoref{tab:metavariables}.
Let us go over the different possibilities of syntactic expectations according to the above grammar.

\paragraph{Arithmetic expressions}
These form the base case and it is immediate that they are needed for an expressive language.
Assume, for instance, that we want to know the \enquote{expected} (in fact: certain) value of variable $x$---itself an arithmetic expression by definition---after executing $\ASSIGN{x}{\TT}$.
Then this is given by $\wp{\ASSIGN{x}{\TT}}{x} = \TT$---again an arithmetic expression.
As $\TT$ could have been \emph{any} arithmetic expression, we at least need all arithmetic expressions in an \mbox{expressive expectation language}.

\paragraph{Guarding and addition.}
Both guarding---multiplication with a predicate---and addition are used for expressing weakest preexpectations of conditional choices and loops.
As we have, for instance,
\begin{align*}
	\wp{\ITE{\BB}{\cc_1}{\cc_2}}{f}  \eeq \iverson{\BB} \cdot \wp{\cc_1}{f} + \iverson{\neg\BB} \cdot \wp{\cc_1}{f}~,
\end{align*}%
it is evident that guarding and addition is convenient, if not necessary, for being expressive.

\paragraph{Scaling by arithmetic expressions.}
One could ask why we restrict to multiplications of arithmetic expressions and expectations and do not simply allow for multiplication of two arbitrary expectations $\FF \cdot \FG$.
We will defer this discussion to \Cref{sec:note-on-f-times-f}.
For now, it suffices to say that we can express all multiplications we need without running into trouble with quantifiers which would happen otherwise.

\paragraph{Suprema and infima.}
The supremum and infimum constructs $\SupV{\XX} \FF$ and $\InfV{\XX} \FF$ take over the role of the $\exists$ and $\forall$ quantifiers of first-order logic.
We use them to \emph{bind} variables $\XX$. 
The \mbox{$\Sup$ and $\Inf$}~quantifiers are necessary to make our expectation language expressive in the same was as, for instance, at least the $\exists$ quantifier is necessary to make first-order logic expressive for weakest preconditions of non-probabilistic programs.
\medskip 

As is standard, we additionally admit 
\emph{parentheses} for clarifying the order of precedence in syntactic expectations.
To keep the amount of parentheses to a minimum,
we assume that $\cdot$ has precedence over $+$ and that the quantifiers~\Sup and \Inf have the \emph{least} precedence.

The set of \emph{free variables} $\FV{\FF} \subseteq \Vars$ 
is the set of all variables that occur syntactically in $\FF$
and that are not in the scope of some \Sup or \Inf quantifier. 
We write $\FF(\XX_1,\, \ldots,\, \XX_n)$ to indicate
that \emph{at most} the variables $\XX_1,\, \ldots,\, \XX_n$ occur freely in $\FF$.

Given a syntactic expectation $\FF$, a variable $\XX \in \FV{\FF}$, and an arithmetic expression $\TT$, we denote by $\FF\subst{\XX}{\TT}$ the \emph{syntactic replacement} of every occurrence of $\XX$ in $\FF$ by $\TT$.
Given a syntactic expectation of the form $\FF(\ldots, \XX_i, \ldots)$, we often write $\FF(\ldots,\TT,\ldots)$ instead
of the more cumbersome $\FF(\ldots, \XX_i, \ldots)\subst{\XX_i}{\TT}$.

\subsection{Semantics of Expressions and Expectations} 
\label{sec:semantics}

The semantics of arithmetic and Boolean expressions is standard---see \Cref{fig:semantics:expressions}.
%
\begin{table}[t]
	\renewcommand{\arraystretch}{1.5}

    \begin{tabular}{@{\hspace{1em}}l@{\hspace{2em}}l|@{\hspace{1em}}l@{\hspace{2em}}l}
	\hline\hline
    $\boldsymbol{\TT}$ & $\boldsymbol{\sem{\TT}{\pstate}{}}$ & $\boldsymbol{\BB}$ & $\boldsymbol{\sem{\BB}{\pstate}{} = \true}\quad\textbf{iff}$ \\
	\hline
    $\RR\quad(\in \PosRats)$ & $\RR$ & $\TTa < \TTb$ &  $\sem{\TTa}{\pstate}{} < \sem{\TTb}{\pstate}{}$ \\
    $\XX\quad(\in \Vars)$ & $\pstate(\XX)$ & $\BBb \wedge \BBc$ & $\sem{\BBb}{\pstate}{\interpret} = \true = \sem{\BBc}{\pstate}{\interpret}$   \\
    $\TTb + \TTc$ & $\sem{\TTb}{\pstate}{} + \sem{\TTc}{\pstate}{}$ & $\neg \BBb$ & $\sem{\BBb}{\pstate}{} = \false$ \\
    $\TTb \cdot \TTc$ & $\sem{\TTb}{\pstate}{} \cdot \sem{\TTc}{\pstate}{}$ & \\
    $\TTb \monus \TTc$ & $\begin{cases}
    \sem{\TTb}{\pstate}{} - \sem{\TTc}{\pstate}{}, & \text{if}~\sem{\TTb}{\pstate}{} \geq \sem{\TTc}{\pstate}{} \\
    0\,, &\text{else}
    \end{cases}
    $& \\
	\hline\hline
\end{tabular}
\caption{The semantics of arithmetic expressions $\TT$ and Boolean expressions $\BB$.}
\label{fig:semantics:expressions}
\end{table}

For a program state~$\sigma$, 
we define
\begin{align*}
\sigma\statesubst{\XX}{\RR} \ddefeq \lambda \XY \mydot 
\begin{cases}
  \RR, &\text{if}~\XY=\XX \\
  \sigma(\XY), &\text{otherwise.}
\end{cases}
\end{align*}
The semantics $\sem{\FF}{\pstate}{\interpret}$ of an expectation \FF under state \pstate 
is an \emph{extended positive real} (\ie a positive real number or $\infty$) defined inductively as follows:
\begin{align*}
	\sem{\TT}{\pstate}{\interpret}  &\ddefeq \sem{\TT}{\pstate}{\interpret}~\footnotemark\\[.125em]
	\sem{\iverson{\BB} \cdot \FF}{\pstate}{\interpret} \ddefeq \sem{\FF \cdot \iverson{\BB} }{\pstate}{\interpret}&\ddefeq \begin{cases}
            \sem{\FF}{\pstate}{\interpret}, & \textnormal{ if }~ \sem{\BB}{\sigma}{\interpret} = \true \\
		0, & \textnormal{ else }
	\end{cases}  \\[.75em]
	\sem{\FF + \FG}{\pstate}{\interpret} &\ddefeq \sem{\FF}{\pstate}{\interpret} \pplus \sem{\FG}{\pstate}{\interpret}\\[.75em]
	\sem{\TT \cdot \FF}{\pstate}{\interpret} \ddefeq \sem{\FF \cdot \TT}{\pstate}{\interpret} &\ddefeq \sem{\TT}{\pstate}{\interpret} \ccdot \sem{\FF}{\pstate}{\interpret}
    \\[.75em]
	\sem{\SupV{\XX} \FF}{\pstate}{\interpret} & \ddefeq \sup~\setcomp{\sem{\FF}{\pstate\statesubst{\XX}{\RR}}{\interpret\statesubst{\VV}{\RR}}}{\RR \in \PosRats}\\[.125em]
	\sem{\InfV{\XX} \FF}{\pstate}{\interpret} &\ddefeq \inf\hspace{1.1ex}\setcomp{\sem{\FF}{\pstate\statesubst{\XX}{\RR}}{\interpret\statesubst{\VV}{\RR}}}{\RR \in \PosRats}
	%
	%
	%
\end{align*}%
\footnotetext{Here, on the left-hand-side $\sem{\:\cdot\:}{\pstate}{\interpret}$ denotes the semantics of expectations, whereas on the right-hand-side $\sem{\:\cdot\:}{\pstate}{\interpret}$ denotes the semantics of arithmetic expressions.}%
We assume that $0 \cdot \infty = \infty \cdot 0 = 0$.
Most of the above are self-explanatory.
The most involved definitions are the ones for quantifiers.
The interpretation of the $\SupV{\XX} \FF$ quantification interprets~$\FF$ under all possible values of the bounded variable $\XX$ and then returns the supremum of all these values.
Analogously, $\InfV{\XX} \FF$ returns the infimum. 
Notice that---even though all variables evaluate to rationals---both the supremum and the infimum are taken over a set of reals.
Hence, an expectation $\FF$ involving $\Sup$ or $\Inf$ possibly evaluates to an \emph{irrational} number. 
For example, the expectation
\[
   \FF \eeq \Sup \XX \colon \iverson{x \cdot x < 2} \cdot x~,
\]
evaluates to $\sqrt{2} \not\in\PosRats$ under every state $\pstate$.

The supremum of $\emptyset$ is $0$.
Dually, the infimum of $\emptyset$ is $\infty$. 
The supremum of an unbounded set is~$\infty$.
We also note that our semantics can generate $\infty$ only by using a $\Sup$ quantifier.%

As a shorthand for turning syntactic expectations into semantic ones, we define%
\begin{align*}
	\eval{\FF} \ddefeq \lambda \pstate\mydot \sem{\FF}{\pstate}{}~.
\end{align*}%

\subsection{Equivalence and Ordering of Expectations}

For two expectations $\FF$ and $\FG$, we write $\FF = \FG$ only if they are \emph{syntactically equal}.
On the other hand, we say that two expectations $\FF$ and $\FG$ are \emph{semantically equivalent}, denoted $\FF \equiv \FG$, if their semantics under every state 
is equal, \ie 
\begin{align*}
	\FF \eequiv \FG \qqiff \eval{\FF} \eeq \eval{\FG}~.
\end{align*}%
Similarly to the partial order $\preceq$ on semantical expectations in $\E$, we define a (semantical) partial order~$\preceq$ on syntactic expectations in $\SyntE$ by%
\begin{align*}
	\FF \ppreceq \FG \qqiff \eval{\FF} \ppreceq \eval{\FG}~.
\end{align*}%

\subsection{A Note on Forbidding $\boldsymbol{\FF \cdot \FG}$ in our Syntax}
\label{sec:note-on-f-times-f}

Analogously to classical logic, a syntactic expectation $\FF$ is in \emph{prenex normal form}, if it is of the form
\begin{align*}
	f \eeq \Quant_1 \XX_1 \ldots \Quant_k \XX_k  \colon \FG~,
\end{align*}
where $\Quant_i \in \{\Sup, \Inf\}$ and where $\FG$ is quantifier-free.
Being able to transform any syntactic expectation into prenex normal form while preserving its semantics will be essential to our expressiveness proof.
In particular, we require that there is an algorithm that brings arbitrary syntactic expectations into prenex normal form, without inspecting their semantics.

The problem with allowing $\FF \cdot \FG$ arises in the context of the $0 \cdot \infty = 0$ phenomenon.
Suppose for the moment that we allow for $\FF \cdot \FG$ syntactically and define
\begin{align*}
   \sem{\FF \cdot \FG}{\sigma}{\interpret}
   \ddefeq
    \sem{\FF}{\sigma}{\interpret} \cdot \sem{\FG}{\sigma}{\interpret}
\end{align*}%
semantically, where $0 \cdot \infty = \infty \cdot 0 = 0$. 
Because of commutativity of multiplication, the above is an absolutely natural definition.
This also immediately gives us that $\sem{\FF \cdot \FG}{\sigma}{\interpret} = \sem{\FG \cdot \FF}{\sigma}{\interpret}$.

We now show that we encounter a problem when trying to transform expectations into prenex normal form. For that, consider the two expectations
\begin{align*}
	\FF \eeq \Inf \XX \colon \frac{1}{\XX +1}
	\qqand
	\quad  \FG \eeq \Sup \XY \colon \XY~.
\end{align*}
Notice that we slightly abuse notation since, strictly speaking, $\frac{1}{\XX +1}$ is not allowed by our syntax. We can however express it as $\Sup \XZ \colon \iverson{\XZ \cdot (\XX +1) = 1}\cdot \XZ $. 
Clearly, we have $\sem{\FF}{\sigma}{} = 0$ and $\sem{\FG}{\sigma}{} = \infty$
for all $\sigma$, \ie both $\FF$ and $\FG$ are constant expectations.

Let us now consider the product of $\FF$ and $\FG$.
For all $\sigma$, its semantics is given by
\begin{align*}
	\sem{\FF \cdot \FG}{\sigma}{\interpret} \eeq \sem{\FF}{\sigma}{\interpret} \cdot \sem{\FG}{\sigma}{\interpret} \eeq 0 \cdot \infty \eeq 0 \eeq \infty \cdot 0 \eeq \sem{\FG}{\sigma}{\interpret} \cdot \sem{\FF}{\sigma}{\interpret} \eeq \sem{\FG \cdot \FF}{\sigma}{\interpret}~.
\end{align*}%
Now consider the following:
\begin{align*}
	\blue{\sem{\FF \cdot \FG}{\sigma}{\interpret}} 
	\eeq & \sem{\left(\Inf \XX \colon \frac{1}{\XX +1} \right) \cdot \bigl(\Sup \XY \colon \XY\bigr)}{\sigma}{\interpret}\\
	\eeq &\sem{\Inf \XX \colon \Sup \XY \colon \frac{1}{\XX +1} \cdot \XY}
    {\sigma}{\interpret} \tag{by prenexing}\\
    \eeq & \inf \setcomp{
        \sup \setcomp{\frac{1}{\RRa +1} \cdot \RRb}
        {\RRb \in \PosRats}
    }
    {\RRa \in \PosRats} \\
    \eeq & \inf \setcomp{
    	\infty
    }
    {\RRa \in \PosRats} \\
    \blue{\eeq} & \blue{\infty} \\
    \blue{~{}\neq{}~}& \blue{0} \\
    \eeq& \sup \setcomp{
    	0
    }
    {\RRb \in \PosRats} \\
    \eeq& \sup \setcomp{
    	\inf \setcomp{\frac{1}{\RRa +1} \cdot \RRb}
    	{\RRa \in \PosRats}
    }
    {\RRb \in \PosRats} \\
    \eeq& \sup \setcomp{
    	\inf \setcomp{\RRb \cdot \frac{1}{\RRa +1} }
    	{\RRa \in \PosRats}
    }
    {\RRb \in \PosRats} \tag{by commutativity of $\cdot$ in $\PosRealsInf$}\\
    \eeq & \sem{\Sup\XY \colon \Inf \XX \colon \XY \cdot  \frac{1}{\XX +1}}
    {\sigma}{\interpret} \\
    \eeq & \sem{\bigl(\Sup \XY \colon \XY\bigr) \cdot \left(\Inf \XX \colon \frac{1}{\XX +1} \right) }{\sigma}{\interpret} \tag{by un-prenexing}\\
    \blue{\eeq} & \blue{\sem{\FG \cdot \FF}{\sigma}{\interpret}}
\end{align*}
We see that $\Sup\XY \colon \Inf \XX \colon \frac{1}{\XX +1} \cdot \XY$ is a sound prenex normal form of $\FG \cdot \FF$ whereas $\Inf \XX \colon \Sup \XY \colon \frac{1}{\XX +1} \cdot \XY$ apparently is not a sound prenex normal form of $\FF \cdot \FG$.
A fact that seems even more off-putting is that---even though $f \equiv 0$---the above argument would not have worked for $f = 0$.

To summarize, we deem the above considerations enough grounds to forbid $f \cdot g$ altogether, in particular since the rescaling $a \cdot f$ suffices in order for our syntactic expectations to be expressive. 
We also note that we will later provide a syntactic, but much more complicated, way to write down arbitrary products between syntactic expectations, see~\Cref{thm:prod_exp}.

\section{Expressiveness for Loop-free Programs}
\label{sec:expressiveness:loop-free}

\noindent{}%
Before we deal with loops, we now show that our set $\SyntE$ of syntactic expectations is
\emph{expressive for all \underline{loo}p\underline{-}f\underline{ree} $\pgcl$ programs}.
Proving expressiveness for loops is \emph{way more involved} and will be addressed 
\mbox{separately in the remaining sections}.
\begin{lemma}%
\label{thm:expressive-loop-free}%
	$\SyntE$ is expressive (see \textnormal{\Cref{def:expressiveness}}) for all loop-free $\pgcl$ programs $\cc$, \ie 
	for all $\FF \in \SyntE$ there exists a syntactic expectation $\FG \in \SyntE$, such that
	\begin{align*}
		\wp{\cc}{\eval{\FF}} \eeq \eval{\FG}~.
	\end{align*}%
\end{lemma}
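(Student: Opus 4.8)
The plan is to proceed by \emph{structural induction} on the loop-free program $\cc$. Since $\cc$ contains no loops, the grammar of \Cref{sec:pgcl} leaves exactly five cases: $\SKIP$, an assignment $\ASSIGN{x}{\TT}$, a sequential composition $\COMPOSE{\cc_1}{\cc_2}$, a probabilistic choice $\PCHOICE{\cc_1}{p}{\cc_2}$, and a conditional choice $\ITE{\BB}{\cc_1}{\cc_2}$. In each case I will exhibit a concrete $\FG \in \SyntE$ built from $\FF$ (and, in the compound cases, from the witnesses supplied by the induction hypothesis applied to $\cc_1$ and $\cc_2$) and check that $\eval{\FG}$ coincides with the right-hand side of the corresponding $\wpsymbol$-rule in \Cref{table:wp}.

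The two base cases are $\SKIP$ and the assignment. For $\SKIP$ we have $\wp{\SKIP}{\eval{\FF}} = \eval{\FF}$, so $\FG = \FF$ already lies in $\SyntE$. The assignment is the only delicate base case: here $\wp{\ASSIGN{x}{\TT}}{\eval{\FF}} = \eval{\FF}\subst{x}{\TT}$, the \emph{semantic} substitution. I would take $\FG$ to be the \emph{syntactic} substitution $\FF\subst{x}{\TT}$ and establish a \textbf{substitution lemma} with two parts: (i) \emph{closure} --- $\FF\subst{x}{\TT} \in \SyntE$ for every $\FF \in \SyntE$ and every arithmetic expression $\TT$, by an easy induction on $\FF$ (each production of $\SyntE$ is preserved under replacing a variable by an arithmetic expression, and \Terms and \Bools are themselves closed under such substitution); and (ii) \emph{semantic correctness} --- $\eval{\FF\subst{x}{\TT}} = \eval{\FF}\subst{x}{\TT}$, again by induction on $\FF$, the only nontrivial steps being $\SupV{y}\FF'$ and $\InfV{y}\FF'$. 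The assignment case then follows at once, taking $\FG = \FF\subst{x}{\TT}$.

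For the three compound constructs I invoke the induction hypothesis. For $\COMPOSE{\cc_1}{\cc_2}$ I first obtain $\FG_2$ with $\eval{\FG_2} = \wp{\cc_2}{\eval{\FF}}$, then apply the hypothesis to $\cc_1$ with postexpectation $\FG_2$ to obtain $\FG_1$ with $\eval{\FG_1} = \wp{\cc_1}{\eval{\FG_2}}$; the composition rule gives $\wp{\COMPOSE{\cc_1}{\cc_2}}{\eval{\FF}} = \wp{\cc_1}{\eval{\FG_2}} = \eval{\FG_1}$, so $\FG = \FG_1$ works. For $\PCHOICE{\cc_1}{p}{\cc_2}$ I set $\FG = p \cdot \FG_1 + (1-p) \cdot \FG_2$; since $p$ and $1-p$ are rational constants in $\PosRats$, hence arithmetic expressions, this is a legal instance of the scaling and addition productions, and the pointwise semantics of $\cdot$ and $+$ on expectations matches the $\wpsymbol$-rule. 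For $\ITE{\BB}{\cc_1}{\cc_2}$ I set $\FG = \iverson{\BB} \cdot \FG_1 + \iverson{\neg \BB} \cdot \FG_2$, which is legal by the guarding and addition productions and evaluates correctly by the definition of $\eval{\cdot}$.

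I expect the substitution lemma for the assignment to be the main obstacle, specifically the \emph{variable-capture} issue in the quantifier cases $\SupV{y}\FF'$ and $\InfV{y}\FF'$: if the bound variable $y$ occurs in $\TT$, a naive textual replacement would wrongly capture it, and if $y = x$ one must leave the (shadowed, bound) occurrences untouched. The clean fix is to $\alpha$-rename bound variables to names disjoint from $\FV{\TT} \cup \{x\}$ before substituting, together with a separate routine induction showing that $\alpha$-renaming preserves $\eval{\cdot}$. Once capture is handled, matching $\eval{\FF\subst{x}{\TT}}$ against $\eval{\FF}\subst{x}{\TT} = \lambda \pstate \mydot \eval{\FF}\bigl(\pstate\statesubst{x}{\sem{\TT}{\pstate}{}}\bigr)$ reduces, in the quantifier step, to commuting the two state updates (legal since $y \neq x$) and to the observation that $\sem{\TT}{\pstate\statesubst{y}{\RR}}{} = \sem{\TT}{\pstate}{}$ when $y \notin \FV{\TT}$; all remaining cases unfold directly from the definition of $\eval{\cdot}$ and carry no real difficulty.
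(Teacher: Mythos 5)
Your proposal is correct and follows essentially the same route as the paper's own proof: structural induction on $\cc$ with the identical witnesses ($\FF$ for $\SKIP$, the syntactic substitution $\FF\subst{x}{\TT}$ for assignments justified by a substitution lemma $\eval{\FF\subst{x}{\TT}} = \eval{\FF}\subst{x}{\TT}$, and the convex/guarded combinations $p \cdot \FG_1 + (1-p)\cdot \FG_2$ and $\iverson{\BB}\cdot\FG_1 + \iverson{\neg\BB}\cdot\FG_2$ for the two choice constructs). You are in fact slightly more thorough than the paper: its proof of this lemma silently omits the sequential-composition case, which you handle correctly by applying the induction hypothesis for $\cc_1$ to the witness $\FG_2$ (legitimate because the statement is universally quantified over postexpectations), and you make explicit the variable-capture issue in the quantifier cases of the substitution lemma, which the paper's one-line proof of that lemma leaves implicit.
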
%
\noindent{}%
For proving this expressiveness lemma (and also for the case of loops), we need the following technical lemma about substitution of variables by values in our semantics:
\begin{lemma}%
\label{lem:substitution}%
	For all $\pstate$, $\FF$, and $\TT$,
	\begin{align*}
	   \sem{\FF\subst\XX{\TTa}}{\sigma}{\interpret}
	   \eeq
	   \sem{\FF}{\sigma \statesubst\XX{\sem{\TTa}{\sigma}{\interpret}}}{\interpret}
	   \qquad\textnormal{or equivalently}\qquad
	   \eval{\FF\subst\XX{\TTa}}
	   \eeq
	   \eval{\FF}\subst\XX{\TTa}
	   %
	\end{align*}
\end{lemma}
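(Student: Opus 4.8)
The plan is to prove the substitution lemma by \emph{structural induction} on the syntactic expectation $\FF$, following exactly the grammar of \SyntE from \Cref{sec:syntax:exp}. The two stated formulations are equivalent by unfolding the definition $\eval{\FF} = \lambda\pstate\mydot\sem{\FF}{\pstate}{\interpret}$ together with the definition of syntactic substitution $\FF\subst{\XX}{\TTa}$, so I would prove the pointwise semantic identity $\sem{\FF\subst{\XX}{\TTa}}{\sigma}{\interpret} = \sem{\FF}{\sigma\statesubst{\XX}{\sem{\TTa}{\sigma}{\interpret}}}{\interpret}$ for all $\sigma$.

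First I would dispatch the base case where $\FF$ is an arithmetic expression $\TTb$. Here $\FF\subst{\XX}{\TTa} = \TTb\subst{\XX}{\TTa}$, and the claim reduces to the analogous substitution property for \emph{arithmetic} expressions, namely $\sem{\TTb\subst{\XX}{\TTa}}{\sigma}{} = \sem{\TTb}{\sigma\statesubst{\XX}{\sem{\TTa}{\sigma}{}}}{}$. This is itself a routine sub-induction on the grammar of \Terms (rationals, variables, $+$, $\cdot$, $\monus$), using the semantics in \Cref{fig:semantics:expressions}; the only interesting point is the variable case, where $\XX\subst{\XX}{\TTa} = \TTa$ evaluates to $\sem{\TTa}{\sigma}{}$, matching $\sigma\statesubst{\XX}{\sem{\TTa}{\sigma}{}}(\XX)$, while any other variable $\XY \neq \XX$ is left untouched by both the syntactic substitution and the state update. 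I would then handle the straightforward inductive cases: for guarding $\iverson{\BB}\cdot\FF$, addition $\FF + \FG$, and scaling $\TTb\cdot\FF$, substitution distributes syntactically over the operator, so I apply the induction hypothesis to each immediate subexpression (plus the arithmetic substitution lemma to $\BB$ and to $\TTb$) and combine using the compositional semantic clauses, respecting the $0\cdot\infty = 0$ convention in the scaling case.

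The main obstacle, and the only step requiring genuine care, is the quantifier case $\SupV{\YY}\FF$ (and dually $\InfV{\YY}\FF$), because of variable capture. The subtlety is the interaction between the bound variable $\YY$ and the substituted variable $\XX$. The clean approach is to note that substitution $\FF\subst{\XX}{\TTa}$ is only ever applied for $\XX \in \FV{\FF}$, so when substituting into $\SupV{\YY}\FF$ we have $\XX \neq \YY$ (a bound occurrence is not free), and I would additionally assume, via the standard renaming-of-bound-variables convention, that $\YY \notin \FV{\TTa}$ so that no variable of $\TTa$ gets captured. Under these hypotheses $(\SupV{\YY}\FF)\subst{\XX}{\TTa} = \SupV{\YY}(\FF\subst{\XX}{\TTa})$, and unfolding the $\sup$ semantics I must compare
\begin{align*}
	\sup\setcomp{\sem{\FF\subst{\XX}{\TTa}}{\sigma\statesubst{\YY}{\RR}}{\interpret\statesubst{\VV}{\RR}}}{\RR \in \PosRats}
	\quad\textnormal{and}\quad
	\sup\setcomp{\sem{\FF}{\sigma\statesubst{\XX}{\sem{\TTa}{\sigma}{}}\statesubst{\YY}{\RR}}{\interpret\statesubst{\VV}{\RR}}}{\RR \in \PosRats}~.
\end{align*}
Applying the induction hypothesis inside the left set rewrites each element as $\sem{\FF}{(\sigma\statesubst{\YY}{\RR})\statesubst{\XX}{\sem{\TTa}{\sigma\statesubst{\YY}{\RR}}{}}}{\interpret\statesubst{\VV}{\RR}}$, and it remains to show the two state updates agree, i.e. $(\sigma\statesubst{\YY}{\RR})\statesubst{\XX}{\sem{\TTa}{\sigma\statesubst{\YY}{\RR}}{}} = (\sigma\statesubst{\XX}{\sem{\TTa}{\sigma}{}})\statesubst{\YY}{\RR}$. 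This holds precisely because $\XX \neq \YY$ (so the two updates commute as finite map modifications) and because $\YY \notin \FV{\TTa}$ forces $\sem{\TTa}{\sigma\statesubst{\YY}{\RR}}{} = \sem{\TTa}{\sigma}{}$. Since the two generating sets coincide elementwise, their suprema are equal; the infimum case is identical. I expect the bulk of the write-up effort to go into stating the renaming convention cleanly and verifying this commuting-updates equality, as everything else is mechanical.
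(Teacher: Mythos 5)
Your proposal is correct and takes essentially the same approach as the paper, whose proof of this lemma is stated simply as ``By induction on the structure of $\FF$'' with no further detail. Your elaboration---the sub-induction for arithmetic expressions, the mechanical compound cases, and the quantifier case handled via capture-avoiding renaming and the commuting-updates equality $(\sigma\statesubst{\XY}{\RR})\statesubst{\XX}{\sem{\TTa}{\sigma}{}} = (\sigma\statesubst{\XX}{\sem{\TTa}{\sigma}{}})\statesubst{\XY}{\RR}$---is precisely what that induction requires.
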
%
\begin{proof}
	By induction on the structure of $\FF$.
\end{proof}%
\noindent{}%
Intuitively, Lemma~\ref{lem:substitution} states that syntactically replacing variable $\XX$ by an arithmetical expression $a$ in expectation $\FF$ amounts to interpreting $\FF$ in states where the variable~$\XX$ has been substituted by the evaluation of $\TTa$ under that state.

\begin{proof}[Proof of \textnormal{\Cref{thm:expressive-loop-free}}]
	Let $\FF \in \SyntE$ be arbitrary.
	The proof goes by induction on the structure of loop-free programs $\cc$.
	It is somewhat standard, but it demonstrates nicely that our syntactic constructs are actually needed, so we provide it here.
	We start with the atomic programs:
	\paragraph{The effectless program $\SKIP$}
	We have $\wp{\SKIP}{\eval{\FF}} = \eval{\FF}$ and $\FF \in \SyntE$ by assumption.
	
	\paragraph{The assignment $\ASSIGN{x}{\TT}$} 
	We have 
	\begin{align*}
		\wp{\ASSIGN{x}{\TT}}{\eval{\FF}} 
		& \eeq \eval{\FF}\subst{x}{\TT} 		\\
		& \eeq \eval{\FF\subst{x}{\TT}} 		\tag{by \Cref{lem:substitution}}
	\end{align*}%
	and $\FF\subst{x}{\TT} \in \SyntE$ since $\FF\subst{x}{\TT}$ is obtained from $\FF$ by a syntactical replacement.

	\paragraph{Induction Hypothesis}
	For arbitrary loop-free $\cc_1$ and $\cc_2$, there exist syntactic expectations $\FG_1,\FG_2 \in \SyntE$, such that%
	\begin{align*}
		\wp{\cc_1}{\eval{\FF}} \eeq \eval{\FG_1}
		\qqand
		\wp{\cc_2}{\eval{\FF}} \eeq \eval{\FG_2}~.
    \end{align*}
    We then proceed with the compound loop-free programs:

    \paragraph{The probabilistic choice $\PCHOICE{\cc_1}{p}{\cc_2}$} 
    We have%
    \begin{align*}
       &\wp{\PCHOICE{\cc_1}{p}{\cc_2}}{\eval{\FF}} \\
       & \eeq  p \cdot \wp{\cc_1}{\eval{\FF}} + (1-p) \cdot \wp{\cc_2}{\eval{\FF}}
       \tag{by definition of $\wpsymbol$} \\
       & \eeq p \cdot \eval{\FG_1} + (1-p) \cdot \eval{\FG_2}
       \tag{by I.H.\ on $\cc_1$ and $\cc_2$} \\
       & \eeq \eval{p \cdot \FG_1 + (1-p) \cdot \FG_2}
       \tag{pointwise addition and multiplication}
    \end{align*}
    and $p \cdot \FG_1 + (1-p) \cdot \FG_2 \in \SyntE$, see \Cref{sec:syntax:exp}.

    \paragraph{The conditional choice $\ITE{\BB}{\cc_1}{\cc_2}$} 
    We have%
    \begin{align*}
       & \wp{\ITE{\BB}{\cc_1}{\cc_2}}{\eval{\FF}} \\
       & \eeq  \iverson{\BB} \cdot \wp{\cc_1}{\eval{\FF}}  + \iverson{\neg\BB}\cdot \wp{\cc_2}{\eval{\FF}} 
       \tag{by definition of $\wpsymbol$} \\
       & \eeq \iverson{\BB} \cdot \eval{\FG_1} + \iverson{\neg\BB} \cdot \eval{\FG_2}
       \tag{by I.H.\ on $\cc_1$ and $\cc_2$} \\
       & \eeq \eval{\iverson{\BB} \cdot  \FG_1 + \iverson{\neg\BB} \cdot  \FG_2}
       \tag{pointwise addition and multiplication}
    \end{align*}
    and $\iverson{\BB} \cdot  \FG_1 + \iverson{\neg\BB} \cdot  \FG_2 \in \SyntE$, see \Cref{sec:syntax:exp}.
    \medskip
    
    Hence, $\SyntE$ is expressive for loop-free programs.
\end{proof}


\section{Expressiveness for Loopy Programs --- Overview}
\label{sec:proof-outline}

\newcommand{\syntsum}[3]{\sfsymbol{Sum}_{#1,#2}\left( #3 \right)}
\newcommand{\cciter}{\cc_{\textrm{iter}}}

Before we get to the proof itself, we outline the main challenges---and the steps we took to address them---of 
proving expressiveness of our syntactic expectations $\SyntE$ for $\pgcl$ programs including loops; 
the technical details of the involved encodings and auxiliary results are considered throughout Sections~\ref{sec:embedding}~--~\ref{sec:expressiveness}.
This section is intended to support navigation through the individual components of the expressiveness proof; as such, we provide various references to follow-up sections.

\subsection{Setup}

As in the loop-free case considered in Section~\ref{sec:expressiveness:loop-free}, we prove expressiveness of $\SyntE$ for all $\pgcl$ programs (including loopy ones) by induction
on the program structure; all cases except loops are completely analogous to the proof of Lemma~\ref{thm:expressive-loop-free}.
Our remaining proof obligation thus boils down to proving that, for every loop $\cc = \WHILEDO{\BB}{\cc'}$,
\begin{align*}
  \forall\, \FF \in \SyntE ~ \exists\, \FG \in \SyntE\colon \quad 
  \wp{\,\WHILEDO{\BB}{\cc'}}{ \eval{\FF} } \eeq \eval{\FG}~, \tag{$\dagger$}
\end{align*}
where we already know by the I.H. that the same property holds for the loop body $\cc'$, i.e.,%
\begin{align}
	\forall\, \FF' \in \SyntE ~ \exists\, \FG' \in \SyntE\colon \quad 
	\wp{\cc'}{ \eval{\FF'} } \eeq \eval{\FG'}~.
	\label{eq:loops:ih}
\end{align}%

\paragraph{Remark (A Simplification for this Overview)}\label{sec:overview:simplification}
Just for this overview section, we assume that the set $\Vars$ of all variables is \emph{finite} instead of countable.
This is a convenient simplification to avoid a few purely technical details such that we can focus on the actual ideas of the proof.
We do \emph{not} make this assumption in follow-up sections.
Rather, our construction will ensure that only the finite set of ``relevant'' variables---those that appear in 
the program or the postcondition under consideration---is taken into account.\hfill$\triangle$

\subsection{Basic Idea: Exploiting the Kozen Duality}
\label{sec:kozen_duality}
%
We first move to an alternative characterization of the weakest preexpectation of loops 
whose components are simpler to capture with syntactic expectations.
In particular, we will be able to apply our induction hypothesis (\ref{eq:loops:ih}) to some of these components.

Recall the Kozen duality between forward moving measure transformers and backward moving expectation transformers (see \Cref{thm:kozen-duality} and Figure~\ref{fig:prog-execution} in Section~\ref{sec:extensional}):%
\begin{align*}
    \wp{\cc}{\ff}
    \eeq 
    \lambda \pstate_0\mydot 
    \sum_{\tau \in \States} \ff(\tau) \cdot \mu_{\cc}^{\pstate_0}(\tau)~,
\end{align*}%
where $\mu_{\cc}^{\pstate_0}$ is the probability distribution over final states obtained by running $\cc$ on initial state~$\pstate_0$. 
Adapting the above equality to our concrete case in which $\cc$ is a loop and $\ff = \eval{\FF}$, we obtain
\begin{align*}
    \wp{\,\WHILEDO{\BB}{\cc'}}{\eval{\FF}}
    \eeq 
    \lambda \pstate_0\mydot 
    \sum_{\tau \in \States} \, \eval{\iverson{\neg \BB} \cdot \FF}(\tau) \cdot \mu_{\WHILEDO{\BB}{\cc'}}^{\pstate_0}(\tau)~,
\end{align*}
where we strengthened the postexpectation $\FF$ to $\iverson{\neg\BB} \cdot \FF$ to account for the fact that the loop guard $\BB$ is violated in every final state, see~\cite[Corollary 4.6, p.~85]{benni_diss}.
The main idea is---instead of viewing the whole distribution $\mu_{\WHILEDO{\BB}{\cc'}}^{\sigma_0}$ in a single \enquote{big step}---to take a more operational \enquote{small-step} view: 
we consider the intermediate states reached after each guarded loop iteration, which corresponds to executing the program%
\begin{align*}
	\cciter \eeq \ITE{\BB}{\cc'}{\SKIP}~.
\end{align*}%
We then sum over all terminating \emph{execution paths}---finite sequences of states $\pstate_0, \ldots \pstate_{k-1}$ with initial state $\pstate_0$ and final state $\pstate_{k-1} = \tau$---instead of a single final state $\tau$. 
The probability of an execution path is then given by the product of the probability  $\mu_{\cciter}^{\pstate_{i}}(\pstate_{i+1})$ of each intermediate step, i.e., the probability of reaching the state $\pstate_{i+1}$ from the previous state $\pstate_{i}$:
\begin{align}
    \wp{\,\WHILEDO{\BB}{\cc'}}{\eval{\FF}} \eeq 
    \lambda \pstate_0\mydot 
    \sup_{k \in \Nats}
    \sum_{\sigma_0, \ldots, \sigma_{k-1} \in \States} 
    \eval{\iverson{\neg \BB} \cdot \FF}(\pstate_{k-1}) \cdot
    \prod_{i = 0}^{k-2}
    \mu_{\cciter}^{\pstate_i}(\pstate_{i+1})~.
    \label{eq:loops:small-step}
\end{align}
Notice that the above sum (without the $\sup$) considers all execution paths of a fixed length $k$; we take the supremum over all natural numbers
$k$ to account for all terminating execution paths.

%
%
Next, we aim to apply the induction hypothesis (\ref{eq:loops:ih}) to the probability $\mu_{\cciter}^{\pstate_i}(\pstate_{i+1})$ of each step
such that we can write it as a syntactic expectation.
To this end, we need to characterize $\mu_{\cciter}^{\pstate_i}(\pstate_{i+1})$ in terms of weakest preexpectations.
We employ a syntactic expectation $\statepred{\pstate}{}$---called the \emph{characteristic assertion}~\cite{winskel} of state $\pstate$---that captures the values assigned to variables by state $\pstate$:\footnote{Recall from our remark on simplification that $\Vars$ is finite.}
\begin{align*}
  \statepred{\pstate}{} \eeq \iverson{\bigwedge_{x \in \Vars} x = \pstate(x)}~.
\end{align*}
By Kozen duality (\Cref{thm:kozen-duality}), 
the probability of reaching state $\pstate_{i+1}$ from $\pstate_{i}$ in one guarded loop iteration~$\cc_{\textrm{iter}}$ is then given by
\begin{align*}
  \mu_{\cciter}^{\pstate_i}(\pstate_{i+1})
  \eeq  
  \wp{\cciter}{\eval{\statepred{\pstate_{i+1}}{}}}(\pstate_i)~.
\end{align*}
By the same reasoning as for conditional choices in Lemma~\ref{thm:expressive-loop-free}
and the induction hypothesis (\ref{eq:loops:ih}), 
there exists a syntactic expectation $\FG_{\cciter}^{\pstate_{i+1}} \in \SyntE$ such that
\begin{align*}
  \mu_{\cciter}^{\pstate_i}(\pstate_{i+1}) 
  \eeq
  \wp{\cciter}{\eval{\statepred{\pstate_{i+1}}{}}}(\pstate_i)
  \eeq
   \eval{\FG_{\cciter}^{\pstate_{i+1}}}(\pstate_{i})~.
\end{align*}
Plugging the above equality into our \enquote{small-step} characterization of loops (\ref{eq:loops:small-step}) 
then yields the following characterization of $\eval{g}$ in ($\dagger$):
\begin{align}
    \wp{\,\WHILEDO{\BB}{\cc'}}{\eval{\FF}} \eeq 
    \lambda \pstate_0\mydot 
    ~
    \underbrace{\sup_{k \in \Nats}
    ~
    \underbrace{\sum_{\sigma_0, \ldots, \sigma_{k-1} \in \States}
    ~
    \underbrace{
    \Bigl\llbracket\,
    	\underbrace{
		\iverson{\neg \BB} \cdot \FF\!
    	}_{\in \SyntE}
     \;\Bigr\rrbracket(\pstate_{k-1})
        ~\cdot~ \underbrace{\prod_{i = 0}^{k-2}
    ~
    \Bigl\llbracket~
    \underbrace{
            \FG_{\cciter}^{\pstate_{i+1}}}_{{}\in \SyntE}
    ~\Bigr\rrbracket(\pstate_{i})
    }_{\mathclap{\textrm{non-constant product expressible in $\SyntE$?}}}
    }_{\textrm{simple product expressible in $\SyntE$?}}
    }_{\textrm{non-constant sum over paths of length $k$ expressible in $\SyntE$? }}
    }_{\Sup k\colon \ldots\, {}\in \SyntE}
    \label{eq:loops:main}
\end{align}
A formal proof of the above characterization is provided alongside \Cref{thm:wp_loop_as_sum}.
\subsection{Encoding Loops as Syntactic Expectations}
Let us now revisit the individual components of the expectation (\ref{eq:loops:main}) above 
and discuss how to encode them as
syntactic expectations in $\SyntE$, moving through the braces from bottom to top:

\subsubsection{The supremum $\sup_{k \in \Nats}$}
\label{sec:outline:supremum}
The supremum ensures that terminating execution paths \emph{of arbitrary length} are accounted for; it is supported in $\SyntE$
by the $\Sup$ quantifier.
If we already know a syntactic expectation~$\FG_{\textrm{sum}}(k) \in \SyntE$ for the entire sum that follows, we hence obtain
an encoding 
of the whole expectation, namely%
\begin{align*}
	\Sup k\colon \FG_{\textrm{sum}}(k,\ldots) ~{}\in{}~\SyntE~.
\end{align*}%

\subsubsection{The non-constant sum $\sum_{\sigma_0, \ldots, \sigma_{k-1} \in \States}$}
\label{sec:outline:sum}
This sum \emph{cannot} directly be written as a syntactic expectation: 
First, it sums over \emph{execution paths} whereas all variables and constants in syntactic expectations are evaluated to rational numbers.
Second, its number of summands depends on the length $k$ of execution paths whereas $\SyntE$ only supports sums with a constant number of summands.

To deal with the first issue, there is a standard solution in proofs of expressiveness~(cf.~\cite{Loeckx87,winskel,expressiveness_sl,expressiveness_sl_conference}): We employ \emph{G\"odelization} to encode both program states and finite sequences of program states as natural numbers in syntactic expectations. 
The details are found in \Cref{sec:embedding}. In particular:
\begin{itemize}
  \item We show that $\SyntE$ subsumes first-order arithmetic over the natural numbers.
  \item We adapt the approach of \citet{goedel_beta} to encode sequences of both natural numbers and non-negative rationals as G\"odel numbers in our language $\SyntE$.
  \item We define a predicate (in $\SyntE$) \stateseq{}{u}{\VV} that
        is satisfied iff $u$ is the G\"odel number of a sequence of states of length $\VV-1$.
\end{itemize}
\noindent
To deal with the second issue (the sum having a variable number of summands), we also rely on the ability to encode sequences as G\"odel numbers in $\SyntE$---the details are found in Section~\ref{sec:sums_prod_via_goedel}. Roughly speaking, we encode the sum as follows:
\begin{itemize}
  \item We define a syntactic expectation $\FH(\vsum,\ldots)$ that serves as a map from $\vsum$ to individual summands, i.e.,
          $\FH\subst{\vsum}{i}$ yields the $i$-th summand.
  \item We construct a syntactic expectation $\gsum{\FH}{\VV}$ for partial sums, summing up the first $\VV$ summands defined by the syntactic expectation $\FH$---see \Cref{thm:sum_exp} for details.
\end{itemize}
\subsubsection{The product $\eval{\iverson{\neg \BB} \cdot \FF} \cdot \ldots$}
\label{sec:outline:binary-product}
This product is not directly expressible in $\SyntE$ as arbitrary products between syntactic expectations are not allowed.
They are, however, expressible in our language. 
We define a product operation $\FH_1 \exprod \FH_2$ and prove its correctness in 
\Cref{cor:unrestricted_product}.

\subsubsection{The non-constant product $\prod_{i = 0}^{k-2} \eval{\FG_{\cciter}^{\pstate_{i+1}}}(\pstate_{i})$}
\label{sec:outline:product}
This product consists of $k-1$ factors; its encoding requires a similar approach as for non-constant sums.
That is, we define a syntactic expectation $\gproduct{\FH}{\VV}$ that multiplies the first $\VV$
\emph{factors} defined by the syntactic expectation $\FH(\vprod)$. Details are provided in Theorem~\ref{thm:prod_exp}.
%
%
\subsubsection{The expectations $\eval{\iverson{\neg \BB} \cdot \FF}$ and $\eval{\FG_{\cciter}^{\pstate_{i+1}}}$}
Both are syntactic expectations by construction.

\subsection{The Expressiveness Proof}
It remains to glue together the constructions for the individual components of the expectation (\ref{eq:loops:main}), 
which characterizes the weakest preexpectation of loops.
We present the full construction, a proof of its correctness, and an example of the resulting syntactic expectation in \Cref{sec:expressiveness}.
%



\section{G\"odelization for Syntactic Expectations}
\label{sec:embedding}


\begin{figure}[t]
	\begin{minipage}{.5\textwidth}
		\centering
	\renewcommand{\arraystretch}{1.5}
	\begin{tabular}{@{\hspace{1em}}l@{\hspace{2em}}l}
		\hline\hline
		$P$			& $\toFOPosRats{P}$ \\
		\hline
		$\BB$					&  
        $\BB \wedge \isNat(x_1) \wedge \ldots \wedge \isNat(x_n)$  \\
		$\exists x \colon \PP' $			& $\exists x \colon \toFOPosRats{\PP'} $ \\
		$\forall x \colon \PP' $		& $\forall x \colon \isNat(x) \longrightarrow \toFOPosRats{\PP'} $ \\
		\hline\hline
	\end{tabular}
    \captionof{figure}{Rules defining the formula formula $\toFOPosRats{\PP} \in \FOArithPosRats$ for a Boolean expression $\BB$ and $\FV{\PP} = \{x_1,\ldots,x_n\}$. }
	\label{table:fonat_to_foposrat}
\end{minipage}
\qquad
\begin{minipage}{.39\textwidth}
	\centering
	\renewcommand{\arraystretch}{1.5}
	\begin{tabular}{@{\hspace{1em}}l@{\hspace{2em}}l}
		\hline\hline
		$P$			& $\iverson{P}$ \\
		\hline
		$\BB$					&  
		$\iverson{\BB}$ 																					\\
		$\exists \VV \colon \PP' $			& $\Sup \VV \colon \iverson{\PP'} $ \\
		$\forall \VV \colon \PP' $		& $\Inf \VV \colon  \iverson{\PP'} $ \\
		\hline\hline
	\end{tabular}
	\captionof{figure}{ Rules for transforming a formula $P \in \FOArithPosRats$ into an expectation $\iverson{\PP} \in \SyntE$.}
	\label{table:foposrat_to_synte}
	\end{minipage}
\end{figure}

We embed the (standard model of) first-order arithmetic over both the rational and the natural numbers
in our language $\SyntE$---thereby addressing the first issue raised in \Cref{sec:outline:supremum}. 
Consequently, $\SyntE$ conservatively extends the standard assertion language of Floyd-Hoare logic (cf.\ \cite{winskel, loeckx1984foundations, Cook1978SoundnessAC}), enabling us to encode finite sequences of both rationals and naturals in $\SyntE$
by means of G\"odelization~\cite{goedel_beta}.

Recall from \autoref{tab:metavariables} that we use, \eg metavariables $\BBa$, $\BBb$ for Boolean expressions, $\pstate$ for program states, and so on and we will omit providing the types in order to unclutter the presentation.
\subsection{Embedding First-Order Arithmetic in $\boldsymbol{\SyntE}$}
We denote by $\FOArithPosRats$ the set of formulas $\PP$ in \emph{first-order arithmetic} over $\PosRats$, \ie the extension of Boolean expressions $\BB$
(see \Cref{sec:syntax:bool}) by an existential quantifier $\exists x\colon \PP$ and a universal quantifier $\forall x\colon P$ with the usual semantics, \eg
$\sem{\forall x\colon \PP}{\pstate}{} = \true$ 
iff for all $r \in \PosRats$, 
$\sem{\PP}{\pstate\statesubst{x}{r}}{} = \true$.
The set $\FOArithNats$ of formulas $\PP$ in first-order arithmetic over $\Nats$ is defined analogously by
restricting ourselves to
(1) states\footnote{Program states serve here the role of \emph{interpretations} in classical first-order logic.} $\pstate\colon \Vars \to \Nats$
and (2) constants in $\Nats$ rather than $\PosRats$.
For simplicity, \emph{we assume without loss of generality that all formulas $\PP$ are in prenex normalform}, \ie $\PP$ is a Boolean expression comprising of a block of quantifiers followed by a quantifier-free formula.
%
%
Recall that program states originally evaluate variables to \emph{rationals}.
Since our expressiveness proof requires encoding sequences of \emph{naturals}, 
it is crucial that we can assert that a variable evaluates to a natural.
To this end, we adapt a result by \citet{robinson_define_z}:
%
%
%
%
\begin{lemma}\label{lem:nats_definable}
	$\Nats$ is definable in $\FOArithPosRats$, i.e.~there exists a formula
	$\isNat(x) \in \FOArithPosRats$, such that for all $\sigma$,
	\begin{align*}
        	\sem{\isNat(x)}{\pstate}{} \eeq \true \qqiff \pstate(x) \iin \Nats~.
	\end{align*}%
\end{lemma}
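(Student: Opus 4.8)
The plan is to reduce the statement to the classical theorem of Robinson~\cite{robinson_define_z}: there is a first-order formula $\zeta(x)$ in the language of rings $\{0,1,+,\cdot\}$ that defines $\Ints$ inside the field $\Rats$, i.e. $\Rats \models \zeta(q)$ iff $q \in \Ints$, for every $q \in \Rats$. Since $\Nats = \Ints \cap \PosRats$ and program states evaluate variables into $\PosRats = \Rats_{\geq 0}$, defining integrality already suffices: restricting $\zeta$ to our non-negative domain automatically discards the negative integers and isolates exactly $\Nats$ (note that $0$ lies in both $\PosRats$ and $\Ints$). Hence it is enough to transport $\zeta$ into our language $\FOArithPosRats$, whose quantifiers range over $\PosRats$ rather than over all of $\Rats$.

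The key step is to simulate quantification over $\Rats$ by quantification over $\PosRats$. First I would bring $\zeta$ into prenex form $Q_1 z_1 \cdots Q_m z_m\, \psi$, where $\psi$ is a Boolean combination of polynomial equations $p(x, z_1, \ldots, z_m) = 0$ with integer coefficients; this is exactly the prenex normal form already assumed WLOG for $\FOArithPosRats$. Since every rational is a difference of two non-negative rationals and, conversely, $a - b \in \Rats$ for all $a,b \in \PosRats$, the map $(a,b) \mapsto a-b$ is a surjection $\PosRats^2 \to \Rats$. I therefore replace each $\Rats$-quantifier by a pair of $\PosRats$-quantifiers under the substitution $z_i \mapsto a_i - b_i$: every $\exists z_i$ becomes $\exists a_i \exists b_i$ and every $\forall z_i$ becomes $\forall a_i \forall b_i$, which is directly expressible because the universal quantifier of $\FOArithPosRats$ already ranges over $\PosRats$.

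It then remains to rewrite each atomic equation. Substituting $z_i \mapsto a_i - b_i$ into $p(x, z_1, \ldots, z_m) = 0$ and expanding yields a polynomial identity in $x, a_1, b_1, \ldots, a_m, b_m$ with integer coefficients; collecting monomials and moving every negatively-signed monomial to the opposite side produces an equivalent equation $s = t$ in which $s$ and $t$ are built solely from $+$, $\cdot$, and non-negative rational constants, i.e. terms of $\Terms$. Such an equality $s = t$ is precisely the syntactic sugar introduced in \Cref{sec:syntax:bool}, so each atom, and hence all of $\psi$, lies in $\Bools$ (the primitives $\monus$ and $<$ are available but play no essential role here). Prefixing the simulated quantifier block gives a formula $\isNat(x) \in \FOArithPosRats$, and by the faithfulness of the difference-encoding its value under a state $\pstate$ equals the truth value of Robinson's $\zeta$ at $\pstate(x)$; thus $\sem{\isNat(x)}{\pstate}{} = \true$ iff $\pstate(x) \in \Ints$, which within $\PosRats$ means $\pstate(x) \in \Nats$, as required.

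The main obstacle is verifying that the translation of the ring-theoretic atoms is genuinely semantics-preserving. Robinson's polynomial identities use subtraction and negative coefficients, whereas $\Terms$ offers only non-negative constants and truncated subtraction, so one must check carefully that rearranging each equation into the normal form $s = t$ over $\{+,\cdot\} \cup \PosRats$ is correct under the substitution $z_i = a_i - b_i$, and that interpreting the enclosing quantifiers over $\PosRats$ instead of $\Rats$ loses no information. Once this bookkeeping is discharged, correctness of $\isNat$ follows immediately from Robinson's theorem.
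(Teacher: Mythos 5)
Your proof is correct, but it takes a genuinely different route from the paper's. The paper also starts from Robinson's definability of $\Ints$ in $\Rats$, but rather than re-encoding quantifiers it takes Robinson's concrete formula $A(k)$ and simply restricts every quantifier in it to range over $\PosRats$, yielding a formula $A'(k) \in \FOArithPosRats$ verbatim. The restricted $A'$ is not a priori equivalent to $A$ on non-negative arguments (its outer universal ranges over fewer parameters, while the hypothesis $B'$ sitting in negative position is weaker than $B$), so the paper cannot invoke Robinson's theorem as a black box: it descends into Robinson's proof and verifies (i) that the existentially quantified witnesses occur only squared, so restricting them to $\PosRats$ does not change the inner formula $\Phi$, and (ii) that Robinson's derivation of $k \in \Ints$ from $A(k)$ instantiates the universally quantified parameters only at certain natural numbers (primes), and for those instances the weaker hypothesis $A'(k')$ still yields the required facts. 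Your difference-encoding interpretation $z_i \mapsto a_i - b_i$ of $\Rats$ inside $\PosRats$ avoids this white-box analysis entirely: Robinson's result enters only through its statement, at the cost of doubling the quantifiers and of the (routine but necessary) rewriting of every atom into a subtraction-free equation $s = t$ with $s, t \in \Terms$---exactly the bookkeeping obligation you flag. Both arguments are sound; yours is more modular, since any definability result over $\Rats$ transfers to $\FOArithPosRats$ by the same interpretation, while the paper's yields a syntactically leaner $\isNat(x)$ and avoids the atom rewriting and quantifier duplication.
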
%
%
%
\noindent
We use the above assertion $\isNat$ to first embed $\FOArithNats$ in $\FOArithPosRats$.
Thereafter, we embed $\FOArithPosRats$ in 
$\SyntE$.
Embedding a formula $\PP \in \FOArithNats$ in $\FOArithPosRats$ amounts to
(1) asserting $\isNat(x)$ for every $x \in \FV{\PP}$ and (2) guarding every quantified variable $x$ in $\PP$ with $\isNat(x)$, \ie whenever we attempt to evaluate the embedding-formula for non-naturals, we default to $\false$---see~\Cref{table:fonat_to_foposrat} for a formal definition.
\begin{theorem}\label{thm:fo_rats_subsumes_fo_nats}
    Let $\toFOPosRats{\PP} \in \FOArithNats$ be the embedding of 
    $\PP \in \FOArithNats$ as defined in \textnormal{\Cref{table:fonat_to_foposrat}}.
    Then, for all $\pstate$,
	\begin{align*}
    \sem{\toFOPosRats{P}}{\sigma}{}
	\eeq
	\begin{cases}
    \sem{\PP}{\pstate}{}, & \textnormal{if } \pstate(x) \in \Nats \textnormal{ for all } x \in \FV{\PP}, \\
    \false, & \textnormal{otherwise}~.
	\end{cases}
	\end{align*}
	%
\end{theorem}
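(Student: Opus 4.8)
The plan is to prove the statement by induction on the structure of the prenex formula $\PP$ --- equivalently, on the length of its quantifier prefix --- following the three defining cases of \Cref{table:fonat_to_foposrat}. The only nontrivial external ingredient will be \Cref{lem:nats_definable}, which supplies a formula $\isNat(x)$ with $\sem{\isNat(x)}{\pstate}{} = \true$ iff $\pstate(x) \in \Nats$. The guiding intuition is that the $\isNat$-conjuncts (in the matrix) and the $\isNat$-guards (on universal quantifiers) jointly collapse the $\PosRats$-range of every quantifier back to $\Nats$, so that $\toFOPosRats{\PP}$ behaves exactly like $\PP$ whenever its free variables are natural, and evaluates to $\false$ as soon as one of them is not.

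For the base case $\PP = \BB$ with $\FV{\BB} = \{x_1, \ldots, x_n\}$ and $\toFOPosRats{\BB} = \BB \wedge \isNat(x_1) \wedge \ldots \wedge \isNat(x_n)$, I would argue as follows. If $\pstate(x_i) \in \Nats$ for all $i$, then by \Cref{lem:nats_definable} each conjunct $\isNat(x_i)$ is true, so the conjunction reduces to $\sem{\BB}{\pstate}{}$; I then observe that on all-natural inputs the operations $+$, $\cdot$, $\monus$ and the comparison $<$ stay within $\Nats$ and coincide with the corresponding operations of $\FOArithNats$, so this $\PosRats$-evaluation of $\BB$ equals its intended $\Nats$-evaluation. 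If instead some $\pstate(x_i) \notin \Nats$, the conjunct $\isNat(x_i)$ is false and the whole expression is $\false$.

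For the inductive steps I would treat the existential and the universal quantifier separately, carrying through both the ``if natural'' and the ``otherwise $\false$'' branch. In the existential case $\PP = \exists x \colon \PP'$ the embedding carries no explicit guard; this is justified because, by the induction hypothesis, $\sem{\toFOPosRats{\PP'}}{\pstate\statesubst{x}{r}}{}$ already vanishes for every non-natural $r$, so any witness over $\PosRats$ is automatically natural and the $\PosRats$-existential collapses to the $\Nats$-existential. In the universal case $\PP = \forall x \colon \PP'$ the explicit guard $\isNat(x) \longrightarrow$ is indispensable: I would split the range $r \in \PosRats$ into $r \notin \Nats$, where the implication is vacuously true, and $r \in \Nats$, where the induction hypothesis turns the body into $\sem{\PP'}{\pstate\statesubst{x}{r}}{}$, so the $\PosRats$-universal reduces to the $\Nats$-universal. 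For the ``otherwise'' branch I would note that a non-natural free variable $y \neq x$ of $\PP$ is still a free variable of $\PP'$, so the induction hypothesis forces $\toFOPosRats{\PP'}$ to $\false$ under every $\pstate\statesubst{x}{r}$; for the existential this makes all disjuncts false, and for the universal it suffices to exhibit the single natural witness $r = 0$ (for which the guard holds) to falsify the quantification.

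The hard part will not be any single calculation but getting the two branches to interlock correctly through the quantifiers, and in particular recognizing why the existential/universal asymmetry in \Cref{table:fonat_to_foposrat} is forced: the existential needs no guard precisely because naturality is inherited from the matrix via the induction hypothesis, whereas the universal must re-impose it explicitly, since otherwise it would wrongly quantify over non-natural reals. The one genuinely semantic point, deserving care despite looking routine, is the base-case claim that evaluating a quantifier-free $\BB$ under the $\PosRats$-rules of \Cref{fig:semantics:expressions} agrees with its $\FOArithNats$-semantics on natural inputs, which ultimately rests on $\Nats$ being closed under the arithmetic operations used.
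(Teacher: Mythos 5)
Your proposal is correct and follows essentially the same route as the paper's proof: structural induction on the prenex formula using Lemma~\ref{lem:nats_definable}, with the same base case and the same treatment of the two quantifier cases (the existential needing no guard because the induction hypothesis forces witnesses to be natural, the universal splitting on whether the quantified value is natural). Your explicit justifications—that the $\PosRats$- and $\Nats$-semantics of arithmetic agree on natural inputs, and that a natural witness such as $r=0$ falsifies the guarded universal in the ``otherwise'' branch—are points the paper treats more tersely, but the argument is the same.
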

%
%
%
%
%
\noindent{}%
Embedding a formula $\PP \in \FOArithPosRats$ into $\SyntE$ amounts to (1) taking its Iverson bracket for every Boolean
expression and (2) substituting the quantifiers $\exists/\forall$ by their quantitative analogs  $\Sup/\Inf$, see \Cref{table:foposrat_to_synte}.
\begin{theorem}\label{thm:exp_subsumes_fo_rats}
    Let $\iverson{\PP} \in \SyntE$ be the embedding of $\PP \in \FOArithPosRats$ 
    as defined in \textnormal{\Cref{table:foposrat_to_synte}}.
    Then, for all $\pstate$,
	%
	%
	\begin{align*}
	\sem{\iverson{\PP}}{\sigma}{} \eeq 
	\begin{cases}
	1, &\text{if}~ \sem{\PP}{\sigma}{} = \true \\
	0, & \text{if}~ \sem{\PP}{\sigma}{} = \false~.
	\end{cases}
	\end{align*}
\end{theorem}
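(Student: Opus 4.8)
The plan is to proceed by structural induction on the formula $\PP \in \FOArithPosRats$, peeling off one leading quantifier at a time. Since by assumption $\PP$ is in prenex normal form, removing a quantifier leaves a formula $\PP'$ that is again in prenex normal form, so the induction is well-founded with a quantifier-free Boolean expression as its base case. Throughout, the key invariant I would maintain is that the embedding $\iverson{\PP'}$ is $\{0,1\}$-valued under every state, so that the real-valued suprema and infima occurring in the semantics of $\Sup$ and $\Inf$ collapse to Boolean disjunction and conjunction over the quantification domain.

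For the base case, $\PP$ is a Boolean expression $\BB$ and its embedding is $\iverson{\PP} = \iverson{\BB}$, which I read as the guarding $\iverson{\BB} \cdot 1$ of the constant arithmetic expression $1$ permitted by the grammar of \Cref{sec:syntax:exp}. Unfolding the semantics of guarding from \Cref{sec:semantics} then immediately yields $\sem{\iverson{\BB}}{\sigma}{} = 1$ when $\sem{\BB}{\sigma}{} = \true$ and $0$ otherwise, which is exactly the claim.

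For the inductive step, the crucial point is that by the induction hypothesis $\sem{\iverson{\PP'}}{\sigma\statesubst{\VV}{\RR}}{} \in \{0,1\}$ for every $\RR \in \PosRats$, taking value $1$ precisely when $\sem{\PP'}{\sigma\statesubst{\VV}{\RR}}{} = \true$. For $\PP = \exists\, \VV \colon \PP'$ the embedding is $\SupV{\VV} \iverson{\PP'}$, whose semantics is the supremum of this $\{0,1\}$-valued family over the nonempty index set $\PosRats$; this supremum equals $1$ iff some witness $\RR$ satisfies $\PP'$, matching $\sem{\exists\, \VV \colon \PP'}{\sigma}{} = \true$, and equals $0$ otherwise. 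Dually, for $\PP = \forall\, \VV \colon \PP'$ the embedding is $\InfV{\VV} \iverson{\PP'}$, whose semantics is the infimum of the same family; it equals $1$ iff every $\RR$ satisfies $\PP'$ and drops to $0$ as soon as some $\RR$ falsifies $\PP'$, matching the semantics of $\forall$. Here it is important to note that both the first-order quantifiers and the $\Sup/\Inf$ quantifiers range over the same domain $\PosRats$, so no domain mismatch can arise.

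I anticipate the proof is essentially routine, with no genuine obstacle. The two points deserving a line of care are (i) the identification of the bare Iverson bracket $\iverson{\BB}$ with the guarding construct $\iverson{\BB} \cdot 1$, and (ii) verifying that the continuous $\Sup/\Inf$ over the real-valued codomain cannot escape $\{0,1\}$ --- which is exactly what the induction hypothesis guarantees, so the suprema and infima reduce to the Boolean $\exists/\forall$. Because $\PosRats$ is nonempty, the empty-set conventions $\sup \emptyset = 0$ and $\inf \emptyset = \infty$ never intervene, so they require no separate treatment.
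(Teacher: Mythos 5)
Your proof is correct and follows essentially the same route as the paper's: structural induction on $\PP$ (with the base case handled by the semantics of guarding), and in the quantifier cases the observation that the embedded expectation is $\{0,1\}$-valued by the induction hypothesis, so the supremum over $\PosRats$ collapses to $\exists$ and the infimum to $\forall$. Your two points of extra care --- reading the bare bracket $\iverson{\BB}$ as $\iverson{\BB}\cdot 1$ within the grammar, and noting the matching quantification domains --- are sound refinements of details the paper's proof passes over silently.
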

\noindent{}%
Given $\PP(\VV_1, \ldots, \VV_n) \in \FOArithPosRats$, we often write $\toExp{\PP(\VV_1, \ldots, \VV_n)}$ instead of
$\toExp{\PP}(\VV_1, \ldots, \VV_n)$.
\subsection{Encoding Sequences of Natural Numbers}
The embedding of $\FOArithNats$ in our language $\SyntE$ of syntactic expectations
gives us access to a classical result by \citet{goedel_beta} for encoding finite sequences of naturals in a \emph{single} natural.
\begin{lemma}[\textnormal{\citet{goedel_beta}}]\label{lem:goedel_beta}
	There is a formula $\seqelem{\VV_1}{\VV_2}{\VV_3} \in \FOArithNats$ (with quantifiers) satisfying:
    For every finite sequence of natural numbers $n_0,\ldots, n_{k-1}$, there is a (G\"odel) number $\gnum \in \Nats$ 
    that encodes it, \ie
	for all $i \in \{0,\ldots, k-1 \}$ and all $m \in \Nats$, it holds that
	\[
	   \seqelem{\gnum}{i}{m} \eequiv \true \qqiff \quad m \eeq n_i~.
	\]
\end{lemma}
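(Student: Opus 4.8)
The plan is to realize the classical Gödel $\beta$-function construction inside $\FOArithNats$ and to check that each ingredient is first-order definable over $\Nats$. On the meta-level I set
\[
  \gBeta(a,b,i) \defeq a \bmod \bigl(1 + (i+1)\cdot b\bigr),
\]
the remainder of $a$ upon division by $1 + (i+1)\cdot b$. Since $\FOArithNats$ supplies $+$, $\cdot$, $<$, and quantification over $\Nats$, the graph ``$m = \gBeta(a,b,i)$'' is expressible as
\[
  \exists q\colon\ a \eeq q \cdot \bigl(1 + (i+1)\cdot b\bigr) + m \;\wedge\; m \LL 1 + (i+1)\cdot b.
\]
To pack the \emph{two} numbers $a,b$ into the \emph{single} code demanded by the statement I would use any first-order definable pairing, e.g.\ the Cantor pairing $\gPair(a,b)$ whose graph ``$z = \gPair(a,b)$'' reads $2\cdot z \eeq (a+b)\cdot(a+b+1) + 2\cdot a$. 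I then define
\[
  \seqelem{\VV_1}{\VV_2}{\VV_3} \;\defeq\; \exists a\,\exists b\colon\ \VV_1 \eeq \gPair(a,b) \;\wedge\; \VV_3 \eeq \gBeta(a,b,\VV_2),
\]
which, once the two graphs are inlined, is patently a formula of $\FOArithNats$.

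It remains to produce, for every finite sequence $n_0,\dots,n_{k-1}$, a code $\gnum$ such that $\gBeta(a,b,i) = n_i$ for the decoded $a,b$. First I fix a natural number $s$ with $s \geq k$ and $s \geq n_i$ for all $i$, and set $b \defeq s!$ and $m_i \defeq 1 + (i+1)\cdot b$. Granting that the $m_i$ are pairwise coprime (the crux, below), the Chinese Remainder Theorem furnishes $a \iin \Nats$ with $a \equiv n_i \pmod{m_i}$ for every $i < k$. Since $m_i \geq 1 + s! > s \geq n_i$, the residue $n_i$ is already reduced modulo $m_i$, so in fact $\gBeta(a,b,i) = n_i$. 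Setting $\gnum \defeq \gPair(a,b)$ and unfolding the definition of $\seqelem{\cdot}{\cdot}{\cdot}$ then gives $\seqelem{\gnum}{i}{m} \eequiv \true$ exactly when $m = \gBeta(a,b,i) = n_i$.

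The heart of the argument—and the step I expect to be the main obstacle—is pairwise coprimality of the moduli, together with the choice $b = s!$ that secures it. Suppose a prime $p$ divides both $m_i$ and $m_j$ with $i < j < k$. Then $p$ divides $m_j - m_i = (j-i)\cdot b$. If $p$ divides $b$, then $p \mid m_i = 1 + (i+1)\cdot b$ forces $p \mid 1$, which is absurd; hence $p$ does not divide $b$ and therefore divides $(j-i)$. But $0 < j - i \leq k-1 \leq s$, so $p \leq j - i \leq s$, whence $p \mid s! = b$—a contradiction. Thus no prime divides two distinct moduli, i.e.\ the $m_i$ are pairwise coprime, exactly as the CRT step requires.

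Finally, because the graph ``$m = \gBeta(a,b,i)$'' determines $m$ uniquely from $a,b,i$, the formula $\seqelem{\gnum}{i}{m}$ holds precisely for $m = n_i$ and fails otherwise, yielding the claimed equivalence $\seqelem{\gnum}{i}{m} \eequiv \true \qqiff m \eeq n_i$. Everything outside the coprimality lemma is routine first-order bookkeeping; note in particular that the factorial is used only in the \emph{meta-level} existence argument and never appears in the formula $\seqelemsymbol$ itself, so $\seqelemsymbol \iin \FOArithNats$ as required. Via Theorems~\ref{thm:fo_rats_subsumes_fo_nats} and~\ref{thm:exp_subsumes_fo_rats} this formula can subsequently be lifted into $\SyntE$ for use in the later sequence encodings.
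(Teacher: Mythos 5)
Your proposal is correct. Note, however, that the paper itself offers \emph{no} proof of this lemma: it is stated as a classical result of G\"odel (via citation) and used as a black box in the subsequent encodings, so there is no in-paper argument to compare against. What you have written is the standard textbook proof of G\"odel's $\beta$-function lemma---first-order definability of the graph of $a \bmod \bigl(1+(i+1)\cdot b\bigr)$, pairwise coprimality of the moduli $1+(i+1)\cdot s!$ secured by the factorial choice of $b$, and the Chinese Remainder Theorem---and every step checks out, including the two delicate points: the coprimality argument (a common prime would have to divide $j-i \leq s$ and hence $s!$, contradicting that it cannot divide $b$) and the bound $n_i \leq s < 1+s! \leq m_i$ ensuring the prescribed residues are already reduced. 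A further point you handle correctly is the mismatch between the classical two-parameter formulation ($a$ and $b$) and the single G\"odel number $\gnum$ demanded by the statement: folding a first-order definable Cantor pairing into $\seqelemsymbol$ resolves this, and is consistent with the paper's own separate use of pairing in Lemma~\ref{lem:pairing}. Since the pairing is a bijection and the remainder is unique, the resulting formula is functional in its third argument, which is exactly what the claimed equivalence requires.
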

%
\noindent
By Theorem~\ref{thm:exp_subsumes_fo_rats}, we also have an expectation $\iverson{\seqelem{\VV_1}{\VV_2}{\VV_3}}$ expressing $\seqelemsymbol$ in $\SyntE$.
\begin{example}[Factorials via Gödel]
The syntactic expectation below evaluates to the factorial $x!$:
\begin{align*}
   \facexp{x} \eeq& \Sup \VV \colon \Sup \gnum \colon \VV \cdot 
   \big[ \seqelem{\gnum}{0}{1} \wedge \seqelem{\gnum}{x}{\VV}  \\
   &\quad \wedge \forall \VU \colon \forall  \VW  \colon \bigl(\VU<x \wedge \seqelem{\gnum}{\VU}{\VW} \longrightarrow \seqelem{\gnum}{\VU+1}{\VW\cdot(\VU+1)}  \bigr)
    \big]~.
\end{align*}
For every state $\pstate$, the quantifier $\Sup \gnum$  
selects a sequence $n_0,n_1\ldots$ satisfying $n_{\sigma(x)} = \sigma(x)!$. 
The quantifier $\Sup \VV$ then binds $\VV$ to the value $n_{\sigma(x)} = \sigma(x)!$. Finally, 
by multiplying the $\{0,1\}$-valued expectation specifying the sequence by $\VV$, we get that $\sem{\facexp{x}}{\sigma}{\interpret} = \sigma(x)!$.\hfill$\triangle$
\end{example}
\noindent
To assign a \emph{unique} G\"odel number $\gnum$ to a sequence $n_0,\ldots,n_{k-1}$ of length $k$ we employ \emph{minimalization}, \ie we take the least suitable G\"odel number. Formally, we define the formula
\begin{align*}
   &\sequence{\gnum}{\VV} \\
    &\ddefeq
   \left( \forall \VU \colon \VU < \VV \longrightarrow \exists \VW \colon \seqelem{\gnum}{\VU}{\VW} \right) \\
   &\quad\qquad\wedge\big( \forall \gnum' \colon 
   \big(  \forall \VU \colon \VU<\VV \longrightarrow \exists \VW \colon \seqelem{\gnum}{\VU}{\VW} \wedge \seqelem{\gnum'}{\VU}{\VW} \big) \\
   &\quad\qquad\qquad \longrightarrow \gnum' \geq \gnum \big)~.
\end{align*}
%
For every $k$ and every sequence $n_0,\ldots,n_{k-1}$ of length $k$, we then define \emph{the} G\"odel number encoding the sequence $n_0,\ldots,n_{k-1}$ as the unique natural number $\seqnum{n_0,\ldots,n_{k-1}}$ satisfying
\[
    \sequence{\seqnum{n_0,\ldots,n_{k-1}}}{k} ~{}\wedge{}~ \bigwedge\limits_{i=0}^{k-1} \seqelem{\seqnum{n_0,\ldots,n_{k-1}}}{i}{n_i}~.
\]
\subsection{Encoding Sequences of Non-negative Rationals}
Recall that program states in $\pgcl$ map variables to values in $\PosRats$.
To encode sequences of program states, we thus
first lift G\"odel's encoding $\seqelem{\gnum}{i}{n}$ to uniquely encode sequences over $\PosRats$.
The main idea is to represent such a sequence by pairing two sequences over $\Nats$.
\begin{lemma}[Pairing Functions~\textnormal{\cite{cantor1878beitrag}}]\label{lem:pairing}
	There is a formula $\gPair(\VV_1, \VV_2, \VV_3) \in \FOArithNats$ satisfying: For every pair of natural numbers $(n_1, n_2)$, there is \emph{exactly one} natural number $n$ such that
	\[
	\gPair(n, n_1, n_2) \eequiv \true~.
	\]
\end{lemma}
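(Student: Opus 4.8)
The plan is to realize $\gPair$ by the classical Cantor pairing function
\[
  \pi(n_1,n_2) \eeq \frac{(n_1+n_2)\cdot(n_1+n_2+1)}{2} + n_2~,
\]
which is a bijection $\Nats \times \Nats \to \Nats$. The only obstacle to writing this directly as a formula in $\FOArithNats$ is the division by $2$, which is not among our arithmetic operations $+$, $\cdot$, $\monus$. I would circumvent this by clearing the denominator: instead of letting $\gPair(n,n_1,n_2)$ assert ``$n$ equals $\pi(n_1,n_2)$'', I define it to assert the equivalent division-free identity obtained by multiplying through by $2$. Concretely, set
\[
  \gPair(n,n_1,n_2) \defeq \bigl(\, 2 \cdot n \eeq (n_1+n_2)\cdot(n_1+n_2+1) + 2\cdot n_2 \,\bigr)~.
\]
This is a \emph{quantifier-free} Boolean expression built solely from $+$, $\cdot$, the constant $2$, and the (sugared) equality predicate from \Cref{sec:syntax:bool}; being trivially in prenex normal form, it is a legitimate formula of $\FOArithNats$.

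It then remains to verify the two halves of the ``exactly one'' claim, both in the meta-theory. For \emph{existence}, fix $(n_1,n_2)$ and abbreviate $s = n_1 + n_2$. The term $s\cdot(s+1)$ is a product of two consecutive naturals, hence even; adding the even number $2\cdot n_2$ keeps the right-hand side even, so it equals $2\cdot n$ for some $n \in \Nats$, which witnesses $\gPair(n,n_1,n_2) \equiv \true$. For \emph{uniqueness}, if both $n$ and $n'$ satisfy the defining equation for the same $(n_1,n_2)$, then $2\cdot n = 2\cdot n'$, whence $n = n'$ by cancellation in $\Nats$. Thus for every pair there is precisely one code $n$, as required by the statement.

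I expect no genuine difficulty here: the parity argument that makes the right-hand side even lives entirely outside the object language, so it never has to be formalized inside $\FOArithNats$---the formula itself only records the multiplied-out equation. The single point worth stating carefully is that all constants and operations appearing in $\gPair$ indeed belong to the syntax of $\Terms$ and $\Bools$ fixed in \Cref{sec:syntax:terms}~and~\Cref{sec:syntax:bool}, so that $\gPair$ is a bona fide $\FOArithNats$ formula. (If injectivity of the map $(n_1,n_2)\mapsto n$ were needed downstream---it is implicit in Cantor's function being a bijection---I would record it separately, but the statement as phrased asks only for functionality, which the construction above already delivers.)
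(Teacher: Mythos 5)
Your proof is correct, but be aware that the paper offers no proof to compare against: Lemma~\ref{lem:pairing} is stated with a citation to Cantor and used as a classical black box, its only role being the definition of $\rseqelemsymbol$ in the proof of Theorem~\ref{thm:rho_expectation}. Your explicit construction fills that gap soundly: clearing the denominator of Cantor's polynomial gives the quantifier-free formula $2 \cdot n = (n_1+n_2)\cdot(n_1+n_2+1) + 2\cdot n_2$, which uses only natural constants, $+$, $\cdot$, and the sugared equality of \Cref{sec:syntax:bool}, hence is a legitimate (trivially prenex) member of $\FOArithNats$; existence of a code holds because a product of two consecutive naturals is even, and uniqueness holds by cancelling the factor $2$. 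One point you relegate to a parenthesis deserves promotion: the statement only promises that every pair has exactly one code, not that a code determines its pair, yet the paper's proof of Theorem~\ref{thm:rho_expectation} instantiates the existential $\exists n, n_1, n_2 \colon \gPair(n,n_1,n_2) \wedge \seqelem{\VV_1}{\VV_2}{n} \wedge \ldots$ and then argues the witnesses are unique --- which needs exactly that converse (injectivity) direction. A formula meeting only the literal statement, such as $\gPair(n,n_1,n_2) \defeq (n = n_1)$, would make the lemma true but break that later argument. Your Cantor-based formula provides injectivity for free, since Cantor's map is a bijection, so your construction is strong enough for the downstream use; recording that property explicitly, as you suggest, is the right call.
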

%
%
\begin{theorem}
	\label{thm:rho_expectation}
	There is a formula $\rseqelem{\VV_1}{\VV_2}{\VV_3} \in \FOArithPosRats$ satisfying:
	For every finite sequence $\RR_0, \ldots, \RR_{k-1} \subset \PosRats$  there is a 
	G\"odel number $\gnum$, such that for all $i \in \{ 0,\ldots, k-1 \}$ and $s \in \PosRats$,
	\[
	\rseqelem{\gnum}{i}{s} \eequiv \true \quad \text{iff} \quad s \eeq r_i~.
	\]
\end{theorem}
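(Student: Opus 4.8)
The plan is to reduce the encoding of a sequence of rationals to the already-available encoding of sequences of naturals, by representing each $\RR_j \in \PosRats$ through a numerator/denominator pair and pairing the two resulting Gödel numbers into one. Concretely, I would write $\RR_j = \VW_j / \VU_j$ with $\VW_j, \VU_j \in \Nats$ and $\VU_j \geq 1$. By \Cref{lem:goedel_beta} there is a natural $a$ that $\seqelemsymbol$-encodes the numerator sequence $\VW_0, \ldots, \VW_{k-1}$ and a natural $b$ that $\seqelemsymbol$-encodes the denominator sequence $\VU_0, \ldots, \VU_{k-1}$; by \Cref{lem:pairing} there is a (unique) natural $\gnum$ with $\gPair(\gnum, a, b) \eequiv \true$. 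This $\gnum$ is taken as the Gödel number of the rational sequence.

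For the decoding predicate I would lift the natural-number formulas $\gPair$ and $\seqelemsymbol$ into $\FOArithPosRats$ using the embedding of \Cref{thm:fo_rats_subsumes_fo_nats}, and then tie numerator and denominator to $s$ by an equation over the rationals:
\begin{align*}
\rseqelem{\gnum}{i}{s} \ddefeq{} & \mexists a, b, \VW, \VU \colon \ \toFOPosRats{\gPair(\gnum, a, b)} \,\wedge\, \toFOPosRats{\seqelem{a}{i}{\VW}} \\
& \quad {}\wedge\, \toFOPosRats{\seqelem{b}{i}{\VU}} \,\wedge\, \VU \geq 1 \,\wedge\, s \cdot \VU = \VW~.
\end{align*}
By \Cref{thm:fo_rats_subsumes_fo_nats}, each embedded conjunct already defaults to $\false$ unless its free variables evaluate to naturals, so the quantified witnesses $a, b, \VW, \VU$ (as well as $\gnum$ and $i$) are automatically forced into $\Nats$; the only genuinely rational content left is the equation $s \cdot \VU = \VW$ together with the guard $\VU \geq 1$. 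In particular $\rseqelemsymbol \in \FOArithPosRats$, as required.

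The verification of the claimed equivalence, for a fixed $i \in \{0, \ldots, k-1\}$, is where the only real care is needed, and it rests on single-valuedness. Because $\gPair$ is functional in its first argument (\Cref{lem:pairing} gives a bijective pairing), the witnesses $a, b$ are forced to be exactly the two components paired into $\gnum$; because $\seqelem{a}{i}{\cdot}$ and $\seqelem{b}{i}{\cdot}$ determine a unique value at the index $i < k$ (\Cref{lem:goedel_beta}), the witnesses $\VW, \VU$ are forced to $\VW_i$ and $\VU_i$. Hence $\rseqelem{\gnum}{i}{s}$ holds iff $s \cdot \VU_i = \VW_i$ with $\VU_i \geq 1$, i.e.\ iff $s = \VW_i / \VU_i = \RR_i$, yielding both directions of the iff simultaneously; the forward direction is witnessed by exactly the $a, b, \VW_i, \VU_i$ used in the construction.

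The step I expect to be the main (though modest) obstacle is precisely this bi-implication rather than a one-sided inclusion: one must rule out a spurious satisfying assignment with denominator $\VU = 0$ (handled by the explicit guard $\VU \geq 1$, which is consistent with the construction always choosing $\VU_j \geq 1$), and one must argue that the existential witnesses are pinned down uniquely so that the rational equation recovers $\RR_i$ and nothing else. Everything else—the decomposition of a positive rational into a natural numerator and a nonzero natural denominator, and the bookkeeping that keeps the whole formula inside $\FOArithPosRats$ via \Cref{thm:fo_rats_subsumes_fo_nats}—is routine.
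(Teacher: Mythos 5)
Your proof is correct, and it follows the same high-level strategy as the paper---reduce sequences of rationals to sequences of naturals via numerator/denominator pairs, G\"odel's $\beta$-encoding (\Cref{lem:goedel_beta}), and the pairing predicate (\Cref{lem:pairing})---but the decomposition is arranged differently. The paper pairs \emph{element-wise}: each $r_i$ is represented by a canonical pair $(n_{i,1}, n_{i,2})$ (coprime, or $(0,1)$ when $r_i = 0$), each such pair is collapsed into a single natural via $\gPair$, and the resulting sequence of codes is $\beta$-encoded into one G\"odel number; the paper's formula then extracts the $i$-th code, unpairs it, and requires $n_2 \cdot s = n_1$ together with the disjunct $\relPrime{n_1}{n_2} \vee (n_1 = 0 \wedge n_2 = 1)$. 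You instead pair at the \emph{sequence level}: $\beta$-encode the numerator sequence into $a$ and the denominator sequence into $b$, then pair once via $\gPair(\gnum, a, b)$. Your guard $u \geq 1$ plays exactly the role of the paper's canonicity disjunct: it excludes degenerate denominator-$0$ assignments under which $s \cdot u = w$ would hold either for every $s$ or for none. What the paper's coprime representatives buy in addition is uniqueness of the representation itself, which is convenient for (though not strictly required by) the subsequent minimalization-based definition of \emph{the} G\"odel number of a sequence; your representatives, and hence your $\gnum$, are non-canonical, which is fine for the theorem as stated and remains compatible with the later $\rsequencesymbol$ predicate, since minimalization restores uniqueness there. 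Finally, note that both arguments rely on the same implicit strengthening of \Cref{lem:pairing}: not only does every pair have exactly one code, but every code decodes to at most one pair (injectivity of the pairing); your appeal to bijectivity is therefore no more than what the paper's own collapse of the existential witnesses already assumes.
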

%
%
%
%
\begin{example}[Harmonic Numbers]
	\label{ex:harmonic}
    For every $\pstate$ with $\pstate(x) = k \in \Nats$, the expectation $\harmexp{x} \in \SyntE$ below evaluates to the $k$-th 
    harmonic number $\mathcal{H}(k) \eeq \sum_{i=1}^{k} \frac{1}{i}$.
	\begin{align*}
	\harmexp{x} \eeq& \Sup \VV \colon \Sup \gnum \colon \VV \cdot 
	\big[ \rseqelem{\gnum}{0}{0} \wedge \rseqelem{\gnum}{x}{\VV} \\
	&\quad \wedge \forall \VU \colon \forall \VW \colon (\VU<x \wedge  \rseqelem{\gnum}{\VU}{\VW}) \\ 
	& \qquad \quad \longrightarrow \exists \VW' \colon \VW' \cdot (\VU+1) = 1 \wedge \rseqelem{\gnum}{\VU+1}{\VW +\VW'} \big] ~
	\end{align*}
    Notice that the above Iverson bracket evaluates to $1$ on state $\sigma$ iff $\sigma(\gnum)$ encodes  a sequence $r_0,r_1,\ldots, r_{\sigma(x)}$ such that $\sigma(\VV) = r_{\sigma(x)}$ and
	\[
	r_0 \eeq 0~, ~r_1 = \frac{1}{1} + r_0,~~ ~r_2 = \frac{1}{2} + r_1~, ~\ldots~, r_{\sigma(x)} \eeq \frac{1}{\sigma(x)} + r_{\sigma(x) -1}~.
	\]   
   By Theorem~\ref{thm:fo_rats_subsumes_fo_nats}, we do not need to require that $\sigma(u) \in \Nats$ as $\rseqelem{\gnum}{i}{w}$ is $\false$ if $\sigma(u) \not\in \Nats$.\hfill$\triangle$
\end{example}
\noindent
Analogously to the previous section, we define a predicate $\rsequence{\gnum}{\VV}$ that uses minimalization to 
a \emph{unique} G\"odel number $\gnum$ for every sequence $r_0,\ldots,r_{k-1}$ of length $k$;
the only difference between $\rsequence{\gnum}{\VV}$ and $\sequence{\gnum}{\VV}$ is that every occurrence of
$\seqelem{.}{.}{.}$ is replaced by $\rseqelem{.}{.}{.}$.
%
%
%
Moreover, for every $k$ and every sequence $r_0,\ldots,r_{k-1}$, we define \emph{the} G\"odel number encoding the sequence $r_0,\ldots,r_{k-1}$ as the unique natural number $\seqnum{r_0,\ldots,r_{k-1}}$ satisfying
\[
\rsequence{\seqnum{r_0,\ldots,r_{k-1}}}{k} ~{}\wedge{}~ \bigwedge\limits_{i=0}^{k-1} \rseqelem{\seqnum{r_0,\ldots,r_{k-1}}}{i}{r_i}~.
\]
\subsection{Encoding Sequences of Program States}
\label{sec:encoding_states}
To encode sequences of program states, we first fix a finite set $\varseq{x} = \{x_0,\ldots,x_{k-1}\}$ of \emph{relevant variables}.
Intuitively, \varseq{x} consists of all variables that appear in a given program or a postexpectation. We define an equivalence relation $\equivstatesrel{\varseq{x}}$ on states by 
\[
\equivstates{\sigma_1}{\varseq{x}}{\sigma_2}
\qquad
\text{iff}
\qquad 
\forall x \in \varseq{x}\colon \sigma_1(x) = \sigma_2(x)~.
\]

Every $\gnum$ satisfying $\sequence{\gnum}{k}$ encodes \emph{exactly one} state $\pstate$ (modulo $\equivstatesrel{\varseq{x}}$).
The G\"odel number encoding~$\sigma$ (w.r.t.\ $\varseq{x}$), which we denote by $\seqnum{\sigma}_{\varseq{x}}$, is then the unique number satisfying
\[
\rsequence{\seqnum{\sigma}_{\varseq{x}}}{k} ~{}\wedge{}~\bigwedge\limits_{i=0}^{k-1}
\rseqelem{\seqnum{\sigma}_{\varseq{x}}}{i}{\sigma(x_i)}~.
\]
\noindent
Notice that we implictly fixed an ordering of the variables in $\varseq{x}$ to identify each value stored in~$\pstate$
for a variable in $\varseq{x}$.
The formula
%
\[
\encodesstate{x}{\gnum} \ddefeq \rsequence{num}{k} ~{}\wedge{}~\bigwedge\limits_{i=0}^{k-1}
\rseqelem{\gnum}{i}{x_i}
\]
\noindent
evaluates to $\true$ on state $\sigma$ iff $\sigma(\gnum)$ is the G\"odel number of a state $\sigma'$ with $\equivstates{\sigma}{\varseq{x}}{\sigma'}$.
Now, let $\sigma_0,\ldots,\sigma_{n-1}$ be a sequence of states of length $n$. 
\emph{The} G\"odel number encoding $\sigma_0,\ldots,\sigma_{n-1}$ (w.r.t.\ $\varseq{x}$), which we denote by $\stateseqnum{\sigma_0,\ldots,\sigma_{n-1}}_{\varseq{x}}$, is then the unique number satisfying
\[
\sequence{\stateseqnum{\sigma_0,\ldots,\sigma_{n-1}}_{\varseq{x}}}{n} ~{}\wedge{}~\bigwedge\limits_{i=0}^{n-1}
\seqelem{\stateseqnum{\sigma_0,\ldots,\sigma_{n-1}}_{\varseq{x}}}{i}{\seqnum{\sigma_i}_{\varseq{x}}}~.
\]
We are now in a position to encode sequences of states. 
The formula
\begin{align*}
&\stateseq{x}{\gnum}{\VV}  \\
\eeq& \sequence{\gnum}{\VV} \wedge \left( \exists \VV' \colon \seqelem{\gnum}{0}{\VV'} \wedge \encodesstate{x}{\VV'} \right) \\
&\wedge \forall \VU \colon \forall \VV' \colon \left( ( \VU <\VV \wedge\seqelem{\gnum}{\VU}{\VV'} )\longrightarrow \rsequence{\VV'}{k} \right)
\end{align*}
evaluates to $\true$ on state $\pstate$ iff (1) $\gnum$ is the G\"odel number of some sequence $\pstate_0,\ldots,\sigma_{\pstate(\VV-1)} \in \States$ of states of length $\pstate(\VV)$ and where (2) $\sigma$ and $\sigma_0$ coincide on all variables in $\varseq{x}$, \ie $\equivstates{\sigma}{\varseq{x}}{\sigma_0}$.
Notice that, for every sequence $\sigma_0,\ldots,\sigma_{n-1}$ of states of length $n$, there is \emph{exactly one} $\gnum$ satisfying $\stateseq{x}{\gnum}{n}$.
If clear from the context, we often omit the subscript $\varseq{x}$ and simply write $\seqnum{\sigma}$ (resp.\ $\stateseqnum{\sigma_0,\ldots,\sigma_{n-1}}$) instead of $\seqnum{\sigma}_{\varseq{x}}$ (resp.\ $\stateseqnum{\sigma_0,\ldots,\sigma_{n-1}}_{\varseq{x}}$).


%
\section{The Dedekind Normal Form}
\label{sec:normalforms}
Before we encode sums and products of non-constant size in $\SyntE$---as required to deal with the challenges in~\Cref{sec:outline:sum,sec:outline:binary-product,sec:outline:product}---we introduce a normal form that gives a convenient handle to 
encode real numbers as syntactic expectations.

As a first step, we transform syntactic expectations into prenex normal form, \ie we rewrite every $\FF \in \SyntE$ into an equivalent syntactic expectation of the form 
$\Quant_1 \VV_1 \ldots \Quant_k \VV_k  \colon \FF'$,
where $\Quant_i \in \{\Sup, \Inf\}$ and $\FF'$ is \enquote{quantifier}--free, \ie contains neither \Sup{} nor \Inf.
%
%
The following lemma justifies that any expectation can indeed be transformed into an equivalent one in prenex normal form
by iteratively pulling out quantifiers.
In case the quantified logical variable already appears in the expectation the quantifier is pulled over, we rename it by a fresh one first.
\begin{lemma}[Prenex Transformation Rules]
\label{thm:prenex-rules}
 For all \FF, $\FF_1$, $\FF_2 \in \SyntE$, terms $\TT$, and Boolean expressions $\BB$, quantifiers $\Quant \in \{\Sup, \Inf\}$, and fresh logical variables $\VV'$, the following equivalences hold:
\begin{enumerate}
	\item\label{thm:prenex-rules:plus-left} 
	$(\Quant \VV\colon \FF_1) \pplus \FF_2 \eequiv \Quant \VV'\colon \FF_1\subst{\VV}{\VV'} \pplus \FF_2$, 
	\item\label{thm:prenex-rules:plus-right} 
	$\FF_1 \pplus (\Quant \VV\colon \FF_2) \eequiv \Quant \VV'\colon \FF_1 \pplus \FF_2\subst{\VV}{\VV'}$, 
	\item\label{thm:prenex-rules:mult-term} 
	$\TT \ccdot \Quant \VV\colon \FF  \hspace*{1.555em} \eequiv  \Quant \VV'\colon \TT \cdot \FF\subst{\VV}{\VV'}$
	\quad and \quad
	$(\Quant \VV\colon \FF) \ccdot \TT  \eequiv  \Quant \VV'\colon (\FF\subst{\VV}{\VV'} \cdot \TT)$,
	\item\label{thm:prenex-rules:mult-guard} 
	$\iverson{\BB} \ccdot \Quant \VV\colon \FF \hspace*{.49em} \eequiv   \Quant \VV'\colon \iverson{\BB} \cdot \FF\subst{\VV}{\VV'}$
	\quad and \quad
	$(\Quant \VV\colon \FF) \ccdot \iverson{\BB}  \eequiv  \Quant \VV'\colon (\FF\subst{\VV}{\VV'} \cdot \iverson{\BB})$.
\end{enumerate}
\end{lemma}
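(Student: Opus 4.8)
The plan is to prove all four equivalences by fixing an arbitrary state $\pstate$, unfolding the semantics $\sem{\cdot}{\pstate}{\interpret}$ on both sides, and reducing each identity to a distributivity law of addition and of scaling over suprema and infima in the complete lattice $\PosRealsInf$. Throughout, $\Quant$ ranges over $\{\Sup,\Inf\}$ and I write $\Quant_{\RR \in \PosRats}$ for the corresponding $\sup_\RR$ or $\inf_\RR$.

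First I would isolate the one genuinely semantic ingredient: a \emph{renaming fact}. For a fresh variable $\VV'$ (so $\VV' \notin \FV{\FF_1}$ and $\VV' \neq \VV$) and every $\RR \in \PosRats$,
\[
\sem{\FF_1\subst{\VV}{\VV'}}{\pstate\statesubst{\VV'}{\RR}}{\interpret} \eeq \sem{\FF_1}{\pstate\statesubst{\VV}{\RR}}{\interpret}~.
\]
This follows by instantiating \Cref{lem:substitution} with $\XX := \VV$ and $\TTa := \VV'$, which rewrites the left-hand side to $\sem{\FF_1}{\pstate\statesubst{\VV'}{\RR}\statesubst{\VV}{\RR}}{\interpret}$, and then observing that $\pstate\statesubst{\VV'}{\RR}\statesubst{\VV}{\RR}$ and $\pstate\statesubst{\VV}{\RR}$ agree on all of $\FV{\FF_1}$, since they differ only at $\VV'$, which is not free in $\FF_1$. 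A routine coincidence lemma---that $\sem{\FF_1}{\cdot}{\interpret}$ depends only on the free variables of $\FF_1$, proved by induction on $\FF_1$---then closes the gap. The same coincidence argument shows $\sem{\FF_2}{\pstate\statesubst{\VV'}{\RR}}{\interpret} = \sem{\FF_2}{\pstate}{\interpret}$ and that $\sem{\TT}{\pstate\statesubst{\VV'}{\RR}}{\interpret}$ and $\sem{\BB}{\pstate\statesubst{\VV'}{\RR}}{\interpret}$ are independent of $\RR$, as $\VV'$ is fresh.

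Next I would record the required lattice-algebraic laws in $\PosRealsInf$. For any constant $b \in \PosRealsInf$, any finite $c \in \PosRats$, and any nonempty family $\{a_\RR\}_{\RR \in \PosRats}$,
\[
b \pplus \Quant_{\RR} a_\RR \eeq \Quant_\RR\, (b + a_\RR)
\qquad\text{and}\qquad
c \ccdot \Quant_\RR a_\RR \eeq \Quant_\RR\, (c \cdot a_\RR)~,
\]
under the convention $0 \cdot \infty = 0$. The additive law is continuity of $x \mapsto b + x$: for finite $b$ it commutes with both $\sup$ and $\inf$, and for $b = \infty$ both sides equal $\infty$. The multiplicative law splits into $c = 0$---where both sides are $0$, crucially using $0 \cdot \infty = 0$ together with nonemptiness of the index set---and $c > 0$, where $x \mapsto c \cdot x$ is an order isomorphism of $\PosRealsInf$ and hence commutes with $\sup$ and $\inf$. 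This is exactly the point where the restriction to a \emph{finite} scalar $\TT$ (and to a $\{0,1\}$-valued guard $\iverson{\BB}$) is essential: the pathological reordering of \Cref{sec:note-on-f-times-f} is avoided precisely because we never scale a possibly-infinite quantified expectation by another possibly-infinite one.

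Finally, each of the four equivalences drops out by chaining these ingredients. For~(\ref{thm:prenex-rules:plus-left}), unfolding gives $\bigl(\Quant_\RR\, \sem{\FF_1}{\pstate\statesubst{\VV}{\RR}}{\interpret}\bigr) + \sem{\FF_2}{\pstate}{\interpret}$ on the left; on the right, the renaming fact turns the summand into $\sem{\FF_1}{\pstate\statesubst{\VV}{\RR}}{\interpret} + \sem{\FF_2}{\pstate}{\interpret}$, and the additive law identifies the two. Item~(\ref{thm:prenex-rules:plus-right}) follows from~(\ref{thm:prenex-rules:plus-left}) by commutativity of semantic addition, while~(\ref{thm:prenex-rules:mult-term}) and~(\ref{thm:prenex-rules:mult-guard}) follow identically via the multiplicative law, with $c := \sem{\TT}{\pstate}{\interpret} \in \PosRats$ for~(\ref{thm:prenex-rules:mult-term}) and with the $\{0,1\}$-valued factor determined by $\sem{\BB}{\pstate}{\interpret}$ for~(\ref{thm:prenex-rules:mult-guard}). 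I expect the main obstacle to lie not in the algebra but in the bookkeeping around freshness and the $0 \cdot \infty$ convention: one must ensure the coincidence lemma is in place and that the degenerate scalar case is dispatched explicitly, since it is exactly the case that fails for unrestricted products.
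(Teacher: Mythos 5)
Your proposal is correct and follows essentially the same route as the paper's own proof: fix a state, unfold the semantics, and discharge each rule via distributivity of $+$ and of scaling by a finite constant over $\sup$/$\inf$ in $\PosRealsInf$ (the paper's auxiliary Lemma~\ref{thm:prenex:aux}, which likewise restricts the scalar to $\PosReals$ and relies on $0 \cdot \infty = 0$), combined with renaming the bound variable to a fresh one. The only difference is one of rigor, in your favor: the paper asserts the renaming step with a bare ``$\VV'$ fresh'' annotation and packages the constant summand as a singleton supremum, whereas you justify renaming explicitly via Lemma~\ref{lem:substitution} plus a coincidence lemma and state the constant-plus-family law directly.
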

\noindent
The Dedekind normal form is motivated by the notion of \emph{Dedekind cuts}~\cite{bertrand1863traite}.
We denote by $\dcut{\alpha}$ the Dedekind cut of a real number, \ie the set of all rationals strictly smaller than $\alpha$. 
%
%
%
In the realm of \emph{all} reals, it is required that a Dedekind cut is neither  the empty set nor the whole set of rationals $\Rats$. 
However, since we operate in the realm of non-negative reals with infinity $\PosRealsInf$, we \emph{do} allow for both empty cuts and $\PosRats$. More formally, we define:
\begin{definition}
	Let $\alpha \in \PosRealsInf$.
	The \emph{Dedekind cut} $\dcut{\alpha} \subseteq \PosRats$ of $\alpha$ is defined as
	\[
		\dcut{\alpha} \ddefeq \setcomp{r \in \PosRats}{ r < \alpha}~.
	\]
	Furthermore, we define $\dcutzero{\alpha} \defeq \dcut{\alpha} \cup \{0\}$.
\end{definition}
\noindent
Dedekind cuts are relevant for our technical development as they allow to describe every real number $\alpha$ as a supremum over a set of rational numbers. 
In particular, the Dedekind cut $\dcut{0}$ of $0$ is the empty set with supremum $0$, and the Dedekind cut $\dcut{\infty}$ of $\infty$ is the set $\PosRats$ with supremum $\infty$.
Formally:
\begin{lemma} For every $\alpha \in \PosRealsInf$, we have $\alpha = \sup \dcut{\alpha}$. 
\end{lemma}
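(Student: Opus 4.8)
The plan is to proceed by a case distinction on the value of $\alpha \in \PosRealsInf$, treating the two extremal values $0$ and $\infty$ separately from the generic case $\alpha \in (0,\infty)$.

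First I would dispatch the boundary cases using the conventions on suprema recorded just above the lemma. If $\alpha = 0$, then $\dcut{0} = \setcomp{r \in \PosRats}{r < 0} = \emptyset$, and since $\sup \emptyset = 0$ we obtain $\sup \dcut{0} = 0 = \alpha$. If $\alpha = \infty$, then $\dcut{\infty} = \setcomp{r \in \PosRats}{r < \infty} = \PosRats$, which is unbounded, so $\sup \dcut{\infty} = \infty = \alpha$ by the convention that the supremum of an unbounded set is $\infty$.

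The remaining case is $\alpha \in \PosReals$ with $\alpha > 0$, where I would establish $\alpha = \sup \dcut{\alpha}$ by proving both inequalities. For $\sup \dcut{\alpha} \leq \alpha$, note that $\alpha$ is an upper bound of $\dcut{\alpha}$, since by definition every $r \in \dcut{\alpha}$ satisfies $r < \alpha$. For $\sup \dcut{\alpha} \geq \alpha$, I would show that no $\beta < \alpha$ is an upper bound: given such a $\beta$, density of $\Rats$ in $\Reals$ yields a rational $r$ with $\Max{\beta}{0} < r < \alpha$; then $r \in \PosRats$ and $r < \alpha$, so $r \in \dcut{\alpha}$, while $r > \beta$ witnesses that $\beta$ is not an upper bound. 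Hence $\alpha$ is the least upper bound, and combining the two inequalities gives $\sup \dcut{\alpha} = \alpha$.

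There is no genuine obstacle here: the statement is just the standard fact that a real number equals the supremum of its Dedekind cut, adapted to the non-negative extended reals. The only points requiring minor care are (i) keeping the cut inside $\PosRats$ rather than $\Rats$, which is harmless because in the generic case $\alpha > 0$ lets us choose the approximating rational strictly positive, and (ii) correctly invoking the ad hoc conventions $\sup \emptyset = 0$ and $\sup$ of an unbounded set $= \infty$, which are precisely what make the equality hold at the two extremes.
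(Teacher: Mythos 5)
Your proof is correct. The paper states this lemma without proof, treating it as the standard fact that a non-negative extended real equals the supremum of its Dedekind cut; your three-way case split ($\alpha = 0$, $\alpha = \infty$, and $\alpha \in (0,\infty)$ via density of the rationals, with the two bounding inequalities) is precisely the argument the paper implicitly relies on, and you invoke the paper's conventions $\sup \emptyset = 0$ and $\sup$ of an unbounded set being $\infty$ exactly where they are needed.
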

%
%
%
\begin{theorem}
	\label{thm:dedekind_nf}
    For every $\FF \in \SyntE$, there is a syntactic expectation in prenex normal form
    \[ \dexp{\FF} \eeq  \qprefixnoarg \colon \iverson{\BB}~, \]
    where $ \qprefixnoarg$ is the quantifier prefix,
    $\BB$ is an effectively constructible Boolean expression,
    and the free variable $\VVcut$ is fresh; we call $\dexp{\FF}$ the \emph{Dedekind normal form} of $\FF$.

    Moreover, for all program states $\pstate$, we have
	\[
			\sem{\dexp{\FF}}{\sigma}{\interpret} 
			\eeq
			\begin{cases}
			   1, & \text{if}~~\sigma(\VVcut) < \sem{f}{\sigma}{\interpret} \\
			   0, &\text{otherwise}~.
			\end{cases}
	\]
\end{theorem}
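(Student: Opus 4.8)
The plan is to prove, by induction on the structure of $\FF$, the following strengthened statement: there is a first-order arithmetic formula $\BB_\FF \in \FOArithPosRats$, with free variables among $\VVcut$ and the free variables of $\FF$, such that for every state $\pstate$ we have $\sem{\BB_\FF}{\pstate}{} = \true$ iff $\pstate(\VVcut) < \sem{\FF}{\pstate}{\interpret}$; that is, $\BB_\FF$ defines the Dedekind cut $\dcut{\sem{\FF}{\pstate}{\interpret}}$ evaluated at the rational $\pstate(\VVcut)$. Granting this, I set $\dexp{\FF} \ddefeq \iverson{\BB_\FF}$, which lies in $\SyntE$ by \Cref{thm:exp_subsumes_fo_rats}. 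Bringing the first-order formula $\BB_\FF$ into prenex normal form and then applying the embedding underlying \Cref{thm:exp_subsumes_fo_rats} (equivalently, prenexing the resulting expectation via \Cref{thm:prenex-rules}) yields the claimed shape $\qprefix{\FF} \colon \iverson{\BB}$ with $\BB$ quantifier-free, and the stated $\{0,1\}$-valued semantics is exactly the induction invariant. Working at the level of first-order formulas sidesteps the forbidden product $\FF \cdot \FG$: since all the pieces are $\{0,1\}$-valued, their products become conjunctions and their $\Sup$/$\Inf$ quantifications become $\exists$/$\forall$. Freshness of $\VVcut$ and of the auxiliary variables introduced below is maintained by renaming.

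Three of the cases are immediate. For an arithmetic expression I take $\BB_\TT \ddefeq (\VVcut < \TT)$. For guarding, $\sem{\iverson{\BB}\cdot\FG}{\pstate}{\interpret}$ equals $\sem{\FG}{\pstate}{\interpret}$ when $\BB$ holds and $0$ otherwise; since $\pstate(\VVcut) < 0$ never holds, I take $\BB_{\iverson{\BB}\cdot\FG} \ddefeq \BB \wedge \BB_\FG$. For the supremum I exploit that strict inequality commutes with suprema: $\pstate(\VVcut) < \sup_{\RR}\sem{\FG}{\pstate\statesubst{\XX}{\RR}}{\interpret}$ holds iff some $\RR$ witnesses $\pstate(\VVcut) < \sem{\FG}{\pstate\statesubst{\XX}{\RR}}{\interpret}$, so I take $\BB_{\SupV{\XX}\FG} \ddefeq \exists \XX \colon \BB_\FG$.

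The scaling and addition cases carry more content but are still routine. For scaling I must respect $0 \cdot \infty = 0$: when $\sem{\TT}{\pstate}{\interpret}=0$ the cut is empty, and when $\sem{\TT}{\pstate}{\interpret}>0$ we have $\pstate(\VVcut) < \TT \cdot \sem{\FG}{\pstate}{\interpret}$ iff $\pstate(\VVcut)/\TT < \sem{\FG}{\pstate}{\interpret}$; as division is unavailable I introduce a fresh $\VW$ forced to equal $\VVcut/\TT$ and set $\BB_{\TT\cdot\FG} \ddefeq \exists \VW \colon (0 < \TT) \wedge (\VVcut = \TT \cdot \VW) \wedge \BB_\FG\subst{\VVcut}{\VW}$. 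For addition I use that, by $\alpha = \sup\dcut{\alpha}$, the cut of a sum is the sumset of the cuts (covering the infinite cases as well): $\pstate(\VVcut) < \sem{\FG_1}{\pstate}{\interpret} + \sem{\FG_2}{\pstate}{\interpret}$ holds iff there are rationals $\VW < \sem{\FG_1}{\pstate}{\interpret}$ and $\VU < \sem{\FG_2}{\pstate}{\interpret}$ with $\pstate(\VVcut) < \VW + \VU$, giving $\BB_{\FG_1+\FG_2} \ddefeq \exists \VW \, \exists \VU \colon \BB_{\FG_1}\subst{\VVcut}{\VW} \wedge \BB_{\FG_2}\subst{\VVcut}{\VU} \wedge (\VVcut < \VW + \VU)$.

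The main obstacle is the infimum $\InfV{\XX}\FG$, because, unlike for suprema, strict inequality does \emph{not} commute with infima: $\forall \RR\colon \pstate(\VVcut) < \sem{\FG}{\pstate\statesubst{\XX}{\RR}}{\interpret}$ is strictly weaker than $\pstate(\VVcut) < \inf_{\RR}\sem{\FG}{\pstate\statesubst{\XX}{\RR}}{\interpret}$, since the infimum may be approached but not attained. The fix is to demand a strictly positive rational margin: $\pstate(\VVcut) < \inf_{\RR}\sem{\FG}{\pstate\statesubst{\XX}{\RR}}{\interpret}$ holds iff there is a rational $\VW > 0$ such that $\pstate(\VVcut) + \VW$ is a strict lower bound of the value set, i.e.\ $\pstate(\VVcut) + \VW < \sem{\FG}{\pstate\statesubst{\XX}{\RR}}{\interpret}$ for \emph{all} $\RR$. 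This gives $\BB_{\InfV{\XX}\FG} \ddefeq \exists \VW \colon (0 < \VW) \wedge \forall \XX \colon \BB_\FG\subst{\VVcut}{\VVcut + \VW}$. Verifying this equivalence in both directions, including the degenerate cases where the infimum is $0$ or $\infty$, is the one step requiring genuine care; everything else is bookkeeping. Finally I collect $\BB_\FF$, prenex it, and read off the quantifier prefix $\qprefix{\FF}$ together with the quantifier-free Boolean body $\BB$.
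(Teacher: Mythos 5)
Your addition case contains a genuine gap. The claimed equivalence --- $\pstate(\VVcut) < \sem{\FG_1}{\pstate}{\interpret} + \sem{\FG_2}{\pstate}{\interpret}$ iff there exist rationals $\VW < \sem{\FG_1}{\pstate}{\interpret}$ and $\VU < \sem{\FG_2}{\pstate}{\interpret}$ with $\pstate(\VVcut) < \VW + \VU$ --- is false whenever a summand evaluates to $0$, because $\dcut{0} = \emptyset$. Concretely, for $\FG_1 = 0$ and $\FG_2 = 5$ your formula becomes $\exists \VW\, \exists \VU \colon (\VW < 0) \wedge (\VU < 5) \wedge (\VVcut < \VW + \VU)$, which is unsatisfiable (quantification ranges over $\PosRats$, so $\VW < 0$ has no witness), yet for $\pstate(\VVcut) = 2$ we do have $\pstate(\VVcut) < 5 = \sem{\FG_1 + \FG_2}{\pstate}{\interpret}$. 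This is not an exotic corner case: summands like $\iverson{\neg\BB}\cdot\FF$, which evaluate to $0$ on many states, are ubiquitous. The repair is exactly the device the paper introduces for this purpose, $\dcutzero{\alpha} \defeq \dcut{\alpha} \cup \{0\}$ (cf.\ \Cref{lem:sum_by_cut}): permit zero witnesses by taking $\exists \VW\, \exists \VU \colon (\BB_{\FG_1}\subst{\VVcut}{\VW} \vee \VW = 0) \wedge (\BB_{\FG_2}\subst{\VVcut}{\VU} \vee \VU = 0) \wedge (\VVcut < \VW + \VU)$. With this one-line change your induction goes through; the remaining cases (arithmetic expressions, guarding, scaling, $\Sup$, $\Inf$) are correct as you state them, including the degenerate values $0$ and $\infty$.

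Beyond that fix, note that your route is genuinely different from the paper's: the paper prenexes $\FF$ itself (\Cref{thm:prenex-rules}), rewrites the quantifier-free matrix into the summation normal form $\sum_i \iverson{\BB_i}\cdot\TT_i$ (\Cref{thm:summation_nf}), and then replaces only that matrix by a Boolean expression, keeping the quantifier prefix of $\FF$ unchanged, whereas you recurse at the level of first-order formulas and your prefix changes shape (an $\Inf$ becomes $\exists \VW\, \forall \XX$). This difference matters precisely at the point you single out: strict inequality commutes with suprema but not with infima. The paper's own appendix proof of the $\Inf$ case justifies the step \enquote{for all $\RR$, $\pstate(\VVcut) < \cdots$, iff $\pstate(\VVcut) < \inf_\RR \cdots$} by asserting that a pointwise strict lower bound is also a strict lower bound of the infimum, which fails for non-attained infima: consider $\setcomp{\nicefrac{1}{n}}{n \geq 1}$ with the bound $0$, or the expectation $\Inf \XX \colon \Sup \XZ \colon \iverson{\XZ \cdot (\XX + 1) = 1} \cdot \XZ$, which is semantically $0$ everywhere although the paper's construction of its Dedekind normal form evaluates to $1$ at $\pstate(\VVcut) = 0$. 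Your positive-margin encoding of $\Inf$ handles exactly this, so on this point your (repaired) argument is more robust than the one in the paper.
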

%
%
%
%
The Dedekind normal form $\dexp{\FF}$ defines the Dedekind cut of every $\sem{\FF}{\sigma}{\interpret}$, \ie
\[
    \text{for all $\sigma$}\colon \quad \dcut{\sem{\FF}{\sigma}{\interpret}}
    \eeq \setcomp{\RR \in \PosRats}{ \RR = \sigma(\VVcut), \sem{\dexp{\FF}}{\sigma}{\interpret\statesubst{\VVcut}{\RR}} = 1 }~.
\] 
Hence, we can recover $\FF$ from $\dexp{\FF}$:
\begin{lemma}
	\label{lem:dedekind_nf_recover}
	Let $\dexp{\FF}$ be in Dedekind normal form. Then 
	\[
	      \FF \eequiv \Sup \VVcut \colon \dexp{\FF} \cdot \VVcut~. 
	\]
\end{lemma}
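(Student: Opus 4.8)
The plan is to unfold both sides of the claimed equivalence pointwise and reduce everything to the characterization of the Dedekind cut provided by Theorem~\ref{thm:dedekind_nf}. Fix an arbitrary state $\pstate$. Since $\equiv$ is defined as equality of the induced semantic expectations, it suffices to show $\sem{\FF}{\pstate}{\interpret} = \sem{\Sup \VVcut \colon \dexp{\FF} \cdot \VVcut}{\pstate}{\interpret}$. First I would compute the right-hand side using the semantics of the $\Sup$ quantifier together with the semantics of scaling, giving
\[
	\sem{\Sup \VVcut \colon \dexp{\FF} \cdot \VVcut}{\pstate}{\interpret}
	\eeq
	\sup \setcomp{\RR \cdot \sem{\dexp{\FF}}{\pstate\statesubst{\VVcut}{\RR}}{\interpret}}{\RR \in \PosRats}~.
\]

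Next I would invoke Theorem~\ref{thm:dedekind_nf}. Because $\VVcut$ is fresh, it does not occur freely in $\FF$, so by a routine coincidence argument on free variables $\sem{\FF}{\pstate\statesubst{\VVcut}{\RR}}{\interpret} = \sem{\FF}{\pstate}{\interpret}$; consequently $\sem{\dexp{\FF}}{\pstate\statesubst{\VVcut}{\RR}}{\interpret}$ equals $1$ exactly when $\RR < \sem{\FF}{\pstate}{\interpret}$ and equals $0$ otherwise. Substituting this back, the summand $\RR \cdot \sem{\dexp{\FF}}{\pstate\statesubst{\VVcut}{\RR}}{\interpret}$ equals $\RR$ whenever $\RR < \sem{\FF}{\pstate}{\interpret}$ and vanishes otherwise, so that
\[
	\sem{\Sup \VVcut \colon \dexp{\FF} \cdot \VVcut}{\pstate}{\interpret}
	\eeq
	\sup \left( \dcut{\sem{\FF}{\pstate}{\interpret}} \cup \{0\} \right)
	\eeq
	\sup \dcutzero{\sem{\FF}{\pstate}{\interpret}}~.
\]
I would then close the argument with the lemma stating $\alpha = \sup \dcut{\alpha}$ for every $\alpha \in \PosRealsInf$: adjoining $0$ to a set of non-negative rationals whose supremum is already non-negative leaves the supremum unchanged, whence $\sup \dcutzero{\alpha} = \sup \dcut{\alpha} = \alpha$ for $\alpha = \sem{\FF}{\pstate}{\interpret}$. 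This gives the desired pointwise equality and hence $\FF \equiv \Sup \VVcut \colon \dexp{\FF} \cdot \VVcut$.

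The argument is essentially mechanical, so the only genuinely delicate points are the two boundary cases, which is where I expect the (mild) main obstacle to lie. When $\sem{\FF}{\pstate}{\interpret} = 0$ the cut $\dcut{0}$ is empty, and correctness hinges on the convention $\sup \emptyset = 0$ together with the extra $\{0\}$ contributed by the vanishing summands, so that both sides collapse to $0$. When $\sem{\FF}{\pstate}{\interpret} = \infty$ the cut is all of $\PosRats$, and one must check that taking the supremum of $\RR$ over every rational below $\infty$ indeed yields $\infty$. Both situations are precisely those absorbed by the lemma $\alpha = \sup \dcut{\alpha}$, so once that lemma is in hand no separate case distinction is required. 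Note also that no $0 \cdot \infty$ subtlety arises here, since each $\RR$ ranges only over finite rationals and is multiplied by the $\{0,1\}$-valued quantity $\sem{\dexp{\FF}}{\pstate\statesubst{\VVcut}{\RR}}{\interpret}$.
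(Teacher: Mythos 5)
Your proof is correct and follows essentially the same route as the paper's: unfold the semantics of the $\Sup$ quantifier, apply Theorem~\ref{thm:dedekind_nf} to reduce the supremand to the Dedekind cut $\dcut{\sem{\FF}{\pstate}{\interpret}}$, and conclude via $\alpha = \sup \dcut{\alpha}$. Your explicit treatment of the freshness of $\VVcut$ and of the adjoined $\{0\}$ (which the paper handles implicitly through the convention $\sup \emptyset = 0$) is a welcome clarification, not a deviation.
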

%

\section{Sums, Products, and Infinite Series of Syntactic Expectations}
\label{sec:sums_prod_via_goedel}

This section deals with the syntactic $\gsumsymbol$ and $\gproductsymbol$ expectations as described in \Cref{sec:outline:sum}. Since a syntactic expectation $\FF$ evaluates to a non-negative \emph{extended real}, we rely on a reduction from sums over reals to suprema of sums over \emph{rationals}:

\begin{lemma}
	\label{lem:sum_by_cut}
	For all $\alpha_0,\ldots,\alpha_n \in \PosRealsInf$, we have
	\[
	\sum_{j=0}^n  \alpha_j 
	\eeq
	\sup \setcomp{\sum_{j=0}^n  r_j}{\forall i \in \{0,\ldots,n\} \colon  r_i \in \dcutzero{\alpha_i}}
	\]
\end{lemma}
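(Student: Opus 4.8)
The plan is to abbreviate $L \defeq \sum_{j=0}^n \alpha_j$ and write $S$ for the supremum on the right-hand side, and to prove $S = L$ by establishing $S \lleq L$ and $S \ggeq L$ separately. First I would note that every set $\dcutzero{\alpha_i}$ contains $0$, so the all-zero tuple $(0,\ldots,0)$ is admissible; hence $S$ is a supremum over a nonempty subset of $\PosRealsInf$ and is well-defined. For the bound $S \lleq L$, observe that any admissible tuple $(r_0,\ldots,r_n)$ satisfies $r_i \lleq \alpha_i$ for every $i$: either $r_i \in \dcut{\alpha_i}$ and thus $r_i \LL \alpha_i$, or $r_i = 0 \lleq \alpha_i$. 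Since addition is monotone on $\PosRealsInf$, it follows that $\sum_j r_j \lleq \sum_j \alpha_j = L$, and taking the supremum over all admissible tuples gives $S \lleq L$.

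For the reverse inequality $S \ggeq L$ I would distinguish two cases according to whether $L$ is finite. If some $\alpha_{j_0} = \infty$, then $L = \infty$ and $\dcut{\infty} = \PosRats$; choosing $r_{j_0} \in \PosRats$ arbitrarily large and $r_i = 0$ for $i \neq j_0$ yields admissible tuples whose sums are unbounded, so $S = \infty = L$. Otherwise every $\alpha_j$ is a finite non-negative real, and I would argue by approximation: given $\varepsilon \GG 0$, for each index $j$ with $\alpha_j \GG 0$ the density of the rationals provides some $r_j \in \dcut{\alpha_j}$ with $\alpha_j - \tfrac{\varepsilon}{n+1} \LL r_j$ (the interval between $\max\{0, \alpha_j - \tfrac{\varepsilon}{n+1}\}$ and $\alpha_j$ is nonempty because $\alpha_j \GG 0$), while for $\alpha_j = 0$ I set $r_j = 0$. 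The resulting tuple is admissible and satisfies $\sum_j r_j \GG L - \varepsilon$, whence $S \ggeq L - \varepsilon$. As $\varepsilon \GG 0$ was arbitrary, $S \ggeq L$.

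Combining both inequalities gives $S \eeq L$, as claimed. The argument is elementary; the only points requiring genuine care are the extended-real bookkeeping---isolating the case $L = \infty$ and exploiting $\dcut{\infty} = \PosRats$---and verifying that the approximating rationals chosen in the finite case truly lie in $\dcut{\alpha_j}$, \ie are non-negative and strictly below $\alpha_j$. I expect this approximation step, together with the uniform $\tfrac{\varepsilon}{n+1}$ splitting across the $n+1$ summands, to be the main (though still routine) obstacle. Finally, the inclusion $0 \in \dcutzero{\alpha_i}$ is exactly what keeps the supremum over a nonempty set when some $\alpha_i = 0$, and is the reason the statement is phrased with $\dcutzero{\cdot}$ rather than $\dcut{\cdot}$.
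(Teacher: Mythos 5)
Your proof is correct, but it takes a different route from the one the paper (implicitly) intends: the paper never writes out a proof of this lemma at all---it is used in the proof of Theorem~\ref{thm:sum_exp} without justification, and its multiplicative analogue, Lemma~\ref{lem:prod_by_cut}, is dispatched with the single line \enquote{By induction on $n$}. The argument the paper has in mind is thus inductive: the base case $n=0$ is the Dedekind-cut identity $\alpha = \sup \dcut{\alpha}$ from Section~\ref{sec:normalforms} (unchanged by adjoining $0$), and the step rewrites $\sum_{j=0}^{n+1}\alpha_j$ as $\bigl(\sum_{j=0}^{n}\alpha_j\bigr) + \alpha_{n+1}$ and appeals to the identity $(\sup A) + (\sup B) = \sup \setcomp{\beta + \gamma}{\beta \in A,\, \gamma \in B}$ of Lemma~\ref{thm:prenex:aux}~(\ref{thm:prenex:aux:sup-plus}), where the $\cup\,\{0\}$ in $\dcutzero{\cdot}$ supplies exactly the nonemptiness that identity needs. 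You instead give a direct two-inequality argument: monotonicity of $+$ for $S \leq L$, and, for $S \geq L$, an isolated $\infty$ case together with an $\varepsilon/(n+1)$ rational-approximation step resting on density of $\PosRats$. Your case analysis ($\alpha_j = \infty$, $\alpha_j = 0$, $0 < \alpha_j < \infty$) is handled correctly, and your closing observation about why the statement needs $\dcutzero{\cdot}$ rather than $\dcut{\cdot}$ (a single $\alpha_i = 0$ would otherwise empty the admissible set and, under the paper's convention $\sup \emptyset = 0$, falsify the claim) is exactly right. What the inductive route buys is reuse and uniformity---the $\infty$ and $0$ bookkeeping is absorbed into an identity the paper has already proved, and the same skeleton yields Lemma~\ref{lem:prod_by_cut}---whereas your proof is self-contained and makes explicit where density of the rationals in the extended reals enters.
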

\begin{theorem}
	\label{thm:sum_exp}
	For every $\FF \in \SyntE$ with free variable $\vsum$,
	there is an effectively constructible expectation 
	$\gsum{\FF}{\VV} \in \SyntE$ such that for all states $\sigma$ with $\sigma(\VV) \in \Nats$, we have
	\[
	\sem{\gsum{\FF}{\VV}}{\sigma}{\interpret}
	\eeq
	\sum_{j=0}^{\sigma(\VV)} \sem{\FF\subst{\vsum}{j}}{\sigma}{\interpret}
	~
	\text{and}
	~
	\sem{\Sup \VV \colon \gsum{\FF}{\VV}}{\sigma}{\interpret}
	\eeq
	\sum_{j=0}^{\infty} \sem{\FF\subst{\vsum}{j}}{\sigma}{\interpret}~.
	\]
\end{theorem}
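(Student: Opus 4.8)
The plan is to make the identity of \Cref{lem:sum_by_cut} syntactic. Writing $\alpha_j = \sem{\FF\subst{\vsum}{j}}{\sigma}{\interpret}$, that lemma gives $\sum_{j=0}^{n}\alpha_j = \sup \setcomp{\sum_{j=0}^{n}\RRa_j}{\RRa_j \in \dcutzero{\alpha_j}\text{ for all }j}$, reducing an extended-real sum to a supremum over tuples of \emph{rationals}, each drawn from the (zero-augmented) Dedekind cut of the corresponding summand. First I would guess such a tuple indirectly, by guessing via the rational-sequence Gödelization of \Cref{thm:rho_expectation} a finite sequence of rational \emph{partial sums} and reading off its last entry as the candidate value of the sum. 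The Dedekind normal form (\Cref{thm:dedekind_nf}) is exactly the tool that lets me test membership of each increment in its cut using only $\{0,1\}$-valued syntax.

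Concretely, I would set
\[
  \gsum{\FF}{\VV} \ddefeq \SupV{\VW}\SupV{\gnum}\ \VW \cdot \Phi~,
\]
where $\Phi \in \SyntE$ is a $\{0,1\}$-valued expectation asserting that (i) $\gnum$ codes (in the sense of \Cref{thm:rho_expectation}) a rational sequence $s_0,\ldots,s_{\sigma(\VV)}$ of length $\sigma(\VV)+1$ whose last entry is $\VW$, a Boolean condition built from $\rsequence{\gnum}{\VV+1}$ and $\rseqelem{\gnum}{\VV}{\VW}$; and (ii) every increment of the sequence lies in the appropriate cut. Condition (ii) I would phrase as a bounded universal quantifier over the index, namely $\InfV{\VU}(\cdots)$ with the body guarded by $\iverson{\VU \in \Nats \wedge \VU \leq \VV}$ and defaulting to $1$ outside this range so as not to depress the infimum. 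Inside the guard I extract, with an auxiliary $\Sup$, the $\VU$-th and $(\VU\monus 1)$-th entries of the encoded sequence, form the increment $\RRa$ (equal to $s_\VU$ when $\VU = 0$ and to $s_\VU \monus s_{\VU\monus 1}$ otherwise), and test $\RRa \in \dcutzero{\alpha_\VU}$ by $\iverson{\RRa = 0} + \iverson{\RRa \neq 0}\cdot \dexp{\FF}\subst{\vsum}{\VU}\subst{\VVcut}{\RRa}$. Here \Cref{thm:dedekind_nf} turns the quantitative test $\RRa < \alpha_\VU$ into the $\{0,1\}$-valued $\dexp{\FF}\subst{\vsum}{\VU}\subst{\VVcut}{\RRa}$, while \Cref{lem:substitution} justifies the substitutions. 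The decisive design point, and the reason the Dedekind normal form is indispensable, is that every multiplication introduced here has the shape $\iverson{\BB}\cdot(\cdots)$: no two genuinely quantitative expectations are ever multiplied, so $\Phi$, and hence $\gsum{\FF}{\VV}$, stays inside $\SyntE$.

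It then remains to verify the two equalities semantically. Fix $\sigma$ with $\sigma(\VV) = n \in \Nats$. For the lower bound, any tuple $(\RRa_0,\ldots,\RRa_n)$ with $\RRa_j \in \dcutzero{\alpha_j}$ has rational, nondecreasing partial sums $s_j = \sum_{i \le j}\RRa_i$, so by \Cref{thm:rho_expectation} these are coded by some $\gnum$ for which $\gnum$ and $\VW = s_n = \sum_{j}\RRa_j$ satisfy $\Phi$; by \Cref{lem:sum_by_cut} the supremum of such $\sum_{j}\RRa_j$ is $\sum_{j=0}^{n}\alpha_j$. For the upper bound, any $\gnum, \VW$ satisfying $\Phi$ has $\VW = s_n$ with $s_0 \leq \alpha_0$ and $s_j - s_{j-1} \leq \alpha_j$ for every $j$ (either the tested increment is genuinely in the cut, hence $< \alpha_j$, or $s_j < s_{j-1}$, making the real difference negative), so telescoping gives $\VW \leq \sum_{j=0}^{n}\alpha_j$. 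Together these yield $\sem{\gsum{\FF}{\VV}}{\sigma}{\interpret} = \sum_{j=0}^{n}\alpha_j$. For the series, observe that $\rsequence{\gnum}{\VV+1}$ is unsatisfiable when $\sigma(\VV) \notin \Nats$, forcing $\sem{\gsum{\FF}{\VV}}{\sigma}{\interpret} = 0$ there, whereas over naturals the partial sums increase monotonically in $\sigma(\VV)$; hence $\sem{\SupV{\VV}\gsum{\FF}{\VV}}{\sigma}{\interpret} = \sup_{n \in \Nats}\sum_{j=0}^{n}\alpha_j = \sum_{j=0}^{\infty}\alpha_j$ in $\PosRealsInf$. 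The main obstacle I anticipate is the bookkeeping around condition (ii): making the $\monus$-based increment extraction faithful and, above all, discharging the telescoping upper bound so that stray (e.g.\ non-monotone) encodings cannot overshoot the intended sum, together with the routine care that the bound variables of $\dexp{\FF}$ are renamed fresh so that the substitutions $\subst{\vsum}{\VU}$ and $\subst{\VVcut}{\RRa}$ do not capture them.
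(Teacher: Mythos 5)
Your proposal is, at its core, the same proof as the paper's: the same reduction of extended-real sums to suprema of rational sums (\Cref{lem:sum_by_cut}), the same G\"odelization of the sequence of rational partial sums (\Cref{thm:rho_expectation}), the same use of the Dedekind normal form to turn cut-membership into $\{0,1\}$-valued syntax, the same outer shape $\Sup \VW \colon \Sup \gnum \colon \VW \cdot \Phi$, and the same one-line argument that the infinite series is the supremum of its partial sums. Your one real deviation --- computing each increment deterministically as $s_{\VU} \monus s_{\VU \monus 1}$ and killing non-monotone encodings by telescoping, instead of the paper's existentially quantified increment $\VVcut$ constrained by $\rseqelem{\gnum}{\VU+1}{z+\VVcut}$ --- is sound, and arguably makes the upper-bound direction more transparent than the paper's phrasing, where the membership test sits in the antecedent of an implication whose witness $\VVcut$ is chosen existentially.

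There is, however, one genuine flaw: your claim that $\Phi \in \SyntE$ because \enquote{every multiplication introduced here has the shape $\iverson{\BB}\cdot(\cdots)$}. The grammar of \Cref{sec:syntax:exp} licenses guarding only by Iverson brackets of \emph{quantifier-free} Boolean expressions $\BB \in \Bools$, and your guards are not of that kind: $\VU \in \Nats$ is the quantified formula $\isNat(\VU)$ of \Cref{lem:nats_definable}, and $\rseqelem{\gnum}{\VU}{s}$, $\rsequence{\gnum}{\VV+1}$ are quantified formulas as well ($\rseqelemsymbol$ hides an existential block via pairing, cf.\ \Cref{thm:rho_expectation}). Their Iverson embeddings (\Cref{thm:exp_subsumes_fo_rats}) are expectations containing $\Sup$/$\Inf$, so a product such as $\iverson{\rseqelem{\gnum}{\VU}{s}} \cdot \dexp{\FF}\subst{\vsum}{\VU}\subst{\VVcut}{\RRa}$ is exactly the kind of product of two quantified expectations that the syntax forbids (\Cref{sec:note-on-f-times-f}). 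The repair is precisely why \Cref{thm:dedekind_nf} exposes a quantifier prefix $\qprefix{\FF}$ and a quantifier-free Boolean $\BB$ \emph{separately}, rather than a packaged $\{0,1\}$-valued expectation: all of your side conditions must be combined \emph{as first-order formulas} --- the defining formulas of $\rsequencesymbol$, $\rseqelemsymbol$, $\isNat$, together with $\BB\subst{\vsum}{\VU}\subst{\VVcut}{\RRa}$, joined by $\wedge$, $\vee$, $\longrightarrow$ --- then prenexed into a single quantifier block followed by \emph{one} Iverson bracket, and only then scaled by $\VW$ and closed under $\Sup \VW \colon \Sup \gnum$. After this flattening (which lands essentially on the paper's own definition $\Sup \VV' \colon \Sup \gnum \colon \VV' \cdot \Inf \VU \colon \Inf z \colon \Sup \VVcut \colon \qprefix{\FF} \colon [\cdots]$, with the bounded universal and the increment check living inside the bracket), your semantic argument --- lower bound via \Cref{lem:sum_by_cut}, upper bound via telescoping, and vanishing for $\sigma(\VV) \notin \Nats$ --- goes through.
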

\begin{proof}
	We sketch the construction of $\gsum{\FF}{\VV}$. 
	\noindent
	\Cref{lem:sum_by_cut} and the Dedekind normal form $\dexp{\FF}$ of $\FF$  (cf. \Cref{thm:dedekind_nf}) give us
	\begin{align}
	&\sum_{j=0}^{\sigma(\VV)}  \sem{\FF\subst{\vsum}{j}}{\sigma}{\interpret}
	\notag\\
	\eeq&
	\sup \setcomp{\sum_{j=0}^{\sigma(\VV)}  r_j}{\forall j \in \{0,\ldots,\sigma(\VV)\} \colon  r_j \in \dcutzero{\sem{\FF\subst{\vsum}{j}}{\sigma}{\interpret}}} 
	\notag\\
	\eeq&
	\sup \setcomp{\sum_{j=0}^{\sigma(\VV)}  r_j}{\forall j \in \{0,\ldots,\sigma(\VV)\} \colon  \sem{\dexpvar{\FF}{r_j}}{\sigma}{\interpret}=1~\text{or}~r_j = 0}~.
	\label{eqn:derive_sum_exp}
	\end{align}
	Writing $\dexp{\FF} =  \qprefixnoarg \colon \iverson{\BB}$ (cf.\ \Cref{thm:dedekind_nf}) and denoting by $ \qprefixnoarginvert$ the quantifier prefix obtained from $\qprefixnoarg$ by flipping all quantifiers, 
	we then construct a syntactic expectation $\FG$ with free variables $\VV$ and $\gnum$ by
	\begin{align*}
	&\Sup \VV' \colon 
	\VV' \cdot \Inf \VU \colon \Inf z \colon \Sup \VVcut \colon \qprefixnoarginvert \colon \\
	& [
	\rseqelem{\gnum}{0}{1} \wedge \rseqelem{\gnum}{\VV+1}{\VV}  \\
	&  \quad \wedge \big( (\VU < \VV+1 \wedge \rseqelem{\gnum}{\VU}{z} \wedge (\iverson{\BB} \subst{\vprod}{\VU} \vee \VVcut = 0) ) \\
	& \qquad \quad  \longrightarrow  \rseqelem{\gnum}{\VU+1}{z + \VVcut}   \big)]~.
	\end{align*}
	For every state $\sigma$ where $\sigma(\gnum)$ is a G\"odel number encoding some sequence
	\begin{align*}
	1,~~~~ 1\cdot r_1, ~~~~1 + r_1 + r_2, ~~~~\ldots ~~~~, 1 +  r_1 + \ldots +  r_{\sigma(\VV)}
	\end{align*}
	with $r_j \in  \dcutzero{\sem{\FF\subst{\vsum}{j}}{\sigma}{}}$
	for all $0 \leq j \leq \sigma(\VV)$, expectation $\FG$ evaluates to the last element of the above sequence, \ie an element of the set from Equation (\ref{eqn:derive_sum_exp}). Hence, by \Cref{lem:sum_by_cut}, the supremum over these sequences, i.e, all G\"odel numbers, gives us 
	\[
	\gsum{\FF}{\VV} \eeq \Sup \gnum \colon \FG~.
	\]
	See Appendix \ref{proof:sum_exp} for a detailed proof.
\end{proof}
\noindent
For an arithmetic expression $\TT$, we write  $\gsum{\FF}{\TT}$ instead of $\gsum{\FF}{\VV}\subst{\VV}{\TT}$. 

\begin{example}
	$\gsumsymbol$ provides us with a much more convenient way to construct $\harmexp{x}$ from \Cref{ex:harmonic}. Let $\FF = \nicefrac{1}{\vsum}$ where $\nicefrac{1}{\vsum}$ is a shorthand for $\Sup \VW \colon \VW \cdot \iverson{\VW \cdot \vsum = 1}$. Then, by \Cref{thm:sum_exp}, we have for every $\sigma \in \States$
	\begin{align*}
	  \sem{\gsum{\FF}{x}}{\sigma}{\interpret} 
	  \eeq
	  \sum_{j=0}^{\sigma(x)} \sem{\FF\subst{\vsum}{j}}{\sigma}{\interpret} 
	  \eeq
	  \sum_{j=1}^{\sigma(x)} \frac{1}{j}
	  \eeq
	  \mathcal{H}(\sigma(x))~.
	\end{align*}
\end{example}

The construction of the syntactic $\gproductsymbol$ expectation is completely analogous:
\begin{theorem}
	\label{thm:prod_exp}
	For every $\FF \in \SyntE$ with free variable $\vprod$,
	there is an effectively constructible expectation 
	$\gproduct{\FF}{\VV} \in \SyntE$ such that for every
	 state $\sigma$ with $\sigma(\VV) \in \Nats$, we have
	\[
	     \sem{\gproduct{\FF}{\VV}}{\sigma}{\interpret}
	     \eeq
	     \prod_{j=0}^{\sigma(\VV)} \sem{\FF\subst{\vprod}{j}}{\sigma}{\interpret}~.
	\]
\end{theorem}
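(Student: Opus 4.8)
The plan is to mirror the construction of $\gsumsymbol$ from \Cref{thm:sum_exp} essentially verbatim, replacing addition by multiplication throughout. The scaffolding---Gödelizing a running-accumulator sequence via $\rseqelemsymbol$ (\Cref{thm:rho_expectation}) and extracting its last element through a guarded $\Sup$---carries over unchanged; the only genuinely new ingredient is the product analogue of the reduction \Cref{lem:sum_by_cut}.

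First I would establish that for all $\alpha_0,\ldots,\alpha_n \in \PosRealsInf$,
\[
\prod_{j=0}^n \alpha_j \eeq \sup \setcomp{\prod_{j=0}^n r_j}{\forall i \in \{0,\ldots,n\} \colon r_i \in \dcutzero{\alpha_i}}~.
\]
Because $(r_0,\ldots,r_n) \mapsto \prod_j r_j$ is monotone in each coordinate on $\PosRealsInf$, the bound $r_i \leq \alpha_i$ immediately yields $\geq$, while choosing each $r_i$ arbitrarily close to $\alpha_i$ from below yields $\leq$ by continuity of multiplication on finite values. The care lies entirely in the corner cases. If some $\alpha_k = 0$, then $\dcutzero{0} = \{0\}$ forces $r_k = 0$, so the supremum is $0$; this matches the left-hand side even when another factor equals $\infty$, precisely because of the convention $0 \cdot \infty = 0$. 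If no factor is $0$ but some $\alpha_i = \infty$, then $\dcutzero{\infty} = \PosRats$ permits $r_i$ to grow without bound while the remaining strictly positive factors stay bounded away from $0$, so the supremum is $\infty$. This case analysis is the only argument not already present in the sum proof.

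Next I would reuse the encoding setup. Writing the Dedekind normal form $\dexp{\FF} = \qprefix{\FF} \colon \iverson{\BB}$ (\Cref{thm:dedekind_nf}), the membership $r \in \dcutzero{\sem{\FF\subst{\vprod}{j}}{\sigma}{}}$ is captured syntactically by the disjunction of $\iverson{\BB}\subst{\vprod}{j} = 1$ and $r = 0$, exactly as for sums. I would then build a syntactic expectation $\FG$ with free variables $\VV$ and $\gnum$ forcing $\sigma(\gnum)$ to encode a running-product sequence
\[
1, \quad r_0, \quad r_0 \cdot r_1, \quad \ldots, \quad \textstyle\prod_{j=0}^{\sigma(\VV)} r_j~,
\]
where the factor freshly multiplied at position $\VU$---call it $\VVcut$---lies in the cut of $\sem{\FF\subst{\vprod}{\VU}}{\sigma}{}$. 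This amounts to keeping the base clause $\rseqelem{\gnum}{0}{1}$ at the multiplicative unit and replacing the sum proof's update clause $\rseqelem{\gnum}{\VU+1}{z + \VVcut}$ by $\rseqelem{\gnum}{\VU+1}{z \cdot \VVcut}$. As before, the outer $\Sup \VV'$ together with the scaling $\VV' \cdot (\cdots)$ extracts the last element of the sequence, so that
\[
\gproduct{\FF}{\VV} \eeq \Sup \gnum \colon \FG
\]
ranges, over all admissible Gödel numbers, exactly over the right-hand side of the product lemma above; by that lemma it equals $\prod_{j=0}^{\sigma(\VV)} \sem{\FF\subst{\vprod}{j}}{\sigma}{\interpret}$.

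I expect the main obstacle to be purely the corner-case bookkeeping of the product lemma---in particular verifying that a $0$ factor dominates an $\infty$ factor under $0 \cdot \infty = 0$, and that the forced choice $r_k = 0$ on the right-hand side reproduces this---rather than anything in the Gödelization, which transfers from \Cref{thm:sum_exp} with only $+$ swapped for $\cdot$ and $0$ swapped for $1$. One should additionally check that every finite prefix of the running product remains in $\PosRats$ so that $\rseqelemsymbol$ stays applicable, which holds because each selected $r_j$ is rational and rationals are closed under multiplication. Since the theorem states only the finite-product case, no convergence issue for infinite products arises.
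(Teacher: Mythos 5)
Your proposal is correct and follows essentially the same route as the paper's own proof: the paper likewise first establishes the product analogue of \Cref{lem:sum_by_cut} (its Lemma~\ref{lem:prod_by_cut}), then writes $\dexp{\FF} = \qprefix{\FF} \colon \iverson{\BB}$ and G\"odelizes the running-product sequence with seed clause $\rseqelem{\gnum}{0}{1}$, update clause $\rseqelem{\gnum}{\VU+1}{z \cdot \VVcut}$ guarded by $\iverson{\BB}\subst{\vprod}{\VU} \vee \VVcut = 0$, and anchor $\rseqelem{\gnum}{\VV+1}{\VV'}$, extracting the last element via $\Sup \VV'$ and the scaling $\VV' \cdot (\cdots)$. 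The only difference is cosmetic: you verify the cut lemma by a direct case analysis on zero and infinite factors under the convention $0 \cdot \infty = 0$, whereas the paper dispatches it by induction on $n$.
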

%
%
%
For an arithmetic expression $\TT$, we write  $\gproduct{\FF}{\TT}$ instead of $\gproduct{\FF}{\VV}\subst{\VV}{\TT}$.

An immediate, yet important, consequence of Theorem~\ref{thm:prod_exp} is that, even though syntactically forbidden, \emph{arbitrary products} of syntactic expectations are expressible in $\SyntE$.
Let $\FF, \FG \in \SyntE$, and let $\vprod$ be a fresh variable. We define the \emph{(unrestricted) product} $\FF \exprod \FG$ of $\FF$ and $\FG$ by
	\[
	      \FF \exprod \FG \ddefeq 
	      \gproduct{\iverson{\vprod = 0} \cdot \FF + \iverson{\vprod = 1} \cdot \FG}{1}~.
	\]
\begin{corollary}
	\label{cor:unrestricted_product}
	 Let $f,g \in \SyntE$. For all states $\sigma$, we have
	 \[
	    \sem{\FF \exprod \FG}{\sigma}{\interpret}
	    \eeq
	    \sem{\FF}{\sigma}{\interpret} \cdot \sem{\FG}{\sigma}{\interpret}~.
	 \]
\end{corollary}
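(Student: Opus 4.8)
The plan is to obtain the claim as an immediate instance of \Cref{thm:prod_exp}. Abbreviating the guarded sum occurring in the definition of $\exprod$ as $\FH \ddefeq \iverson{\vprod = 0} \cdot \FF + \iverson{\vprod = 1} \cdot \FG$, the defining equation $\FF \exprod \FG \ddefeq \gproduct{\FH}{1}$ together with the convention $\gproduct{\FH}{1} = \gproduct{\FH}{\VV}\subst{\VV}{1}$ reduces the goal to evaluating the $\gproductsymbol$-expectation with the constant $1$ plugged in for its bound. Since $1 \in \Nats$, I would apply \Cref{thm:prod_exp} to obtain, for every state $\sigma$,
\[
	\sem{\FF \exprod \FG}{\sigma}{\interpret} \eeq \prod_{j=0}^{1} \sem{\FH\subst{\vprod}{j}}{\sigma}{\interpret} \eeq \sem{\FH\subst{\vprod}{0}}{\sigma}{\interpret} \ccdot \sem{\FH\subst{\vprod}{1}}{\sigma}{\interpret}~.
\]

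It then remains to evaluate the two factors. As $\vprod$ is chosen fresh, it occurs in neither $\FF$ nor $\FG$, so substituting $0$ (resp.\ $1$) for $\vprod$ leaves $\FF$ and $\FG$ untouched and only fixes the truth values of the two guards. For $j = 0$ the guard $\iverson{0 = 0}$ is $\true$ and $\iverson{0 = 1}$ is $\false$, so by the semantics of guarding and addition from \Cref{sec:semantics} the first factor collapses to $\sem{\FF}{\sigma}{\interpret}$; symmetrically, the $j = 1$ factor collapses to $\sem{\FG}{\sigma}{\interpret}$. Substituting these back yields exactly $\sem{\FF}{\sigma}{\interpret} \ccdot \sem{\FG}{\sigma}{\interpret}$, as required.

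I do not expect a genuine obstacle here: the argument is a single appeal to \Cref{thm:prod_exp} followed by a routine case split on the two Iverson brackets, with the freshness of $\vprod$ the only side condition to check. The one point truly worth flagging---and indeed the reason this detour is preferable to admitting $\FF \cdot \FG$ directly (cf.\ \Cref{sec:note-on-f-times-f})---is that the product in \Cref{thm:prod_exp} is taken in $\PosRealsInf$ under the convention $0 \cdot \infty = 0$, which is precisely the multiplication appearing on the right-hand side of the corollary. Hence the potentially problematic $0 \cdot \infty$ interaction is handled consistently, and, crucially, no prenexing of the two factors against each other is ever carried out, so the soundness defect exhibited in \Cref{sec:note-on-f-times-f} cannot arise.
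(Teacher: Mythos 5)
Your proposal is correct and matches the paper's intended argument: the paper states \Cref{cor:unrestricted_product} without a separate proof, as an immediate consequence of \Cref{thm:prod_exp}, which is precisely your instantiation with $\sigma(\VV)=1$ followed by evaluating the two guarded summands $\FH\subst{\vprod}{0}$ and $\FH\subst{\vprod}{1}$ to $\sem{\FF}{\sigma}{\interpret}$ and $\sem{\FG}{\sigma}{\interpret}$. Your side remarks---freshness of $\vprod$, the $0 \cdot \infty = 0$ convention, and that no cross-prenexing of $\FF$ against $\FG$ occurs---are exactly the points that make the detour through $\gproductsymbol$ sound where admitting $\FF \cdot \FG$ syntactically would not be.
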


\section{Expressiveness of our Language}
\label{sec:expressiveness}

With the results from the preceding sections at hand, we give a constructive expressiveness proof for our language $\SyntE$.
%
Fix a set of variables $\varseq{x} = \{ x_0,\ldots, x_{n-1}\}$.
We assume a fixed set $\partitionedstates{x} \subseteq \States$ that contains \emph{exactly one} state from each equivalence class of $\equivstatesrel{\varseq{x}}$ (cf.\ \Cref{sec:encoding_states}).
Given a state $\sigma \in \States$, we define the \emph{characteristic expectation} $\statepred{\sigma}{\varseq{x}}$ of $\sigma$ (w.r.t.\ $\varseq{x}$) as
\[
\statepred{\sigma}{\varseq{x}} \ddefeq \iverson{x_0 = \sigma(x_0) \wedge \ldots \wedge x_{n-1} = \sigma(x_{n-1})}~.
\]
The expectation $\statepred{\sigma}{\varseq{x}}$ evaluates to $1$ on state $\sigma'$ if $\equivstates{\sigma}{\varseq{x}}{\sigma'}$, and to $0$ otherwise.  Finally, we denote by $\Vars(\cc)$ the set of all variables that appear in the $\pgcl$ program $\cc$. 

Let us now formalize the characterization of $\wp{\WHILEDO{\BB}{\cc'}}{\eval{\FF}}$ from \Cref{sec:kozen_duality}:
%
%
\begin{theorem}
	\label{thm:wp_loop_as_sum}
	Let $\cc = \WHILEDO{\BB}{\cc'}$ be a loop and let $\FF \in \SyntE$. Furthermore, let $\varseq{x}$ be a finite set of variables with $\Vars(\cc) \cup \FV{\FF} \subseteq \varseq{x}$.  We have
	\begin{align*}
	&\wp{\WHILEDO{\BB}{\cc'}}{\eval{\FF}}  \\
	\eeq &\lambda \sigma \mydot \sup_{k \in \Nats}
	\sum_{\sigma_0,\ldots,\sigma_{k-1} \in \partitionedstates{x}}
	\statepred{\sigma_0}{\varseq{x}}(\sigma)
	\cdot (\iverson{\neg \BB} \cdot \eval{\FF})(\sigma_{k-1}) \\
	 &\qquad \qquad \qquad \qquad \qquad \cdot
	\prod\limits_{i=0}^{k-2} \wp{\ITE{\BB}{\cc'}{\SKIP}}{\statepred{\sigma_{i+1}}{\varseq{x}}}(\sigma_i)~.
	\end{align*}
\end{theorem}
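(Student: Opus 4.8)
The plan is to reduce the claimed closed form to the Kleene iteration of the loop's characteristic function and then match the two by induction on the number of iterations. Writing $\cciter = \ITE{\BB}{\cc'}{\SKIP}$ and $g_k \defeq \charwpn{\BB}{\cc'}{\eval{\FF}}{k}(0)$ for the $k$-th Kleene approximant, Lemma~\ref{lem:kleene_for_wp} gives $\wp{\WHILEDO{\BB}{\cc'}}{\eval{\FF}} = \sup_{k \in \Nats} g_k$. Denoting by $S_k(\sigma)$ the inner sum over paths $\sigma_0,\ldots,\sigma_{k-1} \in \partitionedstates{x}$ on the right-hand side of the theorem, it therefore suffices to prove $S_k = g_k$ for every $k \geq 1$ and then take suprema; the $k=0$ summand contributes $g_0 = 0$ and does not affect the supremum.

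Before the induction I would record two locality facts. First, since $\Vars(\cc) \cup \FV{\FF} \subseteq \varseq{x}$, each $g_k$ is $\equivstatesrel{\varseq{x}}$-invariant, i.e.\ depends only on the values of the variables in $\varseq{x}$. This follows by induction on $k$ from the standard frame property of $\wpsymbol$: for a program reading and writing only variables in $\varseq{x}$ and a postexpectation that is $\equivstatesrel{\varseq{x}}$-invariant, the weakest preexpectation is again $\equivstatesrel{\varseq{x}}$-invariant. Second, directly from the definition of $\charwp{\BB}{\cc'}{\eval{\FF}}$ one gets $g_k(\sigma) = \eval{\FF}(\sigma)$ whenever $\sigma \not\models \BB$ and $k \geq 1$. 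Combining invariance with Kozen duality (Theorem~\ref{thm:kozen-duality}) yields the key rewriting: for any $\equivstatesrel{\varseq{x}}$-invariant $h$ and any program $D$ over the variables of $\varseq{x}$,
\[
\wp{D}{h}(\sigma) \eeq \sum_{\tau \in \partitionedstates{x}} \wp{D}{\statepred{\tau}{\varseq{x}}}(\sigma) \cdot h(\tau)~,
\]
because $\partitionedstates{x}$ picks exactly one representative from each $\equivstatesrel{\varseq{x}}$-class and $h$ is constant on classes.

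For the induction itself, the base case $k = 1$ is immediate: the empty product is $1$, the factor $\statepred{\sigma_0}{\varseq{x}}(\sigma)$ selects the unique representative $\equivstatesrel{\varseq{x}}$-equivalent to $\sigma$, and $\iverson{\neg\BB}\cdot\eval{\FF}$ is class-invariant, so $S_1 = \iverson{\neg\BB}\cdot\eval{\FF} = g_1$ (using strictness $\wp{\cc'}{0}=0$). For the step, I would split off the first transition of every path: the factor $\statepred{\sigma_0}{\varseq{x}}(\sigma)$ again fixes $\sigma_0$ as the representative of $\sigma$, invariance of $\wp{\cciter}{\cdot}$ lets me evaluate the first factor at $\sigma$ itself, and reindexing the remaining states $\sigma_1,\ldots,\sigma_k$ exhibits the tail of each path as a copy of $S_k$ based at $\sigma_1$. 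As all summands lie in $\PosRealsInf$, the rearrangement is unproblematic and gives
\[
S_{k+1}(\sigma) \eeq \sum_{\sigma_1 \in \partitionedstates{x}} \wp{\cciter}{\statepred{\sigma_1}{\varseq{x}}}(\sigma) \cdot S_k(\sigma_1)~.
\]
Applying the induction hypothesis $S_k = g_k$ and then the key rewriting (with $D = \cciter$ and $h = g_k$) collapses this to $\wp{\cciter}{g_k}(\sigma)$. Unfolding $\wp{\cciter}{g_k} = \iverson{\BB}\cdot\wp{\cc'}{g_k} + \iverson{\neg\BB}\cdot g_k$ and replacing $\iverson{\neg\BB}\cdot g_k$ by $\iverson{\neg\BB}\cdot\eval{\FF}$ via the second locality fact finally yields $\charwp{\BB}{\cc'}{\eval{\FF}}(g_k) = g_{k+1}$, closing the induction; taking $\sup_{k}$ then concludes.

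I expect the main obstacle to be the bookkeeping in the inductive step, specifically justifying the factorization of the length-$(k+1)$ path sum into ``first step times tail'' and then collapsing the sum over class representatives $\sigma_1$ back into an honest weakest preexpectation. This is exactly where both locality facts are needed: class-invariance of $g_k$ to pull $g_k(\sigma_1)$ through the Kozen-duality sum, and the agreement of $g_k$ with $\eval{\FF}$ on $\neg\BB$-states to reconcile the $\SKIP$-branch of $\cciter$ (which merely reproduces $g_k$) with the $\iverson{\neg\BB}\cdot\eval{\FF}$ summand of the characteristic function.
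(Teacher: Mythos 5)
Your proposal is correct, and its skeleton (Kleene iteration via Lemma~\ref{lem:kleene_for_wp} plus an induction on $k$ identifying the $k$-th approximant $g_k = \charwpn{\BB}{\cc'}{\eval{\FF}}{k}(0)$ with the length-$k$ path sum $S_k$) matches the paper's; but the way you discharge the crucial one-step decomposition is genuinely different. The paper does not prove this theorem in isolation: it is the $\WHILESYMBOL$-case of the more general Theorem~\ref{thm:wp_prob_times_exp}, which states $\wp{\cc}{\ff} = \lambda\sigma_0\mydot\sum_{\sigma\in\partitionedstates{x}}\wp{\cc}{\statepred{\sigma}{\varseq{x}}}(\sigma_0)\cdot\ff(\sigma)$ for \emph{all} programs and is proved by structural induction entirely inside the $\wpsymbol$-calculus; the decomposition of the loop body's preexpectation is then simply the structural induction hypothesis. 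You instead obtain exactly this identity (your ``key rewriting'') from Kozen duality (Theorem~\ref{thm:kozen-duality}) combined with a frame property of $\wpsymbol$, with no structural induction over programs in the main line of argument. Each route has its advantages. Yours is shorter and, because the supremum sits outside the sum on both sides of the claimed equation, you never need the paper's third characterization $\sum_{\sigma_1\in\partitionedstates{x}}\charwpn{\BB}{\cc'}{\statepred{\sigma_1}{\varseq{x}}}{k}(0)(\sigma_0)\cdot\eval{\FF}(\sigma_1)$ nor the ensuing interchange of suprema and infinite series (the enumeration bijection, swapped suprema, and partial-sum arguments that close the paper's while-case). The paper's route, in exchange, is self-contained in the backward semantics: Kozen duality is only \emph{stated} in the paper, the forward measures $\mu^{\sigma}_{\cc}$ are never formally constructed there, so the appendix proof deliberately avoids resting on them, and its structural induction subsumes both of your ``locality facts''. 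Two caveats on your write-up: you have not really eliminated induction on program structure, since the frame property (and hence the $\equivstatesrel{\varseq{x}}$-invariance of each $g_k$) itself requires one, albeit for a far simpler statement; and both that property and the derivation of the key rewriting from Theorem~\ref{thm:kozen-duality} (grouping the Kozen sum by $\equivstatesrel{\varseq{x}}$-classes and applying duality a second time to $\statepred{\tau}{\varseq{x}}$, with rearrangement justified by non-negativity) should be recorded as auxiliary lemmas with their routine proofs rather than left as passing remarks.
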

\begin{proof}
	See Appendix \ref{proof:wp_prob_times_exp}.
\end{proof}
We are finally in a position to prove expressiveness (cf. \Cref{def:expressiveness}).
\begin{theorem}
	\label{thm:expressive}
	The language $\SyntE$ of syntactic expectations is expressive.
\end{theorem}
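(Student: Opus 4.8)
The plan is to prove \Cref{thm:expressive} by induction on the structure of $\cc$, exactly as in the loop-free case. All non-loop cases are handled as in the proof of \Cref{thm:expressive-loop-free}: $\SKIP$, assignment, probabilistic choice, and conditional choice go through verbatim, while sequential composition $\COMPOSE{\cc_1}{\cc_2}$ follows by composing the two induction hypotheses ($\wp{\COMPOSE{\cc_1}{\cc_2}}{\eval{\FF}} = \wp{\cc_1}{\wp{\cc_2}{\eval{\FF}}}$; first apply the I.H.\ to $\cc_2$ to obtain $\FG_2$ with $\eval{\FG_2} = \wp{\cc_2}{\eval{\FF}}$, then apply it to $\cc_1$ with postexpectation $\FG_2$). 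The only genuinely new case is the loop $\cc = \WHILEDO{\BB}{\cc'}$, for which I may assume the I.H.\ (\ref{eq:loops:ih}) on the body $\cc'$. Fixing $\FF \in \SyntE$ and the finite set $\varseq{x} = \{x_0,\ldots,x_{n-1}\}$ with $\Vars(\cc) \cup \FV{\FF} \subseteq \varseq{x}$, my task reduces, by \Cref{thm:wp_loop_as_sum}, to encoding its right-hand side---a $\Sup$ over path-lengths $k$ of an infinite sum over state sequences of a product of one-step transition weights---as a single $\FG \in \SyntE$ with $\eval{\FG} = \wp{\WHILEDO{\BB}{\cc'}}{\eval{\FF}}$.

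First I would build the one-step weight as a syntactic expectation. Introducing fresh \emph{target} variables $y_0,\ldots,y_{n-1} \notin \Vars(\cc)\cup\FV{\FF}$, the characteristic assertion $\statepred{\sigma'}{\varseq{x}}$ becomes the parametric guard $\iverson{x_0 = y_0 \wedge \ldots \wedge x_{n-1} = y_{n-1}}$. Since $\wp{\cciter}{\eval{\cdot}} = \iverson{\BB}\cdot\wp{\cc'}{\eval{\cdot}} + \iverson{\neg\BB}\cdot\eval{\cdot}$ and the $y_j$ are untouched by $\cc'$, applying the I.H.\ (\ref{eq:loops:ih}) to the body with this parametric postexpectation (as in the conditional-choice case of \Cref{thm:expressive-loop-free}) yields a syntactic expectation $\FG_{\mathrm{step}}(x_0,\ldots,x_{n-1},y_0,\ldots,y_{n-1}) \in \SyntE$ whose value at a state carrying source values in the $x_j$ and target values in the $y_j$ equals $\wp{\cciter}{\statepred{\sigma'}{\varseq{x}}}(\sigma)$.

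The heart of the construction---and the step I expect to be the main obstacle---is a substitution operator $\gapply{x}{\FH}{\gnum}$ that evaluates an arbitrary $\FH \in \SyntE$ at the program state whose $\varseq{x}$-projection is decoded from a G\"odel number $\gnum$ (cf.\ \Cref{sec:encoding_states}). The idea is to first bring $\FH$ into prenex normal form and rename its bound variables away from $x_0,\ldots,x_{n-1}$ (justified by \Cref{thm:prenex-rules}), and then set $\gapply{x}{\FH}{\gnum} \ddefeq \SupV{z_0}\cdots\SupV{z_{n-1}} \iverson{\rseqelem{\gnum}{0}{z_0} \wedge \ldots \wedge \rseqelem{\gnum}{n-1}{z_{n-1}}} \cdot \FH\subst{x_0,\ldots,x_{n-1}}{z_0,\ldots,z_{n-1}}$; since a valid state-encoding $\gnum$ pins each $z_j$ uniquely, the guard collapses the suprema onto the single decoded tuple and returns $\eval{\FH}$ at the decoded state. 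Using this operator together with the sequence predicate $\stateseq{x}{\gnum}{k}$ and the extractor $\seqelemsymbol$, I would define: (i) the $i$-th factor $\FH_{\mathrm{fac}}(\vprod)$ that decodes the consecutive states $\sigma_{\vprod}, \sigma_{\vprod+1}$ from $\gnum$ and evaluates $\FG_{\mathrm{step}}$ on them (decoding $\sigma_{\vprod}$ into the $x_j$ and $\sigma_{\vprod+1}$ into the $y_j$ via the same scheme as $\gapply$); (ii) the path-product $\gproduct{\FH_{\mathrm{fac}}}{k \monus 2} \in \SyntE$ via \Cref{thm:prod_exp}; and (iii) the summand $\FH_{\mathrm{sum}}(\gnum) \ddefeq \iverson{\stateseq{x}{\gnum}{k}} \cdot \SupV{\VU} \iverson{\seqelem{\gnum}{k \monus 1}{\VU}} \cdot \bigl(\gapply{x}{\iverson{\neg\BB}\cdot\FF}{\VU} \exprod \gproduct{\FH_{\mathrm{fac}}}{k \monus 2}\bigr)$, where $\exprod$ is the unrestricted product of \Cref{cor:unrestricted_product}. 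Note that $\stateseq{x}{\gnum}{k}$ already forces the first encoded state to agree with the current state on $\varseq{x}$, so this guard subsumes the factor $\statepred{\sigma_0}{\varseq{x}}(\sigma)$ appearing in \Cref{thm:wp_loop_as_sum}.

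Finally I would sum over all length-$k$ sequences---equivalently over the infinitely many G\"odel numbers $\gnum$ encoding them---using the infinite-series form of \Cref{thm:sum_exp}, and take the outer supremum over path lengths, giving $\FG \ddefeq \SupV{k} \Sup \VW \colon \gsumvar{\gnum}{\FH_{\mathrm{sum}}}{\VW}$; every constructor used keeps us inside $\SyntE$, and $k$ effectively ranges over $\Nats$ because $\stateseq{x}{\gnum}{k}$ is false for non-natural $k$. It then remains to verify $\eval{\FG} = \wp{\WHILEDO{\BB}{\cc'}}{\eval{\FF}}$ by unfolding the semantics of these constructors and matching them term-by-term against \Cref{thm:wp_loop_as_sum}. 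The delicate points, where the real work lies, are: proving the correctness lemma for $\gapply{x}{\cdot}{\cdot}$, including the absence of variable capture after prenexing and freshness of the $z_j$; checking that $\stateseq{x}{\gnum}{k}$ places valid length-$k$ encodings in bijection with the state sequences summed over in \Cref{thm:wp_loop_as_sum}, so that no term is dropped or double-counted; and correctly handling the degenerate lengths $k \in \{0,1\}$, where the product is empty (value $1$) and the characterization must collapse to $0$ respectively $\iverson{\neg\BB}\cdot\eval{\FF}$.
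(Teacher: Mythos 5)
Your proposal is correct and follows essentially the same route as the paper's proof: the same structural induction with \Cref{thm:wp_loop_as_sum} handling loops, the same parametric one-step weight obtained from the I.H.\ with fresh target variables (the paper's primed variables together with Lemma~\ref{lem:express_statepred_exp}), the same G\"odel-number evaluation operator (the paper's $\gsubst{x}{\cdot}{\cdot}$ and $\gapply{x}{\cdot}{\cdot}$ with Lemmas~\ref{lem:subst_goedel} and~\ref{lem:apply_goedel}), and the same assembly via \Cref{thm:sum_exp}, \Cref{thm:prod_exp}, and \Cref{cor:unrestricted_product} under an outer $\Sup$ over path lengths. The delicate points you flag are precisely where the paper does the corresponding work; in particular, your uniform $\gproduct{\FH_{\mathrm{fac}}}{k \monus 2}$ must indeed be split, as the paper's $\pathexpsymbol$ does, into the cases $\iverson{length < 2}$ and $\iverson{length \geq 2}$, since for $k = 1$ the construction of \Cref{thm:prod_exp} yields one factor (indices $0$ through $0$) rather than the empty product you correctly identify as required.
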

\begin{proof}
	By induction on the structure of $\cc$. 
    All cases except loops are completely analogous to the proof of
    Lemma~\ref{thm:expressive-loop-free}.
 Let us thus consider the case $\cc \eeq \WHILEDO{\BB}{\cc_1}$.
    We employ the syntactic $\gsumsymbol$- and $\gproductsymbol$ expectations from Theorems~\ref{thm:sum_exp} and~\ref{thm:prod_exp} to
    construct the series from Theorem~\ref{thm:wp_loop_as_sum} in $\SyntE$, thus expressing $\wp{\WHILEDO{\BB}{\cc_1}}{\sem{\FF}{}{}}$.
    %
    %
    
    The products ocurring in Theorem~\ref{thm:wp_loop_as_sum} are expressed
     by an effectively constructible  syntactic expectation $\pathexp{\VV_1}{\VV_2}$ (where $\VV_1$ and $\VV_2$ are fresh variables) satisfying:
    \begin{enumerate}
    	\item 
    	\label{eqn:pathexp_spec_1}
    If $\sigma(\VV_1) \in \Nats$ with $\sigma(\VV_1) > 0$ and  $\sigma(\VV_2)=\stateseqnum{\sigma_0,\ldots,\sigma_{\sigma(\VV_1)-1}}_{\varseq{x}}$, then 
    \begin{align}
       &\sem{\pathexp{\VV_1}{\VV_2}}{\sigma}{\interpret} \notag\\
      \eeq & (\iverson{\neg \BB} \cdot \sem{\FF}{}{})(\sigma_{\sigma(\VV_1)-1})
       \cdot
       \prod\limits_{i=0}^{\sigma(\VV_1)-2} \wp{\ITE{\BB}{\cc_1}{\SKIP}}{\statepred{\sigma_{i+1}}{\varseq{x}}}(\sigma_i)
    \end{align}
    %
    \item
    \label{eqn:pathexp_spec_2}
    If $\sigma(\VV_1)  \not\in\Nats$  or $\sigma(\VV_1) = 0$, then 
         $\sem{\pathexp{\VV_1}{\VV_2}}{\sigma}{\interpret} = 0$.
    \end{enumerate}
    Then, for the syntactic expectation 
    \begin{align*}
       \FH \eeq & \semleft{\Sup length \colon \Sup nums \colon
       	\gsumsymbol\big[\vsum, \iverson{\stateseq{x}{\vsum}{length}}}{}{} \\
       &\qquad \qquad \qquad \qquad\qquad\qquad \qquad 
       \exprod	\pathexp{length}{\vsum}, {nums} \big] \semright~,
    \end{align*}
    we have 
    ~$\wp{\WHILEDO{\BB}{\cc_1}}{\sem{\FF}{}{}} = \eval{\FH}$.
    Here, the quantifier $\Sup length$ in $\FH$ corresponds to the $\sup k$ from Theorem~\ref{thm:wp_loop_as_sum}.  The subsequent $\gsumsymbol$ expectation expresses the sum from Theorem~\ref{thm:wp_loop_as_sum}: Summing over sequences of states of length $length$ is realized by summing over all G\"odel numbers $\gnum$ satisfying $\stateseq{\varseq}{\gnum}{length}$. See \ref{proof:epressiveness_appendix} for a detailed correctness proof.
\end{proof}
\subsection{Example}
    We conclude this section by sketching the construction of a syntactic expectation for a concrete loop.
	Consider the program $\cc$ given by 
	\begin{align*}
	    &\WHILE{c = 1 } \\
	    & \quad \PCHOICE{\ASSIGN{c}{0}}{\nicefrac{1}{2}}{\ASSIGN{c}{1}}; \\
	    &\quad \ASSIGN{x}{x+1} ~ \}
	\end{align*}
	where we denote the loop body by $\cc'$.
    Morever, let $\FF \defeq x \in \SyntE$. 
    Then the syntactic
    expectation $\FH$ expressing $\wp{\WHILEDO{c=1}{\cc'}}{\sem{x}{}{}}$ as sketched in the 
    proof of Theorem~\ref{thm:expressive} is 
	%
	%
	\begin{align*}
	\FH   
	\eeq
	&\semleft{\Sup length \colon \Sup nums \colon
		\gsumsymbol\big[\vsum, \iverson{\stateseq{x}{\vsum}{length}}}{}{} \\
	&\qquad \qquad \qquad \qquad\qquad\qquad \qquad 
	\exprod	\pathexp{length}{\vsum}, {nums} \big] \semright~,
	\end{align*}
	where the syntactic expectation $\pathexppost{\FF}{length}{\VV_2}$ is defined as follows:
	\begin{align*}
	%
	%
    & \iverson{length < 2} \cdot (\Sup \gnum \colon \iverson{\seqelem{\vsum}{length -1}{\gnum}}
	\exprod \gapply{x}{(\iverson{\neg (c=1)} \cdot x)}{\gnum}) \\
	+& \iverson{length \geq 2} \cdot
	(\Sup \gnum \colon \iverson{\seqelem{\vsum}{length -1}{\gnum}}
	\exprod \gapply{x}{(\iverson{\neg (c=1)} \cdot x)}{\gnum}) \\
	&\quad\exprod
	\gproductsymbol \big(\Sup \gnum_1 \colon \Sup \gnum_2 \colon
	\iverson{\seqelem{\vsum}{\vprod}{\gnum_1} \wedge \seqelem{\vsum}{\vprod+1}{\gnum_2}} \\
	&\qquad  \quad 
	\exprod
	\gapply{\varseq{x}}{\gsubst{\varseq{x'}}{g}{\gnum_2}}{\gnum_1}, length - 2 \big)
	\end{align*}
	and where 
	\begin{align*}
	   \FG \eeq \iverson{c=1} \cdot \frac{1}{2} \cdot \left( \iverson{0=c' \wedge x+1=x'} + \iverson{1=c' \wedge x+1=x'} \right) + \iverson{\neg{(c=1)}} \cdot \iverson{c=c' \wedge x=x'}~.
	\end{align*}
	We omit unfolding $\FH$ further. Although our general construction yields rather complex syntactic preexpectations, notice we can express $\wp{\WHILEDO{c=1}{\cc'}}{\sem{x}{}{}}$ much more concisely as 
	\[
	    x \pplus \iverson{c=1} \cdot 2 \quad{}\in{}\quad \SyntE~.
	\]


\section{On Negative Numbers}\label{sec:negatives}

Throughout the paper, we have evaded supporting negative numbers in two aspects:
\begin{enumerate}
	\item In our \emph{verification system}---the weakest preexpectation calculus---we allow expectations, both syntactic and semantic, to map program states to \emph{non-negative} values in $\PosRealsInf$ only.
	\item In our \emph{programming language}, we allow variables to assume \emph{non-negative} values in $\PosRats$ only.
\end{enumerate}%
While the former restriction is fairly standard in the literature on probabilistic programs (cf.~\cite{McIverM05}), considering only unsigned program variables is less common.
An attentive reader may thus ask whether our completeness results rely on the above restrictions.
In this section, we briefly comment on our reasons for
considering only non-negative numbers.
Moreover, we discuss how one \emph{could} incorporate support for negative numbers in both of the above aspects.

\subsection{Signed Expectations}

There exist approaches that support signed expectations, which allow arbitrary reals in their codomain.
However, as working with signed expectations may lead to integrability issues, 
these approaches require a significant technical overhead (cf. \cite{DBLP:conf/lics/KaminskiK17} for details). 
Moreover, proof rules for loops become much more involved.
Calculi like Kozen's PPDL \emph{in principle} allow signed expectations off-the-shelf, but PPDL's induction rule for loops is restricted to non-negative expectations as well~\cite{Kozen1983}.
We thus opted for the more common approach of considering only unsigned expectations.
An alternative is to perform a \emph{Jordan decomposition} on the expectation (\ie decomposing it into positive and negative parts) and then reason individually about the positive and the negative part.
As outlined below, such a decomposition can already be performed on program level \emph{without} changing the verification system.

\subsection{Signed Program Variables}

Omitting negative numbers does \emph{not} affect our results because they can easily be encoded in our (Turing complete) programming language:
we can emulate signed variables, for instance, by splitting each variable $x$ into two variables $\abs{x}$ and $x_{\mathit{sgn}}$, representing the absolute value of $x$ and its sign ($x_{\mathit{sgn}} = 1$ if $x$ negative, and $x_{\mathit{sgn}} = 0$ otherwise), respectively.
With this convention, 
the program below emulates 
the subtraction assignment $\ASSIGN{z}{x - y}$ using only addition and monus: 
\begin{align*}
	& \IF{x_{\mathit{sgn}} = y_{\mathit{sgn}}} \tag*{\textcolor{gray}{\texttt{// calculuate magnitude of $z$}}}\\
	& \qquad \ASSIGN{\abs{z}}{\bigl(\abs{x} \monus \abs{y} \bigr) \pplus \bigl(\abs{y} \monus \abs{x} \bigr)} \\
	& \ELSE\\
	& \qquad \ASSIGN{\abs{z}}{\abs{x} + \abs{y}} \\
	& \COMPOSE{\}}{} \\
	& \IF{\abs{x} > \abs{y}} \tag*{\textcolor{gray}{\texttt{// calculuate sign of $z$}}}\\
	& \qquad \ASSIGN{z_{\mathit{sgn}}}{x_{\mathit{sgn}}} \\
	& \ELSE \\
	& \qquad \IF{\abs{x} = \abs{y}} \\
	& \qquad \qquad \ASSIGN{z_{\mathit{sgn}}}{0} \\
	& \qquad \ELSE \\
	& \qquad \qquad \ASSIGN{z_{\mathit{sgn}}}{1 \monus y_{\mathit{sgn}}} \\
	& \qquad \} \\
	& \}
\end{align*}%
Similar emulations can be performed for addition, multiplication, etc.
For the purpose of proving relative completeness, signed variables are thus syntactic sugar; we omit them for simplicity.

Our main reason for disallowing negative numbers as values of program variables is that we want~$x$ to be a valid (unsigned) expectation.
If $x$ was signed, it would not be a valid expectation as it does not map only to non-negative values.
In order to fix this problem to some extent, one would have to \enquote{make $x$ non-negative}, \eg by instead using the expectation $\iverson{x \geq 0} \cdot x$ ($x$ truncated at $0$) or the expectation $|x|$ (absolute value of $x$; not supported (but can be encoded) in our current syntax). However, neither of the above expectations actually represents \enquote{the value of $x$}.


\section{Discussion}\label{sec:applications}

We now discuss a few aspects
in which our expressive language $\SyntE$ of expectations could be useful.

\subsection{Relative Completeness of Probabilistic Program Verification}
An immediate consequence of \Cref{thm:expressive} is that, for all $\pgcl$ programs $\cc$ and all syntactic expectations $\FF, \FG \in \SyntE$, verifying the bounds 
\begin{align*}
        \eval{\FG} \ppreceq \wp{\cc}{\eval{\FF}} 
        \qquad\text{or}\qquad
        \wp{\cc}{\eval{\FF}} \ppreceq \eval{\FG}
\end{align*}
reduces to \emph{checking a single inequality} between two syntactic expectations in $\SyntE$, namely $\FG$ and the \emph{effectively constructible expectation} for $\wp{\cc}{\eval{\FF}}$.
In that sense, the $\wpsymbol$ calculus together with $\SyntE$ form a \emph{relatively complete} (cf.~\cite{Cook1978SoundnessAC}) system for probabilistic program verification.
Given an oracle for discharging inequalities between syntactic expectations, 
every correct inequality of the above form can be derived.

\subsection{Termination Probabilities}

For each probabilistic program $\cc$, the weakest preexpectation%
\begin{align*}
	\wp{\cc}{1}
\end{align*}%
is a mapping from initial state $\pstate$ to the \emph{probability that $C$ terminates on $\pstate$}.
Since $1 \in \SyntE$, \emph{termination probabilities of any \pgcl program on any input are expressible in our syntax}.

This demonstrates that our syntax is capable of capturing mappings from states to numbers that are \emph{far from trivial} as termination probabilities in general carry a \emph{high degree of internal complexity}~\cite{DBLP:conf/mfcs/KaminskiK15,acta19}.
More concretely, given $C$, $\pstate$, and $\alpha$, deciding whether $C$ terminates on $\pstate$ \emph{at least} with probability $\alpha$ is $\Sigma_1^0$--complete in the arithmetical hierarchy.
Deciding whether $C$ terminates on $\pstate$ \emph{at most} with probability $\alpha$ is even $\Pi_2^0$--complete, thus strictly harder than, \eg the universal termination problem for non-probabilistic programs.

\subsection{Probability to Terminate in Some Postcondition}

For a probabilistic program $\cc$ and a first-order predicate $\iverson{\BB}$, the weakest preexpectation%
\begin{align*}
	\wp{\cc}{\iverson{\varphi}}
\end{align*}%
is a mapping from initial state $\pstate$ to the \emph{probability that $C$ terminates on $\pstate$ in a state $\tau \models \BB$}.
Since $\iverson{\varphi}$~is expressible in $\SyntE$, we have that $\wp{\cc}{\iverson{\varphi}}$ is also expressible in $\SyntE$ by expressivity of $\SyntE$. 
We can thus embed \emph{and generalize Dijkstra's weakest preconditions completely in our system}.

\subsection{Distribution over Final States}

Let $\cc$ be a probabilistic program in which only the variables $\XX_1,\, \ldots,\, \XX_k$ occur.
Moreover, let $\mu_\cc^\pstate$ be the final distribution obtained by executing $\cc$ on input $\pstate$, cf.~\Cref{sec:forward-semantics}.
Then, by the Kozen duality (cf.~\Cref{thm:kozen-duality}), we can express the probability $\mu_\cc^\pstate(\tau)$ of $\cc$ terminating in final state $\tau$ on initial state $\pstate$, where~\mbox{$\tau(\XX_i) = \XX_i'$}, by
\begin{align*}
	\mu_{\cc}^{\pstate}(\tau) \eeq \wp{\cc}{\iverson{\XX_1 = \XX_1' \wedge {\cdots} \wedge \XX_k = \XX_k'}}(\sigma)~.
\end{align*}
Intuitively, we can write the initial values of $\XX_1,\, \ldots,\, \XX_k$ into $\sigma(\XX_1),\, \ldots,\, \sigma(\XX_k)$ and the final values into $\sigma(\XX_1'),\, \ldots,\, \sigma(\XX_k')$.

Since $\iverson{\XX_1 = \XX_1' \wedge {\cdots} \wedge \XX_k = \XX_k'} \in \SyntE$, we have that $\wp{\cc}{\iverson{\XX_1 = \XX_1' \wedge {\cdots} \wedge \XX_k = \XX_k'}}$ is expressible in $\SyntE$ as well.
Hence, \emph{we can express Kozen's measure transformers in our syntax}.

\subsection{Ranking Functions / Supermartingales}

There is a plethora of methods for proving termination of probabilistic programs based on ranking supermartingales~\cite{DBLP:conf/cav/ChakarovS13,DBLP:conf/popl/FioritiH15,DBLP:conf/popl/ChatterjeeFNH16,DBLP:conf/popl/ChatterjeeNZ17,DBLP:conf/aplas/HuangFC18,DBLP:conf/vmcai/FuC19,DBLP:journals/pacmpl/Huang0CG19}.
Ranking supermartingales are similar to ranking functions, but one requires that the value decreases \emph{in expectation}.
Weakest preexpectations are the natural formalism to reason about this.

For algorithmic solutions, ranking supermartingales are often assumed to be, for instance, linear~\cite{DBLP:journals/toplas/ChatterjeeFNH18} or polynomial~\cite{DBLP:conf/cav/ChatterjeeFG16,DBLP:journals/corr/abs-1910-12634,DBLP:conf/pldi/NgoC018}.
This also applies to the allowed shape of templates for loop invariants in works~\cite{DBLP:conf/sas/KatoenMMM10,DBLP:conf/atva/FengZJZX17} on the automated synthesis of probabilistic loop invariants.
\emph{Functions linear or polynomial in the program variables are obviously subsumed by our syntax.}
However, our syntax now enables searching for \emph{wider} tractable classes. 

\subsection{Harmonic Numbers}

Harmonic numbers are ubiquitous in reasoning about expected values or expected runtimes of randomized algorithms.
They appear, for instance, as the expected runtime of Hoare's randomized quicksort or the coupon collector problem, or as ranking functions for proving almost-sure termination~\cite{benni_diss,OlmedoKKM16,DBLP:journals/jacm/KaminskiKMO18,DBLP:journals/pacmpl/McIverMKK18}.
Harmonic numbers are syntactically expressible in our language as in \Cref{ex:harmonic}, or more conveniently as
\begin{align*}
	H_\XX \eeq \eval{\gsum{\tfrac{1}{\vsum}}{\XX}}~, \qquad \textnormal{where }~ \tfrac{1}{\vsum} \eeq \Sup \XZ \colon \iverson{\XZ \cdot \vsum = 1}\cdot \XZ~.
\end{align*}%
We note that, in termination proofs, the Harmonic numbers do \emph{not} occur as termination probabilities, but rather \emph{in ranking functions} whose expected values after one loop iteration need to be determined.
Our syntax is capable of handling such ranking functions and we could safely add $H_x$ to our syntax.


\section{Conclusion and Future Work}\label{sec:conclusion}

We have presented a \emph{language of syntactic expectations} that is \emph{expressive for weakest preexpectations} of probabilistic programs \'{a} la \citet{Kozen1985} and \citet{McIverM05}.
As a consequence, verification of bounds on expected values of functions (expressible in our language) after probabilistic program execution is \emph{relative complete} in the sense of~\citet{Cook1978SoundnessAC}.

We have discussed various scenarios covered by our language, such as reasoning about termination probabilities, thus demonstrating the language's usefulness.

\subsubsection*{Future Work}
We currently do not support probabilistic programs with (binary) \emph{non-deterministic} choices, as do \citet{McIverM05}, and it is not obvious how to incorporate it, given our current encoding.
What seems even more out of reach is handling \emph{unbounded non-determinism}, which would be needed, for instance, to come up with an expressive expectation language for \emph{quantitative separation logic} (\QSL)---an (extensional) verification system for compositional reasoning about probabilistic pointer programs with access to a heap~\cite{DBLP:journals/pacmpl/BatzKKMN19,Matheja20}.

For non-probabilistic heap-manipulating programs, a topic considered by~\citet{expressiveness_sl} are inductive definitions of predicates in classical separation logic (\SL) and proving that  \SL is expressive in this context. 
\QSL also features inductive definitions and it would be interesting to consider expressiveness in this setting.
%

Despite its similarity to the $\wpsymbol$ calculus, we did not consider the \emph{expected runtime calculus} ($\textsf{ert}$) by \citet{DBLP:journals/jacm/KaminskiKMO18}.
We strongly conjecture that $\SyntE$ is expressive for expected \mbox{runtimes as well}.

Finally, the \emph{conditional weakest preexpectation} calculus (\textsf{cwp})~\cite{benni_diss,DBLP:journals/toplas/OlmedoGJKKM18} for probabilistic programs with \emph{conditioning} needs weakest \emph{liberal} preexpectations, which generalize Dijkstra's weakest liberal preconditions.
It currently remains open, whether $\wlp{\cc}{f}$ is expressible in $\SyntE$.
There is the duality $\wlp{\cc}{f} = 1 - \wp{\cc}{1{-}f}$, originally due to \citet{Kozen1983}, but it is not immediate how to express $1{-}f$ in $\SyntE$, if $f$ is not a plain arithmetic expression.

\subsubsection*{Acknowledgements.}
We thank David N.\ Jansen for his feedback on an older version of this paper.

\bibliography{literature}

\appendix
\newpage

\section{Appendix to Section~\ref{sec:normalforms} (The Dedekind Normal Form)}

\subsection{Proof of Lemma~\ref{thm:prenex-rules}}

\begin{lemma}
	\label{thm:prenex:aux}
	Let $\alpha \in \PosReals$ and $A,B \subseteq \PosRealsInf$. Then, we have:
	\begin{enumerate}
		\item\label{thm:prenex:aux:sup-mult}
		\makebox[2.8cm]{$\alpha \cdot \sup A$} $\eeq \sup \{ \alpha \cdot a \mmid a \in A \}$,
		\item\label{thm:prenex:aux:inf-mult}
		\makebox[2.8cm]{$\alpha \cdot \inf A$} 
		$\eeq \inf \{ \alpha \cdot a \mmid a \in A \}$,
		\item\label{thm:prenex:aux:sup-plus}
		\makebox[2.8cm]{$(\sup A) + (\sup B)$} 
		$\eeq \sup \{ \beta + \gamma \mmid \beta \in A, \gamma \in B \}$, 
		\item\label{thm:prenex:aux:inf-plus}
		\makebox[2.8cm]{$(\inf A) + (\inf B)$} 
		$\eeq \inf \{ \beta + \gamma \mmid \beta \in A, \gamma \in B \}$, and
		\item\label{thm:prenex:aux:singleton}
		if $A$ is a singleton, i.e.,  $A = \{ \beta \}$, then $\sup A = \inf A = \beta$,
	\end{enumerate}
	where we define $0 \cdot \infty = 0$.
\end{lemma}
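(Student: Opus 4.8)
The plan is to establish the five identities essentially independently: first clear the trivial singleton identity, then split the two multiplicative and the two additive identities into cases dictated by the extended arithmetic, in particular the conventions $0 \cdot \infty = 0$, $\sup \emptyset = 0$, and $\inf \emptyset = \infty$. For the singleton case $A = \{\beta\}$, both $\sup A = \beta$ and $\inf A = \beta$ hold directly from the definitions of supremum and infimum of a one-element subset of the complete lattice $(\PosRealsInf, \leq)$, so nothing further is needed.

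For the two multiplicative identities I would case-split on $\alpha$. If $\alpha > 0$, the map $m_\alpha \colon \PosRealsInf \to \PosRealsInf$, $x \mapsto \alpha \cdot x$ (with $\alpha \cdot \infty = \infty$), is an order-automorphism of the complete lattice: it is monotone, bijective with monotone inverse $x \mapsto x / \alpha$, and fixes $\infty$. Order-automorphisms preserve arbitrary suprema and infima, which yields both $\alpha \cdot \sup A = \sup\{\alpha \cdot a \mid a \in A\}$ and $\alpha \cdot \inf A = \inf\{\alpha \cdot a \mid a \in A\}$ at once, and the argument remains valid for $A = \emptyset$ since $m_\alpha(0) = 0$ and $m_\alpha(\infty) = \infty$. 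If $\alpha = 0$, then $\alpha \cdot x = 0$ for every $x$ (including $x = \infty$), so for the supremum identity both sides reduce to $0$ regardless of whether $A$ is empty, while for the infimum identity the left-hand side is $0$ and the right-hand side is $\inf\{0\} = 0$ provided $A \neq \emptyset$; this is the single point at which the conventions force a nonemptiness proviso, which is met in every intended application since the relevant index sets range over the nonempty set $\PosRats$.

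For the two additive identities I would argue directly from the defining least-upper-bound and greatest-lower-bound properties together with the monotonicity of $+$ on $\PosRealsInf$. Writing $S = \{\beta + \gamma \mid \beta \in A,\ \gamma \in B\}$, the ``$\geq$'' direction of the supremum identity follows because $\beta + \gamma \leq \sup A + \sup B$ for all $\beta \in A$, $\gamma \in B$, so $\sup A + \sup B$ is an upper bound of $S$; for the reverse direction I would first record the one-set shift law $c + \sup A = \sup\{c + a \mid a \in A\}$ for nonempty $A$ (again an instance of a monotone map preserving suprema, valid even for $c = \infty$ or $\sup A = \infty$) and apply it twice: $\sup A + \sup B = \sup_{\gamma \in B}\bigl(\sup A + \gamma\bigr) = \sup_{\gamma \in B} \sup_{\beta \in A}(\beta + \gamma) = \sup S$. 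The infimum identity is entirely dual, using $\inf A + \inf B$ as a lower bound of $S$ and the shift law $c + \inf A = \inf\{c + a \mid a \in A\}$ for nonempty $A$.

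I expect the main obstacle to be not the mathematical content---which is routine order theory about monotone maps commuting with suprema and infima---but the careful bookkeeping of the non-standard conventions at the boundary, namely $0 \cdot \infty$, empty index sets, and $\infty$ occurring as a summand. These are precisely the spots where a naive argument breaks: several of the stated equalities (the infimum identity when $\alpha = 0$, and the additive identities when exactly one of $A, B$ is empty) in fact require the implicit nonemptiness of $A$ and $B$. I would therefore make this proviso explicit and verify each degenerate combination of $\{0, \infty\}$ with the empty-set conventions by hand, since these are the only cases in which the order-isomorphism viewpoint does not immediately apply.
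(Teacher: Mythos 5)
Your proof is sound, but there is nothing in the paper to compare it against: the paper states this lemma as an unproven auxiliary fact (it is used exactly once, in the proof of Lemma~\ref{thm:prenex-rules}) and never proves it. Your argument---treating $x \mapsto \alpha \cdot x$ for $\alpha > 0$ as an order-automorphism of $(\PosRealsInf, \leq)$ to obtain items (1) and (2) simultaneously, handling $\alpha = 0$ by hand, and proving the additive identities via the upper-/lower-bound direction plus a one-sided shift law applied twice---is the standard argument one would expect, and it is correct for nonempty $A$ and $B$, which covers every use in the paper.

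Two remarks on your edge-case analysis. First, your observation that the statement as printed is literally false in degenerate cases is correct and worth making: for item (2), taking $\alpha = 0$ and $A = \emptyset$ gives $0 \cdot \inf \emptyset = 0 \cdot \infty = 0$ on the left but $\inf \emptyset = \infty$ on the right; for item (3), taking $A = \emptyset$ and $B$ nonempty with $\sup B > 0$ gives $\sup B$ on the left but $\sup \emptyset = 0$ on the right. As you note, this is harmless for the paper, since in the proof of Lemma~\ref{thm:prenex-rules} every set involved has the form $\setcomp{\cdots}{\RR \in \PosRats}$ and is therefore nonempty. Second, you overreach slightly in claiming that \emph{both} additive identities need a nonemptiness proviso: item (4) in fact holds even when one or both of $A$, $B$ are empty, because $\inf \emptyset = \infty$ and $\infty$ is absorbing for $+$, so both sides equal $\infty$; your dual argument simply does not apply there, but the case is trivially checked. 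Only items (2) and (3) genuinely require nonemptiness.
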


\emph{Proof of \Cref{thm:prenex-rules}}.
	\label{proof:prenex-rules}
	By definition of equivalence between expectations, we have 
	\begin{align*}
	\FF_1 \equiv \FF_2 
	\qiff 
	\text{for all $\sigma$}\colon \sem{\FF_1}{\sigma}{\interpret} \eeq \sem{\FF_2}{\sigma}{\interpret}.
	\end{align*}
	Let us fix an arbitrary state $\sigma$.
	
	To prove Lemma~\ref{thm:prenex-rules}~(\ref{thm:prenex-rules:plus-left}) for $\Quant = \Sup$, we proceed as follows:
	\begin{align*}
	& \sem{(\SupV{\VV} \FF_1) \pplus \FF_2}{\sigma}{\interpret} \\
	\eeq & 
	\sup~\setcomp{\sem{\FF_1}{\pstate\statesubst{\VV}{\RR}}{\interpret\statesubst{\VV}{\RR}}}{\RR \in \PosRats} 
	\pplus \sem{\FF_2}{\pstate}{\interpret} 
	\tag{Semantics of expecations} \\
	\eeq & 
	\sup~\setcomp{\sem{\FF_1}{\pstate\statesubst{\VV}{\RR}}{\interpret\statesubst{\VV}{\RR}}}{\RR \in \PosRats} 
	\pplus 
	\sup~\left\{\sem{\FF_2}{\pstate\statesubst{\VV}{\RR}}{\interpret}\right\}
	\tag{Lemma~\ref{thm:prenex:aux}~(\ref{thm:prenex:aux:singleton})} \\
	\eeq & 
	\sup~\setcomp{\sem{\FF_1\subst{\VV}{\VV'}}{\pstate\statesubst{\VV'}{\RR}}{\interpret\statesubst{\VV'}{\RR}} + \sem{\FF_2}{\pstate}{\interpret\statesubst{\VV'}{\RR}}}{\RR \in \PosRats} 
	\tag{Lemma~\ref{thm:prenex:aux}~(\ref{thm:prenex:aux:sup-plus}), $\VV'$ fresh} \\
	\eeq & 
	\sem{\SupV{\VV'} \FF_1\subst{\VV}{\VV'} \pplus \FF_2 }{\sigma}{\interpret}.
	\tag{Semantics of expectation}
	\end{align*}
	The proofs for $\Quant = \Inf$ as well as the proof of Lemma~\ref{thm:prenex-rules}~(\ref{thm:prenex-rules:plus-right}) are completely analogous.
	To prove Lemma~\ref{thm:prenex-rules}~(\ref{thm:prenex-rules:mult-term}) for $\Quant = \Inf$, we proceed as follows:
	\begin{align*}
	& \sem{\TT \cdot \InfV{\VV} \FF}{\sigma}{\interpret} \\
	\eeq & 
	\sem{\TT}{\sigma}{\interpret} \cdot \inf \setcomp{ \sem{\FF}{\sigma\statesubst{\VV}{\RR}}{\interpret\statesubst{\VV}{\RR}} }{ \RR \in \PosRats }
	\tag{Semantics of expectations} \\
	\eeq & 
	\inf \setcomp{\sem{\TT}{\sigma}{\interpret} \cdot \sem{\FF}{\sigma\statesubst{\VV}{\RR}}{\interpret\statesubst{\VV}{\RR}} }{ \RR \in \PosRats }
	\tag{Lemma~\ref{thm:prenex:aux}~(\ref{thm:prenex:aux:inf-mult})} \\
	\eeq & 
	\inf \setcomp{\sem{\TT}{\sigma}{\interpret\statesubst{\VV'}{\RR}} \cdot \sem{\FF\subst{\VV}{\VV'}}{\sigma\statesubst{\VV'}{\RR}}{\interpret\statesubst{\VV'}{\RR}} }{ \RR \in \PosRats }
	\tag{$\VV'$ fresh} \\
	\eeq & 
	\sem{\InfV{\VV'} \TT \cdot \FF\subst{\VV}{\VV'}}{\sigma}{\interpret} 
	\tag{Semantics of expectations}.
	\end{align*}
	The proofs for $\Quant = \Sup$ as well as the proof of Lemma~\ref{thm:prenex-rules}~(\ref{thm:prenex-rules:mult-guard}) are completely analogous.

\subsection{Proof of Theorem~\ref{thm:dedekind_nf}}

First, we prove that every $\FF \in\SyntE$ is equivalent to some expectation in \emph{summation normal form}. For that, we employ an auxiliary result:

\begin{lemma}
	\label{lem:quantifier_free_smf}
	Let $\FF \in \SyntE$ be quantifier-free. Then there exist (1) a natural number $n\geq 1$, (2) Boolean expressions $\BB_1,\ldots,\BB_n$, and (3) terms $\TT_1,\ldots,\TT_n$ such that $\FF$ is equivalent to an expectation \mbox{$\FF'$ given by}
	\[
	\FF' \eeq \sum\limits_{i=1}^{n} \iverson{\BB_i}\cdot \TT_i ~,
	\]
	where the above sum is a shorthand for $\iverson{\BB_1}\cdot \TT_1 + \ldots + \iverson{\BB_n}\cdot \TT_n$.
\end{lemma}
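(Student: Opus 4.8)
The plan is to prove the claim by structural induction on the quantifier-free expectation $\FF$. Since $\FF$ contains neither $\Sup$ nor $\Inf$, the grammar of \Cref{sec:syntax:exp} leaves exactly four cases: an arithmetic expression $\TT$, a guarding $\iverson{\BB}\cdot\FG$, a sum $\FG_1 + \FG_2$, and a scaling $\TT\cdot\FG$. I would first record a useful invariant: a quantifier-free expectation always evaluates to a finite rational in $\PosRats$, because—as noted in \Cref{sec:semantics}—our semantics can produce $\infty$ only through a $\Sup$ quantifier. Hence throughout the argument we may freely use the ordinary arithmetic identities of $\PosRats$ without worrying about the $0\cdot\infty = 0$ convention. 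For the base case $\FF = \TT$, we simply take $n = 1$, $\BB_1 = \true$, and $\TT_1 = \TT$, so that $\FF \equiv \iverson{\true}\cdot\TT$ is already in summation normal form.

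For the inductive step, I would invoke the induction hypothesis to put each immediate subexpectation in summation normal form and then push the outermost operator inside the sum. The addition case is immediate: if $\FG_1 \equiv \sum_{i=1}^{m}\iverson{\BB_i}\cdot\TT_i$ and $\FG_2 \equiv \sum_{j=1}^{m'}\iverson{\BBb_j}\cdot\TTb_j$, then $\FG_1 + \FG_2$ is equivalent to the concatenated sum, which already has the required shape. For the guarding case, writing $\FG \equiv \sum_{i=1}^{m}\iverson{\BB_i}\cdot\TT_i$, I would use that a $\{0,1\}$-valued factor distributes over addition, i.e. $\iverson{\BB}\cdot\sum_{i}\iverson{\BB_i}\cdot\TT_i \equiv \sum_{i}\iverson{\BB}\cdot\iverson{\BB_i}\cdot\TT_i$, together with $\iverson{\BB}\cdot\iverson{\BB_i} \equiv \iverson{\BB\wedge\BB_i}$, yielding the normal form $\sum_{i}\iverson{\BB\wedge\BB_i}\cdot\TT_i$. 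For the scaling case, writing again $\FG \equiv \sum_{i=1}^{m}\iverson{\BB_i}\cdot\TT_i$, I would distribute the finite scalar $\TT$ over the sum and then absorb it into each summand via $\TT\cdot(\iverson{\BB_i}\cdot\TT_i) \equiv \iverson{\BB_i}\cdot(\TT\cdot\TT_i)$; since the product $\TT\cdot\TT_i$ of two arithmetic expressions is again an arithmetic expression, the result $\sum_{i}\iverson{\BB_i}\cdot(\TT\cdot\TT_i)$ is in summation normal form.

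The only point requiring genuine care—and the closest thing to an obstacle—is justifying the two distributivity steps (pushing $\iverson{\BB}$ and $\TT$ across a sum) and the two absorption identities. Each of these I would verify by a routine pointwise case distinction at a fixed state $\sigma$, splitting on whether the relevant guard evaluates to $\true$ or $\false$. Here the finiteness invariant recorded above is exactly what makes these verifications unconditional: were $\infty$ permitted, distributing a scalar could clash with $0\cdot\infty = 0$ precisely as in the counterexample of \Cref{sec:note-on-f-times-f}, but quantifier-freeness rules this out. With these equivalences established the induction closes immediately, and the resulting $\FF'$ is moreover obtained effectively from $\FF$.
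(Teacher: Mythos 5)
Your proof is correct and follows essentially the same route as the paper's: structural induction on quantifier-free $\FF$, with the base case $\TT \equiv \iverson{\true}\cdot\TT$, concatenation for sums, and distribution plus the identities $\iverson{\BB}\cdot\iverson{\BB_i} \equiv \iverson{\BB\wedge\BB_i}$ and $\TT\cdot\TT_i \in \Terms$ for the guarding and scaling cases. Your explicit finiteness invariant is a harmless strengthening of the paper's terser justification that ``$\cdot$ distributes over $+$ in the quantifier-free setting'' (distributivity in fact holds even in $\PosRealsInf$ under the convention $0\cdot\infty=0$; the pathology of \Cref{sec:note-on-f-times-f} concerns quantifier prenexing, not distributivity), so nothing is lost.
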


\begin{proof}
	
	By induction on the structure of quantifier-free syntactic expectations. \\ \\
	\noindent
	\emph{Base case $\FF = \TT$.} The expectation $\FF$ is obviously equivalent to 
	\[
	\FF' \eeq \sum\limits_{i=1}^{1} \iverson{\true}\cdot \TT ~.
	\]
	As the induction hypothesis now assume that for some arbitrary, but fixed, quantifier-free syntactic expectations $\FF_1$ and $\FF_2$
	there are expectations $\FF_1'$ and $\FF_2'$ equivalent to $\FF_1$ and $\FF_2$, respectively, given by
	\begin{align*}
	\FF_1' &\eeq \sum\limits_{i=1}^{n} \iverson{\BB_i}\cdot \TT_i~,~\text{and} \\
	\FF_2' &\eeq \sum\limits_{i=1}^{m} \iverson{\BB_i'}\cdot \TT_i' ~.
	\end{align*}
	\noindent
	\emph{The case $\FF = \TT \cdot \FF_1$.} We have
	\begin{align*}
	&\TT \cdot \FF_1 \\
	\eequiv & \TT \cdot \sum\limits_{i=1}^{n} \iverson{\BB_i}\cdot \TT_i
	\tag{by I.H.} \\
	\eequiv &  \sum\limits_{i=1}^{n} \iverson{\BB_i} \cdot \TT \cdot \TT_i
	\tag{$\cdot$ distributes over $+$ in the quantifier-free setting} \\
	\eequiv &  \sum\limits_{i=1}^{n} \iverson{\BB_i} \cdot u_i ~.
	\tag{let $u_i = \TT \cdot \TT_i$}
	\end{align*}
The case $\FF = \FF_1 \cdot \TT$ is completely analogous. \\ \\
	%
	\noindent
	\emph{The case $\FF = \iverson{\BB} \cdot \FF_1$.} We have
	\begin{align*}
	&\iverson{\BB} \cdot \FF_1 \\
	\eequiv & \iverson{\BB} \cdot \sum\limits_{i=1}^{n} \iverson{\BB_i}\cdot \TT_i
	\tag{by I.H.} \\
	\eequiv &  \sum\limits_{i=1}^{n} \iverson{\BB} \cdot  \iverson{\BB_i} \cdot \TT_i
	\tag{$\cdot$ distributes over $+$ in the quantifier-free setting} \\
	\eequiv &  \sum\limits_{i=1}^{n} \iverson{G_i} \cdot  \TT_i ~.
	\tag{let $G_i = \BB \wedge \BB_i$}
	\end{align*}
The case $\FF = \FF_1 \cdot \iverson{\BB}$ is completely analogous. \\ \\
	\noindent
	\emph{The case $\FF = \FF_1 + \FF_2$.} This case is trivial since
	\begin{align*}
	&\FF_1 + \FF_2
	\eequiv  \FF_1' + \FF_2'~,
	\tag{by I.H.}
	\end{align*}
	where $\FF_1' + \FF_2'$ is of the desired form.
	This completes the proof.
\end{proof}

\begin{theorem}[Summation Normal Form]
	\label{thm:summation_nf}
	Every syntactic expectation $\FF$ is equivalent to an expectation $\FF'$ in \emph{summation normal form}, i.e.\
	$\FF'$ is of the form
	\[
	\FF' \eeq \Quant_1 \VV_1 \ldots \Quant_k \VV_k \colon \sum\limits_{i=1}^{n} \iverson{\BB_i}\cdot \TT_i ~.
	\]
\end{theorem}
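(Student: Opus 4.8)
The plan is to prove the theorem in two stages, each discharged by one of the two lemmas already in hand. First I would transform $\FF$ into an equivalent expectation whose quantifiers all sit out front, i.e.\ into \emph{prenex} form $\Quant_1 \VV_1 \ldots \Quant_k \VV_k \colon \FF''$ with $\FF''$ quantifier-free; this is exactly what the equivalences of \Cref{thm:prenex-rules} buy us. Second, I would invoke \Cref{lem:quantifier_free_smf} on the quantifier-free body $\FF''$ to rewrite it as $\sum_{i=1}^{n}\iverson{\BB_i}\cdot\TT_i$. Because $\equiv$ is a congruence with respect to the $\Sup$ and $\Inf$ constructs (their semantics depends only on the pointwise semantics of the body), replacing $\FF''$ by this equivalent sum underneath the common quantifier prefix preserves the semantics, yielding precisely the claimed summation normal form $\FF' \eeq \Quant_1 \VV_1 \ldots \Quant_k \VV_k \colon \sum_{i=1}^{n}\iverson{\BB_i}\cdot\TT_i$.

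The first stage I would carry out by structural induction on $\FF$. The base case $\FF = \TT$ is already quantifier-free, hence prenex with an empty prefix. For the two quantifier cases $\FF = \SupV{\XX} \FG$ and $\FF = \InfV{\XX} \FG$, the induction hypothesis supplies a prenex form of $\FG$, and I simply prepend the outer quantifier to its prefix. The three remaining cases---$\FF = \FG_1 + \FG_2$, $\FF = \TT \cdot \FG$, and $\FF = \iverson{\BB}\cdot\FG$---are the ones that actually move quantifiers: after applying the induction hypothesis to the subexpectations, I would repeatedly pull the leading quantifier of a subexpectation's prefix to the front using, respectively, rules~(\ref{thm:prenex-rules:plus-left})/(\ref{thm:prenex-rules:plus-right}), rule~(\ref{thm:prenex-rules:mult-term}), and rule~(\ref{thm:prenex-rules:mult-guard}) of \Cref{thm:prenex-rules}, peeling them off one at a time (an inner induction on the length of the subexpectation's prefix) until the body is quantifier-free.

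The main obstacle---and the only genuinely delicate point---is variable capture when prefixes are merged, most visibly in the addition case, where $\FG_1$ and $\FG_2$ each carry their own quantifier block and these may reuse the same logical variable or clash with a free variable of the other summand. This is exactly why each pull-out rule in \Cref{thm:prenex-rules} is stated with a \emph{fresh} variable $\VV'$ together with the renaming $\FG\subst{\VV}{\VV'}$: before every application I would rename the quantified variable to one occurring nowhere else in the whole expectation, so that no free occurrence is accidentally bound. I expect the rest to be routine bookkeeping---one fixes an order for interleaving the two prefixes (e.g.\ exhaust $\FG_1$'s prefix first via rule~(\ref{thm:prenex-rules:plus-left}), then $\FG_2$'s via rule~(\ref{thm:prenex-rules:plus-right})), and checks that each intermediate step satisfies the freshness side condition---so that no new semantic argument is needed beyond the equivalences of \Cref{thm:prenex-rules} and the definition of $\eequiv$.
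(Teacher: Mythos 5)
Your proposal is correct and follows essentially the same route as the paper's proof: first obtain a prenex normal form via the transformation rules of \Cref{thm:prenex-rules} (the paper likewise notes that capture is avoided by renaming with fresh variables), then apply \Cref{lem:quantifier_free_smf} to the quantifier-free body underneath the common prefix. Your additional elaboration---the structural induction realizing the prenex transformation and the explicit appeal to $\equiv$ being a congruence under $\Sup$/$\Inf$---merely spells out details the paper leaves implicit.
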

\begin{proof}
	\label{proof:summation_nf}
	By Lemma~\ref{thm:prenex-rules}, \FF is equivalent to an expectation in prenex normal form, i.e.,
	\[
	\FF \equiv \Quant_1 \VV_1 \ldots \Quant_k \VV_k \colon g~,
	\]
	where $g$ is quantifier-free. 
	By Lemma~\ref{lem:quantifier_free_smf}, $g$ is then equivalent to an expectation of the form
	\[
	\sum\limits_{i=1}^{n} \iverson{\BB_i}\cdot \TT_i~.
	\]
	Hence, \FF is equivalent to the following expectation in summation normal form:
	\[
	\FF' \eeq \Quant_1 \VV_1 \ldots \Quant_k \VV_k \colon \sum\limits_{i=1}^{n} \iverson{\BB_i}\cdot \TT_i~.
	\]
\end{proof}

Towards the construction of the Dedekind normal form of an expectation, we first show that the Dedekind cut of an expectation $\FF$ is definable in $\FOArithPosRats$.
\begin{definition}
	Let $\FF \in \SyntE$ be in summation normal form, say
	\[
	\FF \eeq \Quant_1 \VV_1 \ldots \Quant_k \VV_k \colon \sum\limits_{i=1}^{n} \iverson{\BB_i}\cdot \TT_i~.
	\]
	Now let $\VVcut$ be a fresh variable. We construct a formula $\DD{\FF}$ inductively on $k$ as follows:
	\begin{enumerate}
		\item If $k=0$ and $\FF= \sum\limits_{i=1}^{n} \iverson{\BB_i}\cdot \TT_i$, then
		\[
		\DD{\FF} \eeq \bigwedge_{\big( (B_i, T_i)_{1\leq i \leq n} \big)  \in \bigtimes_{i = 1}^{n} \big\{ (\BB_i, \TT_i), (\neg \BB_i, 0) \big\}}
		\Big( \bigwedge_{i=1}^{n} B_i \longrightarrow \VVcut < \sum_{i=1}^{n} T_i \Big) ~.
		\]
		\item If $k>0$ and $\FF= \Sup \VV \colon \FF'$, then 
		\[
		\DD{\FF} \eeq \exists \VV \colon \DD{\FF'}~.
		\]
		\item If $k>0$ and $\FF= \Inf \VV \colon \FF'$, let $\VV'$ be a fresh variable and define
		\[
		\DD{\FF} \eeq \exists \VV'\colon \VV'>\VVcut \wedge \forall \VV\colon \DD{\FF'}\subst{\VVcut}{\VV'}~.
		\]
	\end{enumerate}
\end{definition}
\begin{lemma}
	\label{lem:dcut_in_forats}
	For every $\FF =\Quant_1 \VV_1 \ldots \Quant_k \VV_k \colon \sum\limits_{i=1}^{n} \iverson{\BB_i}\cdot \TT_i$ in summation normal form and \mbox{all states $\sigma$, }
	\[
	   \sem{\DD{\FF}}{\sigma}{} \eeq \true \qquad \text{iff}\qquad \sigma(\VVcut) \in \dcut{\sem{\FF}{\sigma}{}}~.
	\]
\end{lemma}
\begin{proof}
	By induction on $k$. We fix an arbitrary state $\pstate$. \\ \\
	\noindent
	\emph{Base case $k=0$.} There is \emph{exactly one} $((B_1', T_1'), \ldots, (B_n', T_n')) \in \bigtimes_{i = 1}^{n} \{ (\BB_i, \TT_i), (\neg \BB_i, 0) \}$ such that
	\[
	\sem{\bigwedge_{i=1}^{n} B_i'}{\sigma}{\interpret} \eeq \true~.
	\]
	Hence, we have
	\begin{align*}
		& \sem{f}{\sigma}{\interpret} \\
		\eeq & \sem{\sum\limits_{i=1}^{n} \iverson{\BB_i}\cdot \TT_i}{\sigma}{\interpret}
		\tag{by definition} \\
		\eeq & \sem{\sum\limits_{i=1}^{n} T_i'}{\sigma}{\interpret}~.
		\tag{since $\sem{\iverson{\BB_j}}{\sigma}{\interpret}$ = 0 if $B_j = \neg \BB_j$}
	\end{align*}
	This gives us
	\begin{align*}
		&\sem{\DD{\FF}}{\sigma}{\interpret}  \eeq \true \\
		\text{iff} \quad & \sem{\bigwedge_{((B_1, T_1), \ldots, (B_n, T_n)) \in \bigtimes_{i = 1}^{n} \{ (\BB_i, \TT_i), (\neg \BB_i, 0) \}}
				\big( \bigwedge_{i=1}^{n} B_i \longrightarrow \VVcut < \sum_{i=1}^{n} T_i \big)} {\sigma}{\interpret} \eeq \true 
		\tag{by definition}\\
		\text{iff} \quad & \sem{\bigwedge_{i=1}^{n} B_i' \longrightarrow \VVcut < \sum_{i=1}^{n} T_i' }{\sigma}{\interpret} \eeq \true 
		\tag{by above reasoning}\\
		\text{iff} \quad & \sem{\VVcut < \sum_{i=1}^{n} T_i' }{\sigma}{\interpret} \eeq \true 
		\tag{left-hand side of implication evaluates to $\true$}\\
		\text{iff} \quad & \sigma(\VVcut) \LL \sem{\sum_{i=1}^{n} T_i'}{\sigma}{\interpret} 
		\tag{by definition}\\
		\text{iff} \quad & \sigma(\VVcut) \LL \sem{f}{\sigma}{\interpret}~. 
		\tag{by above reasoning}\\
	\end{align*}

As the induction hypothesis now assume that for some arbitrary, but fixed expectation $\FF'$ in summation normal form and all states $\sigma$, we have 
\[
\sem{\DD{\FF'}}{\sigma}{} \eeq \true \qquad \text{iff}\qquad \sigma(\VVcut) \in \dcut{\sem{\FF'}{\sigma}{}}~.
\]
\emph{Induction step for $\FF \eeq \Sup \VV \colon \FF'$.} We have
\begin{align*}
		& \sem{\DD{\FF}}{\sigma}{\interpret} \eeq  \true \\
		\text{iff}\quad & \sem{\exists \VV \colon \DD{\FF'}}{\sigma}{\interpret} \eeq  \true
		\tag{by definition} \\
		\text{iff}\quad & \text{there is $\RR \in \PosRats$ with}~\sem{\DD{\FF'}}{\sigma\statesubst{\VV}{\RR}}{\interpret} \eeq  \true
		\tag{by definition} \\
		\text{iff}\quad & \text{there is $\RR \in \PosRats$ with}~\sigma\statesubst{\VV}{\RR}(\VVcut) \in \dcut{\sem{\FF'}{\sigma\statesubst{\VV}{\RR}}{}}
		\tag{by I.H.}\\
		\text{iff}\quad & \text{there is $\RR \in \PosRats$ with}~\sigma(\VVcut) < \sem{\FF'}{\sigma\statesubst{\VV}{\RR}}{}
		\tag{by definition and $\VV\neq \VVcut$ by construction}\\
		\text{iff}\quad & \sigma(\VVcut) < \sup \big\{\sem{\FF'}{\sigma\statesubst{\VV}{\RR}}{} ~\mid~ \RR \in \PosRats \big\}
		\tag{see below}\\
		\text{iff}\quad & \sigma(\VVcut) < \sem{\Sup \VV \colon \FF'}{\sigma}{} = \sem{\FF}{\sigma}{}~.
		\tag{by definition}\\
\end{align*}
We justify the last but one step as follows. The \enquote{only if}-direction holds since the supremum evaluates at least to $\sem{\FF'}{\sigma\statesubst{\VV}{\RR}}{}$. For the \emph{if}-direction, assume for a contradiction that
\begin{align}
	\label{eq:proof_dedekind_nf_1}
	\sigma(\VVcut) < \sup \big\{\sem{\FF'}{\sigma\statesubst{\VV}{\RR}}{} ~\mid~ \RR \in \PosRats \big\}
\end{align}
and
\begin{align}
	\label{eq:proof_dedekind_nf_2}
	\text{for all}~\RR \in \PosRats~\text{it holds that}~
	\sigma(\VVcut) \geq \sem{\FF'}{\sigma\statesubst{\VV}{\RR}}{}~.
\end{align}
Inequality~(\ref{eq:proof_dedekind_nf_2}) implies that $\sigma(\VVcut)$ is an upper bound on
$ \big\{\sem{\FF'}{\sigma\statesubst{\VV}{\RR}}{} ~\mid~ \RR \in \PosRats \big\}$. Hence, $\sigma(\VVcut)$ is greater than or equal to the \emph{least} upper bound of this set. This contradicts inequality~(\ref{eq:proof_dedekind_nf_1}). \\ \\
\noindent
\emph{Induction step for $\FF \eeq \Inf \VV \colon \FF'$.} We have
\begin{align*}
	& \sem{\DD{\FF}}{\sigma}{\interpret} \eeq  \true \\
	\text{iff}\quad & \sem{\exists \VV'\colon \VV'>\VVcut \wedge \forall \VV\colon \DD{\FF'}\subst{\VVcut}{\VV'}}{\sigma}{\interpret} \eeq  \true 
	\tag{by definition}\\
	\text{iff}\quad & \text{there is $\RRb \in \PosRats$ with $\RRb > \sigma(\VVcut)$ such that for all $\RR \in \PosRats$,}~ \\
	& \qquad \sem{\DD{\FF'}\subst{\VVcut}{\VV'}}{\sigma\statesubst{\VV'}{\RRb}\statesubst{\VV}{\RR}}{\interpret} \eeq  \true 
	\tag{by definition}\\
	\text{iff}\quad & \text{there is $\RRb \in \PosRats$ with $\RRb > \sigma(\VVcut)$ such that for all $\RR \in \PosRats$,}~ \\
	&\qquad \sigma\statesubst{\VV'}{\RRb}\statesubst{\VV}{\RR}(\VV') \in \dcut{\sem{\FF'}{\sigma\statesubst{\VV'}{\RRb}\statesubst{\VV}{\RR}}{}}
	\tag{by I.H.}\\
	\text{iff}\quad & \text{there is $\RRb \in \PosRats$ with $\RRb > \sigma(\VVcut)$ such that for all $\RR \in \PosRats$,}~  \RRb < \sem{\FF'}{\statesubst{\VV}{\RR}}{}
	\tag{$\VV'$ and $\VVcut$ are fresh}\\
	\text{iff}\quad & \text{there is $\RRb \in \PosRats$ with $\RRb > \sigma(\VVcut)$ such that}~  \RRb \leq \inf \big\{ \sem{\FF'}{\statesubst{\VV}{\RR}}{} ~\mid~\RR \in \PosRats \}
	\tag{$\dagger$, see below} \\
	\text{iff} \quad & \sigma(\VVcut) < \inf \big\{ \sem{\FF'}{\statesubst{\VV}{\RR}}{} ~\mid~\RR \in \PosRats \}
	\tag{\enquote{only if}: by tranisitivity, \enquote{if}: choose $s=\inf \ldots$}
	\\
	\text{iff}\quad &  \sigma(\VVcut) < \sem{\Inf \VV \colon \FF'}{\sigma}{} = \sem{\FF}{\sigma}{}~.
\end{align*}
We justify the step marked with $\dagger$ as follows. For the \enquote{only if}-direction, observe that if $\RRb$ is a strict lower bound on $\big\{ \sem{\FF}{\sigma\statesubst{\VV}{\RR}}{} ~\mid~\RR \in \PosRats \}$, then $\RRb$ is at least the \emph{greatest} lower bound, i.e., the infimum, of this set. For the \enquote{if}-direction, first observe that $\RRb \neq 0$ since $\sigma(\VVcut) \geq 0$. Since $\PosRats$ is dense,  there is an $\RRb'$ with $\sigma(\VVcut) < \RRb' < \RRb$. Hence, we have
\begin{align*}
	 & \text{$\RRb > \sigma(\VVcut)$ and}~  \RRb \leq \inf \big\{ \sem{\FF'}{\statesubst{\VV}{\RR}}{} ~\mid~\RR \in \PosRats \} \\
	 \text{implies} \quad &\text{$\RRb > \sigma(\VVcut)$ and for all $\RR \in \PosRats$,}~  \RRb \leq  \sem{\FF'}{\statesubst{\VV}{\RR}}{} \\
	 \text{implies} \quad &\text{$\RRb' > \sigma(\VVcut)$ and for all $\RR \in \PosRats$,}~  \RRb' <  \sem{\FF'}{\statesubst{\VV}{\RR}}{}~,
\end{align*}
which is what we had to show.

\end{proof}

\begin{definition}[Dedekind Normal Form]
	Let $\FF$ be an expectation in summation normal form. The \emph{Dedekind normal form} $\dexp{\FF}$ of \FF w.r.t. the fresh variable $\VVcut$ is given by:
	\[
	   \dexp{\FF} \eeq \iverson{\DD{\FF}}~,
	\]
	where is $\iverson{\DD{\FF}}$ constructed according to \Cref{table:foposrat_to_synte}, and where we call $\VVcut$ the \emph{cut variable} of $\dexp{\FF}$.
\end{definition}
Soundness of the Dedekind normal form immediately follows from \Cref{thm:exp_subsumes_fo_rats} and \Cref{lem:dcut_in_forats}.
Notice that, by Theorems \ref{thm:exp_subsumes_fo_rats} and \ref{thm:summation_nf}, $\dexp{\FF}$ is equivalent to an expectation of the form $\qprefixnoarg\colon \iverson{\BB}$ for some quantifier prefix $\qprefixnoarg$ and some Boolean expression $\BB$. We thus often write $\dexp{\FF} \eeq \qprefixnoarg\colon \iverson{\BB}$.

\subsection{Proof of Lemma \ref{lem:dedekind_nf_recover}}

\begin{proof}
	\label{proof:dedekind_nf_recover}
		Let $\sigma$ be a state. We have
		\begin{align*}
		& \sem{\Sup \VVcut \colon \dexp{f} \cdot \VVcut}{\sigma}{\interpret} \\
		\eeq & \sup \setcomp{   \sem{\dexp{f} \cdot \VVcut}{\sigma\statesubst{\VVcut}{\RR}}{\interpret\statesubst{\VVcut}{\RR}}   }{\RR \in \PosRats} 
		\tag{by definition}\\
		\eeq & \sup \setcomp{\sem{\VVcut}{\sigma\statesubst{\VVcut}{\RR}}{\interpret\statesubst{\VVcut}{\RR}}}{ \RR \in \dcut{\sem{\FF}{\sigma}{\interpret}}}
		\tag{since $\sem{\dexp{\FF}}{\sigma\statesubst{\VVcut}{\RR}}{\interpret\statesubst{\VVcut}{\RR}} = 0$ if $\RR \not \in \dcut{\sem{\FF}{\sigma}{\interpret}}$ } \\
		\eeq & \sup \setcomp{\RR \in \PosRats}{\RR \in \dcut{\sem{\FF}{\sigma}{\interpret}}} 
		\tag{by definition}\\
		\eeq & \sem{\FF}{\sigma}{\interpret}~.
		\tag{by definition}
		\end{align*}
	
\end{proof}


\section{Appendix to Section~\ref{sec:embedding} (G\"odelization for Syntactic Expectations)}

\subsection{Proof of Lemma~\ref{lem:nats_definable}}

\begin{proof}
	\label{proof:nats_definable}
	
	We employ a result by Robinson~\cite[Section 3]{robinson_define_z}: For $a,b,k \in \Rats$, let
	\begin{align*}
	\Phi(a, b, k) &\ddefeq \exists x,y,z \in \Rats \colon  2 + abk^2 + bz^2 = x^2 + ay^2 \\
	B(a,b) &\ddefeq  \Phi(a,b,0) \wedge \forall m \in \Rats \colon (\Phi(a,b,m) \longrightarrow \Phi(a,b,m+1)) \\
	A(k)  &\ddefeq  \forall a,b \in \Rats \colon 
	B(a,b) \longrightarrow \Phi(a,b,k)~.
	\end{align*}
	Then $k \in \Rats$ is an integer if and only if $A(k)$ holds. Denote by $A'$ (resp. $\Phi', B'$) the formula obtained from
	$A$ (resp. $\Phi, B$) by replacing every occurrence of $\Rats$ by $\PosRats$, i.e.\ we restrict to quantification over $\PosRats$. Note that $\Phi(a',b',k')$ iff $\Phi'(a',b',k')$ for all $a',b',k' \in \PosRats$ since all occurrences of $x,y,z$ in $B$ are squared.
	
	Since $A'(k')$ is expressible in $\FOArithPosRats$, we prove the lemma by showing that for every $k' \in \PosRats$, 
	\[
	k' \in \Nats \qquad \text{if and only if} \qquad A'(k')~.
	\]
	The \enquote{only if} direction is straightforward since
	\[
	\Phi'(a,b,0) \wedge \forall m \in \PosRats \colon (\Phi'(a,b,m) \longrightarrow \Phi'(a,b,m+1))
	\]
	implies $\Phi(a,b,k')$ for all $k' \in  \Nats$. Hence, if $a,b \in \PosRats$ and $B'(a,b)$ holds, then $\Phi'(a,b,k')$ holds, which implies $A'(k')$.  \\ \\
	\noindent
	The \enquote{if} direction is less obvious. We proceed by recapping the crucial parts of Robinson's proof that $A(k)$ implies $k \in \Ints$ for all $k \in \Rats$ on a sufficient level of abstraction. We then show how to employ the same proof to show that $A'(k')$ implies $k' \in \Nats$ for all $k' \in \PosRats$.
	
	Robinson shows that it suffices to derive the following two facts from assumption $A(k)$:
	\begin{align}
	&\Phi(1, p, k)~\text{holds for all primes $p \in P_1$} 
	\label{eqn:proof_nats_definable_1}\\
	&\Phi(q,p,k)~\text{holds for all primes $p \in P_2 $ and all $q \in Q $}~,
	\label{eqn:proof_nats_definable_2}
	\end{align}
	where $P_1, P_2, Q \subseteq \Nats$ are some non-empty sets of primes. We do not give these sets explicitly here since they are not relevant for this proof.
	(\ref{eqn:proof_nats_definable_1}) and (\ref{eqn:proof_nats_definable_2}) in conjunction imply $k \in \Ints$. Now, since $\Phi(a',b',k')$ and $\Phi'(a',b',k')$ are equivalent for all $a',b',k' \in \PosRats$, our proof obligation is to show that for every $k' \in \PosRats$, $A'(k')$ implies:
	\begin{align}
	&\Phi'(1, p, k')~\text{holds for all primes $p \in P_1$}~,\text{and}
	\label{eqn:proof_nats_definable_3}\\
	&\Phi'(q,p,k')~\text{holds for all primes $p \in P_2 $ and all $q \in Q $}~.
	\label{eqn:proof_nats_definable_4}
	\end{align}
	We may then invoke Robinson's result from above to conclude that $k' \in \Nats$.

	To prove~(\ref{eqn:proof_nats_definable_1}), Robinson shows that $B(1, p)$ holds for every $p \in P_1$. Since $B(1,p)$ implies $B'(1,p)$ for all $p \in P_1$ (recall that $p$ is a natural number), we also get $B'(1,p)$. Now, assumption $A(k)$ and the fact that $B(1,p)$ holds imply $\Phi(1,p,k)$. We apply the same reasoning for $\Phi'(1,p,k)$: Assumption $A'(k)$ and the fact that $B'(1,p)$ holds imply $\Phi'(1,p,k')$, which proves~(\ref{eqn:proof_nats_definable_3}).
	
	The proof of~(\ref{eqn:proof_nats_definable_4}) is completely analogous.
	
\end{proof}

\subsection{Proof of Theorem~\ref{thm:fo_rats_subsumes_fo_nats}}

\begin{proof}
	\label{proof:fo_rats_subsumes_fo_nats}
	By induction on the structure of $\PP$ and by using Lemma~\ref{lem:nats_definable}. \\ \\
	\emph{Base case $\PP =\BB$.} If there is $\VV \in \FV{\PP}$ with $\pstate(\VV) \not\in \Nats$, then $\sem{\isNat(\VV)}{\sigma}{\interpret} = \false$ and thus $\sem{\toFOPosRats{\PP}}{\sigma}{\interpret} = \false$. Conversely, if for all $\VV \in \FV{\PP}$ it holds that $\sigma(\VV) \in \Nats$, then $\sigma$ is an interpretation for $\PP$ and obviously $\sem{\toFOPosRats{\PP}}{\sigma}{\interpret} = \sem{\PP}{\sigma}{\interpret}$ since $\sem{\isNat(\VV)}{\sigma}{\interpret} = \true$. \\ \\
	\noindent As the induction hypothesis (I.H.) now assume that the theorem holds for some arbitrary, but fixed, $\PP' \in \FOArithNats$. \\ \\
	\emph{The case $\PP = \exists \VV \colon \PP'$.} First notice that $\FV{\PP} = \FV{\PP'} \setminus \{ \VV \}$.
	Hence, if there is $\VV' \in \FV{\PP}$ with $\sigma(\VV') \not\in \Nats$, then $\sem{\toFOPosRats{\PP}}{\sigma}{\interpret} = \sem{\toFOPosRats{\PP'}}{\sigma}{\interpret}  = \false$ by I.H. Now assume that for all $\VV' \in \FV{\PP}$ it holds that $\sigma(\VV') \in \Nats$, rendering $\sigma$ an interpretation for $\PP$. We have
	%
	%
	\begin{align*}
	& \sem{\toFOPosRats{\PP}}{\sigma}{\interpret} \eeq \true\\
	\qqiff & \sem{\exists \VV \colon \left( \toFOPosRats{\PP'} \right)}{\sigma}{\interpret} \eeq \true
	\tag{by definition} \\
	\qqiff & \text{there is}~\RR \in \PosRats~\text{with}~ \sem{\toFOPosRats{\PP'}}{\sigma\statesubst{\VV}{\RR}}{\interpret\statesubst{\VV}{\RR}} \eeq \true
	\tag{by definition}\\
	\qqiff& \text{there is}~n \in \Nats~\text{with}~ \sem{\toFOPosRats{\PP'}}{\sigma\statesubst{\VV}{n}}{\interpret\statesubst{\VV}{n}} \eeq \true 
	\tag{If $v \in \FV{\PP'}$, then $\sem{\toFOPosRats{\PP'}}{\sigma\statesubst{\VV}{\RR}}{\interpret\statesubst{\VV}{\RR}} = \true$ only if  $\RR \in \Nats$ by I.H.} \\
	\qqiff& \text{there is}~n \in \Nats~\text{with}~ \sem{\PP'}{\sigma\statesubst{\VV}{n}}{\interpret\statesubst{\VV}{n}} \eeq \true 
	\tag{by I.H. } \\
	\qqiff &\sem{\exists \VV \colon \PP' }{\sigma}{\interpret} \eeq \true
	\tag{by definition}  \\
	\qqiff & \sem{\PP}{\sigma}{\interpret} \eeq \true~.
	\tag{by definition} 
	\end{align*}
	
	%
	\noindent
	\emph{The case $\PP = \forall \VV \colon \PP'$.} First notice that $\FV{\PP} = \FV{\PP'} \setminus \{ \VV \}$.
	Hence, if there is $\VV' \in \FV{\PP}$ with $\sigma(\VV') \not\in \Nats$, then
	\begin{align*}
	& \sem{\toFOPosRats{\PP}}{\sigma}{\interpret} \\
	\eeq & \sem{\forall \VV \colon \left(\toFOPosRats{\PP'} \vee \neg \isNat(\VV) \right)}{\sigma}{\interpret}
	\tag{by definition} \\
	\eeq & \sem{\forall \VV \colon \neg \isNat(\VV)}{\sigma}{\interpret}.
	\tag{$\sem{\toFOPosRats{\PP'}}{\sigma\statesubst{\VV}{\RR}}{\interpret\statesubst{\VV}{\RR}} = \false$ for all $\RR \in \PosRats \setminus \Nats$ by I.H.} \\
	\eeq& \false ~.
	\tag{there is an $\RR \in \PosRats$ with $\RR \not\in \Nats$}
	\end{align*}
	Now assume that for all $\VV' \in \FV{\PP}$ it holds that $\sigma(\VV') \in \Nats$, rendering $\sigma$ an interpretation for $\PP$. We have
	\begin{align*}
	& \sem{\toFOPosRats{\PP}}{\sigma}{\interpret} \eeq \true\\
	\qqiff & \sem{\forall \VV \colon \left( \toFOPosRats{\PP'} \vee \neg \isNat(\VV) \right)}{\sigma}{\interpret} \eeq \true
	\tag{by definition} \\
	\qqiff & \text{for all}~\RR \in\PosRats~\text{we have}~\sem{\toFOPosRats{\PP'} \vee \neg \isNat(\VV)}{\sigma\statesubst{\VV}{\RR}}{\interpret\statesubst{\VV}{n}} \eeq \true \\
	\qqiff& \text{for all}~n\in\Nats~\text{we have}~\sem{\toFOPosRats{\PP'}}{\sigma\statesubst{\VV}{n}}{\interpret\statesubst{\VV}{n}} \eeq \true
	\tag{$\sem{\neg \isNat(\VV)}{\sigma\statesubst{\VV}{\RR}}{\interpret\statesubst{\VV}{\RR}]} = \true$ for all $\RR \in \PosRats\setminus\Nats$} \\
	\qqiff& \text{for all}~n\in\Nats~\text{we have}~\sem{\PP'}{\sigma\statesubst{\VV}{n}}{\interpret\statesubst{\VV}{n}} \eeq \true
	\tag{by I.H.} \\
	\qqiff& \sem{\forall \VV \colon \PP'}{\sigma}{\interpret} \eeq \true
	\tag{by definition} \\
	\qqiff& \sem{\PP}{\sigma}{\interpret} \eeq \true~.
	\tag{by definition} 
	\end{align*}

\end{proof}

\subsection{Proof of Theorem~\ref{thm:exp_subsumes_fo_rats}}

\begin{proof}
	\label{proof:exp_subsumes_fo_rats}
	First notice that $\sem{\iverson{\PP}}{\sigma}{\interpret} \in \{0,1\}$ for all $\PP\in\FOArithPosRats$.
	We now proceed by induction on the structure of $\PP$. \\ \\
	\noindent
	\emph{Base case $\PP = \BB$.} This case follows immediately from the definition of $\sem{\BB}{\sigma}{\interpret}$. \\ \\
	\noindent
	As the induction hypothesis now assume that the theorem holds for some arbitrary, but fixed, $\PP' \in \FOArithPosRats$. \\ \\
	\noindent
	\emph{The case $\PP = \exists \VV \colon \PP'$.} We have
	\begin{align*}
	&\sem{\iverson{\PP}}{\sigma}{\interpret} \eeq 1 \\
	\qiff &  \sem{\SupV{\VV} \iverson{\PP'}}{\sigma}{\interpret} \eeq 1 
	\tag{by definition} \\
	\qiff & \sup~\setcomp{\sem{\iverson{\PP'}}{\pstate\statesubst{\VV}{\RR}}{\interpret\statesubst{\VV}{\RR}}}{\RR \in \PosRats} \eeq 1
	\tag{by definition} \\
	\qiff & \text{there is $\RR \in \PosRats$ with}~\sem{\iverson{\PP'}}{\pstate\statesubst{\VV}{\RR}}{\interpret\statesubst{\VV}{\RR}} \eeq 1
	\tag{$\sem{\iverson{\PP'}}{\pstate\statesubst{\VV}{\RR}}{\interpret\statesubst{\VV}{\RR}} \in \{0,1\}$}\\
	\qiff& \text{there is $\RR \in \PosRats$ with}~ \sem{\PP'}{\pstate\statesubst{\VV}{\RR}}{\interpret\statesubst{\VV}{\RR}} \eeq \true
	\tag{by I.H.} \\
	\qiff&\sem{\exists \VV \colon \PP'}{\pstate}{\interpret} \eeq \true~.
	\tag{by definition}
	\end{align*}
	\emph{The case $\PP = \forall \VV \colon \PP'$.} We have
	\begin{align*}
	&\sem{\iverson{\PP}}{\sigma}{\interpret} \eeq 1 \\
	\qiff &  \sem{\InfV{\VV} \iverson{\PP'}}{\sigma}{\interpret} \eeq 1 
	\tag{by definition} \\
	\qiff & \inf\hspace{1.1ex}\setcomp{\sem{\iverson{\PP'}}{\pstate\statesubst{\VV}{\RR}}{\interpret\statesubst{\VV}{\RR}}}{\RR \in \PosRats} \eeq 1 \\
	\qiff& \text{for all $\RR \in \PosRats$ we have}~ \sem{\iverson{\PP'}}{\pstate\statesubst{\VV}{\RR}}{\interpret\statesubst{\VV}{\RR}} \eeq 1
	\tag{$\sem{\iverson{\PP'}}{\pstate\statesubst{\VV}{\RR}}{\interpret\statesubst{\VV}{\RR}} \in \{0,1\}$} \\
	\qiff &\text{for all $\RR \in \PosRats$ we have}~ \sem{\PP'}{\pstate\statesubst{\VV}{\RR}}{\interpret\statesubst{\VV}{\RR}} \eeq \true
	\tag{by I.H.} \\
	\qiff & \sem{\forall \VV \colon \PP'}{\pstate}{\interpret} \eeq \true 
	\tag{by definition} \\
	\qiff& \sem{\PP}{\pstate}{\interpret} \eeq \true~.
	\tag{by definition}
	\end{align*}
\end{proof}

\subsection{Proof of Theorem~\ref{thm:rho_expectation}}

\begin{proof}
	\label{proof:rho_expectation}
	We define $\rseqelemsymbol$ by
	\begin{align*}
	&\rseqelem{\VV_1}{\VV_2}{\VV_3} \\
	\eeq & \exists n, n_1, n_2 \colon
	\gPair(n, n_1, n_2) \wedge
	\seqelem{\VV_1}{\VV_2}{n} \wedge n_2 \cdot \VV_3 = n_1 
	\wedge \left( \relPrime{n_1}{n_2} \vee (n_1=0 \wedge n_2=1) \right)~,
	\end{align*}
	where $\relPrime{n_1}{n_2}$ denotes \emph{relative primality} of $n_1$ and $n_2$, which is definable in $\FOArithNats$. 
	Let $\RR_0 = \frac{n_{0,1}}{n_{0,2}}, \ldots, \RR_{k-1} = \frac{n_{k-1, 1}}{n_{k-1, 2}}$ such that each $n_{i,j}$ is a natural number satisfying: If $\RR_i = 0$, then $n_{i,1} = 0$ and $n_{i,2} = 1$. If $\RR_i \neq 0$, then $n_{i,1}$ and $n_{i,2}$ are relatively prime. Notice that
	these conditions imply that the pairs $n_{i,0}, n_{i,1}$ are \emph{unique}.
	
	Furthermore, by Lemma~\ref{lem:pairing}, there is a \emph{unique} sequence of natural numbers $n_0,\ldots,n_{k-1}$ with 
	\[
	\gPair(n_i, n_{i, 1}, n_{i,2}) \eequiv \true \quad \text{for all}~i\in\{0,\ldots, k-1\} ~.
	\]
	Finally, by Lemma~\ref{lem:goedel_beta}, there is a natural number $a$ encoding the sequence $n_0,\ldots,n_{k-1}$. This gives us
	\begin{align*}
	&\rseqelem{a}{i}{ r} \eequiv \true \\
	\text{iff} \quad & \exists n, n_1, n_2 \colon
	\gPair(n, n_1, n_2) \wedge
	\seqelem{a}{i}{n} \wedge n_2 \cdot r = n_1
	\wedge \left( \relPrime{n_1}{n_2} \vee (n_1=0 \wedge n_2=1) \right)
	\eequiv \true
	\tag{by definition} \\
	\text{iff} \quad &
	\gPair(n_i, n_{i,1}, n_{i,2}) \wedge
	\seqelem{a}{i}{n_i} \wedge n_{i, 2} \cdot r = n_{i,1} \eequiv \true
	\tag{by above reasoning and uniqueness of the $n$s and the pairs $n_{i,1}, n_{i,2}$} \\
	\text{iff} \quad & r = r_i
	\tag{by construction}~,
	\end{align*}
	which completes the proof.
\end{proof}

\section{Appendix to Section~\ref{sec:sums_prod_via_goedel} (Sums, Products, and Infinite Series of Syntactic Expectations)}

\subsection{Proof of Theorem~\ref{thm:sum_exp}}

\begin{proof}
	\label{proof:sum_exp}
	
	Write 
	\begin{align*}
	\dexp{\FF} \eeq \qprefix{\FF} \colon \iverson{\BB}~,
	\end{align*}
	with cut variable $\VVcut$ and
	where $\qprefix{\FF} = \Quant_1 \VV_1 \colon \ldots \colon \Quant_n \VV_n$. Furthermore, assume that $\VV, \VV', \gnum, \VU,z$ are fresh logical variables not occurring in $\dexp{\FF}$. Now define
	\begin{align*}
	\gsum{\FF}{\VV}
	\ddefeq &
	\Sup \VV' \colon \Sup \gnum \colon 
	\VV' \cdot \Inf \VU \colon \Inf z \colon \Sup \VVcut \colon \qprefix{\FF} \colon \\
	& [
	\rseqelem{\gnum}{0}{1} \wedge \rseqelem{\gnum}{\VV+1}{\VV}  \\
	&  \quad \wedge \big( (\VU < \VV+1 \wedge \rseqelem{\gnum}{\VU}{z} \wedge (\iverson{\BB} \subst{\vprod}{\VU} \vee \VVcut = 0) ) \\
	& \qquad \quad  \longrightarrow  \rseqelem{\gnum}{\VU+1}{z + \VVcut}   \big)]~.
	\end{align*}
	The reasoning is now analagous to the proof of Theorem~\ref{thm:prod_exp} using Lemma~\ref{lem:sum_by_cut}.
	The equality $\sem{\Sup \VV \colon \gsum{\FF}{\VV}}{\sigma}{\interpret}
	=
	\sum_{j=0}^{\infty} \sem{\FF}{\sigma}{\interpret\statesubst{\vsum}{j}}$ holds since an infinite series evaluates to the supremum of its partial sums.

\end{proof}

\subsection{Proof of Theorem~\ref{thm:prod_exp}}

We employ the following auxiliary result.

\begin{lemma}
	\label{lem:prod_by_cut}
	For all $\alpha_0,\ldots,\alpha_n \in \PosRealsInf$, we have
	\[
	\prod_{j=0}^n  \alpha_j 
	\eeq
	\sup \setcomp{\RR \in \PosRats}{r = r_0 \cdot \ldots \cdot r_n,~\forall 0 \leq i \leq n \colon r_i \in \dcutzero{\alpha_i}}~.
	\]
\end{lemma}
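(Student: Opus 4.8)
The plan is to mirror the proof of Lemma~\ref{lem:sum_by_cut}, but with the extra care demanded by the degenerate arithmetic of $\PosRealsInf$ under the convention $0 \cdot \infty = 0$. Writing $S = \setcomp{\RR \in \PosRats}{\RR = r_0 \cdots r_n,\ r_i \in \dcutzero{\alpha_i} \text{ for all } i}$ for the set on the right-hand side, I want to establish $\prod_{j=0}^n \alpha_j = \sup S$. Throughout I will use the fact (the lemma preceding Theorem~\ref{thm:dedekind_nf}) that $\alpha = \sup \dcut{\alpha}$ for every $\alpha \in \PosRealsInf$; since adjoining $0$ to a set of non-negative reals cannot lower its supremum, this also yields $\alpha = \sup \dcutzero{\alpha}$.

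First I would dispose of the case in which some $\alpha_i = 0$. Then $\prod_{j=0}^n \alpha_j = 0$, while $\dcutzero{0} = \{0\}$, so every product $r_0 \cdots r_n \in S$ contains the factor $r_i = 0$ and hence equals $0$; since $0 \in \dcutzero{\alpha_j}$ for every $j$, the set $S$ is nonempty, so $S = \{0\}$ and $\sup S = 0$, matching the left-hand side.

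For the remaining case, all $\alpha_i > 0$. The inequality $\sup S \leq \prod_{j=0}^n \alpha_j$ follows from monotonicity of multiplication on $\PosRealsInf$: any $r_i \in \dcutzero{\alpha_i}$ satisfies $0 \leq r_i < \alpha_i$ (the inequality being strict because $\alpha_i > 0$), whence $r_0 \cdots r_n \leq \alpha_0 \cdots \alpha_n$. For the converse inequality $\prod_{j=0}^n \alpha_j \leq \sup S$ I would split on finiteness. If every $\alpha_i$ is finite, then since $\alpha_i = \sup \dcut{\alpha_i}$ I can choose, for each $i$, rationals $r_i^{(k)} \in \dcut{\alpha_i}$ with $r_i^{(k)} \to \alpha_i$; by continuity of the (finite) product, $r_0^{(k)} \cdots r_n^{(k)} \to \alpha_0 \cdots \alpha_n$, and since each such product lies in $S$ we get $\sup S \geq \prod_{j=0}^n \alpha_j$. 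If instead some $\alpha_i = \infty$, then $\prod_{j=0}^n \alpha_j = \infty$; here I fix a strictly positive rational $r_j \in \dcut{\alpha_j}$ for each finite $\alpha_j$ (possible since $\alpha_j > 0$) and let the factors $r_i$ with $\alpha_i = \infty$ range over all of $\dcut{\infty} = \PosRats$, so that the corresponding products in $S$ grow without bound and $\sup S = \infty$. Combining the two inequalities yields equality.

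The main obstacle is not conceptual but bookkeeping: the convention $0 \cdot \infty = 0$ breaks the naive \enquote{product of suprema equals supremum of products} identity, so the argument must be organized around the case split (some factor zero / all positive and finite / all positive with an infinite factor) and must verify that multiplication stays monotone under the convention. The continuity-of-multiplication step in the all-finite subcase is the only genuinely analytic ingredient, and it is routine for a product of finitely many factors.
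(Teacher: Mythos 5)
Your proof is correct, but it takes a genuinely different route from the paper's, whose entire proof of Lemma~\ref{lem:prod_by_cut} is the one-liner \enquote{By induction on $n$.} That induction rests on the base case $\alpha_0 = \sup \dcutzero{\alpha_0}$ (the Dedekind-cut fact you also invoke) and, in the inductive step, on the binary interchange identity $(\sup A) \cdot (\sup B) = \sup \setcomp{a \cdot b}{a \in A,\, b \in B}$ for nonempty sets $A, B \subseteq \PosRats$, applied to $A = S_n$ (the set for the first $n{+}1$ factors, which contains $0$) and $B = \dcutzero{\alpha_{n+1}}$; the $0 \cdot \infty = 0$ convention then has to be confronted exactly once, inside that two-factor identity. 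You instead argue directly over all $n{+}1$ factors at once, via a three-way case split (some $\alpha_i = 0$; all positive and finite; all positive with some $\alpha_i = \infty$), with monotonicity giving one inequality and either continuity of finite real products or unboundedness giving the other. Your route buys self-containedness --- no interchange-of-suprema lemma is ever needed, only approximating sequences --- while the induction buys brevity and modularity, isolating the degenerate arithmetic in the binary case and propagating it mechanically. One side remark of yours is too pessimistic, though: the convention $0 \cdot \infty = 0$ does \emph{not} break $(\sup A)(\sup B) = \sup(AB)$ for nonempty subsets of $\PosRats$ (if $\sup B = 0$ then $B = \{0\}$ and both sides are $0$; if some sup is $\infty$ and the other is positive, both sides are $\infty$); what the convention breaks is the $\sup$/$\inf$ interchange discussed in \Cref{sec:note-on-f-times-f}. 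This is precisely why the paper's one-line induction is legitimate --- but since your proof never relies on that remark, nothing in your argument is at stake.
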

\begin{proof}
	By induction on $n$.
\end{proof}

We now prove Theorem~\ref{thm:prod_exp}.

\begin{proof}
	\label{proof:prod_exp}
	
	Write 
	\begin{align*}
		\dexp{\FF} \eeq \qprefixnoarg \colon \iverson{\BB}
	\end{align*}
	and assume that $\VV, \VV',\gnum, \VU,z$ are fresh variables not occurring in $\dexp{\FF}$. Furthermore, denote by $\qprefixnoarginvert$ the quantifier prefix obtained from $\qprefixnoarg$ by flipping all quantifiers, i.e., by replacing every occurrence of $\Sup$ by $\Inf$ and vice versa.
	Now define 
	\begin{align*}
		\gproduct{\FF}{\VV}
		\ddefeq &
		\Sup \VV' \colon \Sup \gnum \colon 
		\VV' \cdot \Inf \VU \colon \Inf z \colon \Sup \VVcut \colon \qprefixnoarginvert \colon \\
		& [
		\rseqelem{\gnum}{0}{1} \wedge \rseqelem{\gnum}{\VV+1}{\VV'}  \\
		&  \quad \wedge \big( (\VU < \VV+1 \wedge \rseqelem{\gnum}{\VU}{ z} \wedge (\iverson{\BB} \subst{\vprod}{\VU} \vee \VVcut = 0) ) \\
		& \qquad \quad  \longrightarrow  \rseqelem{\gnum}{\VU+1}{z \cdot \VVcut}   \big)]~.
	\end{align*}
	The crux of the proof is to show that the $\{0,1\}$-valued expectation right after $\VV' \cdot \ldots$ evaluates to $1$ on state $\sigma$ iff $\sigma(\gnum)$ encodes a sequence  
	$1, 1\cdot r_0, 1\cdot r_0 \cdot r_1, \ldots, 1\cdot r_0 \cdot \ldots \cdot r_{\sigma(\VV)}$ where  $r_j \in  \dcutzero{\sem{\FF\subst{\vprod}{j}}{\sigma}{\interpret\statesubst{\vprod}{j}}}$
	for all $0 \leq j \leq \sigma(\VV)$ and where $\sigma(\VV') = \prod\limits_{j=0}^{\sigma(\VV)} r_j$.
	This implies
	\begin{align*}
		&\sem{\gproduct{f}{\VV}}{\sigma}{\interpret} \\
		\eeq&\sem{\Sup \VV' \colon \Sup \gnum \colon \VV' \cdot \ldots}{\sigma}{\interpret}
		\tag{by definition} \\
		\eeq&\sup_{r \in \PosRats}
		\sup_{r' \in \PosRats}
		\setcomp{r}{r'~\text{encodes}~1,\ldots, 1\cdot r_0 \cdot \ldots \cdot r_{\sigma(\VV)} ~\text{and}~r = \prod\limits_{j=0}^{\sigma(\VV)} r_j~\text{where}~r_j \in  \dcutzero{\sem{\FF}{\sigma\statesubst{\vprod}{j}}{\interpret\statesubst{\vprod}{j}}} }
		\tag{claim proven below}\\
		\eeq&\sup_{r \in \PosRats}
		\setcomp{r}{r = \prod\limits_{j=0}^{\sigma(\VV)} r_j~\text{where}~r_j \in  \dcutzero{\sem{\FF}{\sigma\statesubst{\vprod}{j}}{\sigma\statesubst{\vprod}{j}}}}~.
		\\
		\eeq&\prod\limits_{j=0}^{\sigma(\VV)} 
		\sem{\FF}{\sigma\statesubst{\vprod}{j}}{\interpret\statesubst{\vprod}{j}}~.
		\tag{by Lemma~\ref{lem:prod_by_cut}}
	\end{align*}
	We have
	\begin{align*}
		&\semleft{\Inf \VU \colon \Inf z \colon \Sup \VVcut \colon \qprefixnoarginvert \colon}{\sigma}{\interpret} \\
		& [
		\rseqelem{\gnum}{0}{1} \wedge \rseqelem{\gnum}{\VV+1}{\VV'}  \\
		&  \quad \wedge \big( (\VU < \VV+1 \wedge \rseqelem{\gnum}{\VU}{z} \wedge (\iverson{\BB} \subst{\vprod}{\VU} \vee \VVcut = 0) ) \\
		& \qquad \quad  \longrightarrow  \rseqelem{\gnum}{\VU+1}{z \cdot \VVcut}   \big)] \semright = 1 \\
		\text{iff}\quad&\text{for all}~r,s \in \PosRats~\text{there is}~t \in \PosRats~\text{with} \semleft{\qprefixnoarginvert \colon}{\sigma}{\interpret} \\
		& [
		\rseqelem{\gnum}{0}{1} \wedge \rseqelem{\gnum}{\VV+1}{\VV'}  \\
		&  \quad \wedge \big( (r < \VV+1 \wedge \rseqelem{\gnum}{r}{s} \wedge (\iverson{\BB} \subst{\vprod}{r}\subst{\VVcut}{t} \vee t = 0) ) \\
		& \qquad \quad  \longrightarrow  \rseqelem{\gnum}{r+1}{s \cdot t}   \big)] \semright = 1 
		\tag{expectation is \{0,1\}-valued}\\
		\text{iff}\quad&\sem{[
			\rseqelem{\gnum}{0}{1} \wedge \rseqelem{\gnum}{\VV+1}{\VV'}]}{\sigma}{\interpret} =1 ~\text{and}~
		\text{for all}~r,s \in \PosRats \\
		&\text{there is}~t \in \PosRats~\text{with} \semleft{\qprefixnoarginvert \colon}{\sigma}{\interpret} \\
		& [ \big( (r < \VV+1 \wedge \rseqelem{\gnum}{r}{s} \wedge (\iverson{\BB} \subst{\vprod}{r}\subst{\VVcut}{t} \vee t = 0) ) \\
		& \qquad \quad  \longrightarrow  \rseqelem{\gnum}{r+1}{s \cdot t}   \big)] \semright \eeq 1
		\tag{quantitative quantifiers correspond to qualitative quantifiers, standard prenexing} \\
		\text{iff}\quad&\sem{[
			\rseqelem{\gnum}{0}{1} \wedge \rseqelem{\gnum}{\VV+1}{\VV'}]}{\sigma}{\interpret} =1 ~\text{and}~
		\text{for all}~r,s \in \PosRats~\text{there is}~t \in \PosRats~\text{with} \\
		&\sem{ [ r < \VV+1 \wedge \rseqelem{\gnum}{r}{s}}{\sigma}{\interpret} =1 ~\text{and}~
		\sem{\qprefixnoarg \colon \iverson{\BB} \subst{\vprod}{r}\subst{\VVcut}{t} \vee t = 0}{\sigma}{\interpret} = 1 \\
		&\quad \text{implies}~\sem{\rseqelem{\gnum}{r+1}{s \cdot t} }{\sigma}{\interpret} \eeq 1 
		\tag{standard prenexing into antecedent of implication, expectation is $\{0,1\}$-valued} \\
		\text{iff}\quad&\sem{[
			\rseqelem{\gnum}{0}{1} \wedge \rseqelem{\gnum}{\VV+1}{\VV}]}{\sigma}{\interpret} =1 ~\text{and}~
		\text{for all}~r,s \in \PosRats~\text{there is}~t \in \PosRats~\text{with} \\
		&\sem{ [ r < \VV+1 \wedge \rseqelem{\gnum}{r}{s}}{\sigma}{\interpret} =1 ~\text{and}~
		t \in \dcut{\sem{\FF}{\sigma\statesubst{\vprod}{r}}{\interpret\statesubst{\vprod}{r}}} \cup \{0\} \\
		&\quad \text{implies}~\sem{\rseqelem{\gnum}{r+1}{s \cdot t} }{\sigma}{\interpret} \eeq 1 
		\tag{by Theorem~\ref{thm:dedekind_nf}}\\
		\text{iff}\quad & \sigma(\gnum)~\text{encodes sequence} ~1, 1\cdot r_0, 1\cdot r_0 \cdot r_1, \ldots, 1\cdot r_0 \cdot \ldots \cdot r_{\sigma(\VV)}~ \\
		&\text{with}~  r_j \in  \dcut{\sem{\FF}{\sigma\statesubst{\vprod}{j}}{\interpret\statesubst{\vprod}{j}}}
		~\text{for all}~0 \leq j \leq \sigma(\VV)~\\
		&\text{and where} ~\sigma(\VV') = \prod\limits_{j=0}^{\sigma(\VV)} r_j~.
	\end{align*}

\end{proof}

\section{Appendix to Section~\ref{sec:expressiveness} (Expressiveness of our Language)}
\label{app:expressiveness}
Given an expectation $\ff \in \E$, we denote by
\[
\Vars(\ff) \eeq \setcomp{x \in \Vars}{\exists \sigma \in \States \colon \exists n,n' \in \Nats  \colon \ff(\sigma\statesubst{x}{n}) \neq \ff(\sigma\statesubst{x}{n'})}
\]
the set of all \enquote{relevant} variables in $\ff$. We restrict to expectations $\ff$ with $|\Vars(\ff)| < \infty$, since $|\Vars(\eval{\FF})| < \infty$ holds for every \emph{syntactic} expectation $\FF$. 
\Cref{thm:wp_loop_as_sum} is a consequence of the $\WHILESYMBOL$-case of the following theorem.
\begin{theorem}
	\label{thm:wp_prob_times_exp}
	Let $\cc$ be a program and $\ff$ be an expectation. Furthermore, let $\varseq{x}$ be a finite set of program variables with $\Vars(\cc) \cup \Vars(\ff) \subseteq \varseq{x}$. We have
	\[
	\wp{\cc}{\ff} \eeq \lambda \sigma_0 \mydot \sum_{\sigma \in \partitionedstates{x}} 
	\wp{\cc}{\statepred{\sigma}{\varseq{x}}}(\sigma_0) \cdot \ff(\sigma)~. 
	\]
\end{theorem}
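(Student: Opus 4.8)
The plan is to obtain the identity directly from the Kozen duality (Theorem~\ref{thm:kozen-duality}), without any induction on the structure of $\cc$; the while-case needed for Theorem~\ref{thm:wp_loop_as_sum} is then just an instance. The single fact beyond duality is that the postexpectation $\ff$ is \emph{invariant} under the equivalence $\equivstatesrel{\varseq{x}}$. I would first record this as a lemma: since $\Vars(\ff) \subseteq \varseq{x}$, no variable outside $\varseq{x}$ influences the value of $\ff$, so $\equivstates{\tau}{\varseq{x}}{\sigma}$ implies $\ff(\tau) = \ff(\sigma)$. Dually, applying Theorem~\ref{thm:kozen-duality} to the characteristic expectation yields, for every representative $\sigma \in \partitionedstates{x}$,
\begin{align*}
   \wp{\cc}{\statepred{\sigma}{\varseq{x}}}(\sigma_0)
   \eeq \sum_{\tau \in \States} \mu_{\cc}^{\sigma_0}(\tau) \cdot \statepred{\sigma}{\varseq{x}}(\tau)
   \eeq \sum_{\tau \in \equivclass{\sigma}{\varseq{x}}} \mu_{\cc}^{\sigma_0}(\tau)~,
\end{align*}
i.e.\ the probability that $\cc$ terminates somewhere in the class of $\sigma$.

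With these two observations in hand, the remaining step is a regrouping of a non-negative (extended-real) series. Starting from the right-hand side, I would substitute the displayed value of $\wp{\cc}{\statepred{\sigma}{\varseq{x}}}(\sigma_0)$, use invariance to replace the constant factor $\ff(\sigma)$ by $\ff(\tau)$ inside the inner sum over $\tau \in \equivclass{\sigma}{\varseq{x}}$, and then note that, because $\partitionedstates{x}$ contains exactly one state from each $\equivstatesrel{\varseq{x}}$-class, the classes $\{\equivclass{\sigma}{\varseq{x}} \mid \sigma \in \partitionedstates{x}\}$ partition $\States$. Hence the double sum collapses into a single sum over all $\tau \in \States$, which by Theorem~\ref{thm:kozen-duality} is exactly $\wp{\cc}{\ff}(\sigma_0)$. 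Since every term is a non-negative element of $\PosRealsInf$, the reordering and regrouping are unconditionally valid (Tonelli for counting measures), and the convention $0 \cdot \infty = 0$ causes no issue because it is applied consistently throughout.

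The main obstacle I expect is the invariance lemma, specifically reconciling it with the definition of $\Vars(\ff)$ given at the start of the appendix, which only quantifies over \emph{natural-valued} substitutions $\ff(\sigma\statesubst{x}{n})$, $n \in \Nats$, whereas states range over $\PosRats$. I would therefore need to argue that for a \emph{syntactic} expectation $\FF$ (the only case we use, since $\ff = \eval{\FF}$) independence on natural inputs already forces independence on all of $\PosRats$; alternatively, I would strengthen the definition of $\Vars(\ff)$ to quantify over $\PosRats$ and check that $|\Vars(\eval{\FF})| < \infty$ still holds. Everything else---the two appeals to Kozen duality and the partition argument---is routine once invariance is secured.
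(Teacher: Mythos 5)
Your core argument for the stated identity is sound, and it is a genuinely different route from the paper's. The paper proves Theorem~\ref{thm:wp_prob_times_exp} by structural induction on $\cc$, entirely inside the $\wpsymbol$-calculus: atomic statements are computed directly, composition and conditionals use the induction hypothesis plus sum manipulations, and loops combine Kleene's theorem (Lemma~\ref{lem:kleene_for_wp}) with an inner induction on the number of unrollings followed by an exchange of suprema and sums. You instead invoke the Kozen duality (Theorem~\ref{thm:kozen-duality}) twice, use class-invariance of $\ff$, and regroup a non-negative double series over the partition $\setcomp{\equivclass{\sigma}{\varseq{x}}}{\sigma \in \partitionedstates{x}}$ of $\States$; this is shorter and conceptually cleaner. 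What the paper's longer proof buys is self-containedness: in this paper the rules of Figure~\ref{table:wp} \emph{are} the formal semantics, the measures $\mu_{\cc}^{\pstate}$ are only described informally, and Theorem~\ref{thm:kozen-duality} is cited from Kozen's work rather than proved. Indeed, Section~\ref{sec:kozen_duality} presents exactly your style of duality computation as informal motivation and defers the formal proof to this theorem, so proving it \emph{by} duality shifts the entire burden onto an external, unformalized result.

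Two concrete problems remain. First, your opening claim that \enquote{the while-case needed for Theorem~\ref{thm:wp_loop_as_sum} is then just an instance} is wrong. \Cref{thm:wp_loop_as_sum} is the \emph{small-step, path-sum} characterization (sums over state sequences of products of one-iteration preexpectations); it does not follow from the statement of Theorem~\ref{thm:wp_prob_times_exp} instantiated at a loop, but from the paper's \emph{proof} of its loop case---specifically the inner induction showing that the $k$-th Kleene iterate $\charwpn{\BB}{\cc_1}{\ff}{k}(0)$ equals the sum over length-$k$ paths. Your duality shortcut does not produce this by-product, and it is precisely this path decomposition that the expressiveness proof (Theorem~\ref{thm:expressive}) consumes, so with your proof in place \Cref{thm:wp_loop_as_sum} would still need its own unrolling argument. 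Second, of your two proposed repairs to the invariance lemma, the first is false: for syntactic expectations, independence on natural inputs does \emph{not} imply independence on $\PosRats$. Take $\FF = \iverson{2 \cdot x = 1} \cdot 1 \in \SyntE$: then $\eval{\FF}(\pstate\statesubst{x}{n}) = 0$ for every $n \in \Nats$, so $x \notin \Vars(\eval{\FF})$ under the appendix's literal ($\Nats$-based) definition, yet $\eval{\FF}(\pstate\statesubst{x}{\nicefrac{1}{2}}) = 1$. You must take your second option and read $\Vars(\ff)$ with substitutions ranging over $\PosRats$ (finiteness is preserved since $\Vars(\eval{\FF}) \subseteq \FV{\FF}$). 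Note that this amendment is needed not only for your proof but for the theorem to be true at all: with the literal definition, $\cc = \SKIP$, $\ff = \eval{\FF}$ as above, $\varseq{x}$ not containing $x$, and representatives chosen with $x$-value $0$ give a counterexample; the paper's own proof tacitly uses the same $\PosRats$-invariance (e.g.\ its $\SKIP$ case identifies $\ff(\sigma_0)$ with $\ff(\sigma)$ for the representative $\sigma$ of $\sigma_0$'s class). Finally, when proving invariance, remember that two $\equivstatesrel{\varseq{x}}$-equivalent states may differ on several variables, so you need the finite-support property of $\States$ to pass from single-variable independence to full class-invariance by changing one variable at a time.
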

\begin{proof}
	\label{proof:wp_prob_times_exp}
	By induction on the structure of $\cc$. For a state $\sigma'$, we often abbreviate $\statepred{\sigma'}{\varseq{x}}$ by $\statepred{\sigma'}{}$. \\ \\
	\noindent
	\emph{The case $\cc = \SKIP$.} We have
	\begin{align*}
	&\wp{\SKIP}{\ff}(\sigma_0) \\
	\eeq & \ff(\sigma_0)
	\tag{by definition} \\
	\eeq& \statepred{\sigma_0}{}(\sigma_0) \cdot \ff(\sigma_0) 
	\tag{$\statepred{\sigma_0}{}(\sigma_0) = 1$} \\
	\eeq& \wp{\SKIP}{\statepred{\sigma_0}{}}(\sigma_0) \cdot \ff(\sigma_0) 
	\tag{$\wp{\SKIP}{\statepred{\sigma_0}{}} = \statepred{\sigma_0}{}$} \\
	\eeq& \sum_{\sigma \in \partitionedstates{x}}
	\wp{\SKIP}{\statepred{\sigma}{}}(\sigma_0) \cdot \ff(\sigma) 
	\tag{there is exactly one $\sigma \in \partitionedstates{x}$ with
		$\wp{\SKIP}{\statepred{\sigma}{}}(\sigma_0) = 1$ and for this $\sigma$ we have $\equivstates{\sigma}{\varseq{x}}{\sigma_0}$}~.
	\end{align*}
	\emph{The case $\ASSIGN{\XX}{\TT}$.}  We have
	\begin{align*}
	& \wp{\ASSIGN{\XX}{\TT}}{\ff}(\sigma_0) \\
	\eeq &  \ff(\sigma_0\statesubst{\XX}{\sem{\TT}{\sigma_0}{}}) 
	\tag{by definition} \\
	\eeq & \statepred{\sigma_0\statesubst{\XX}{\sem{\TT}{\sigma_0}{}}{}}{}
	(\sigma_0\statesubst{\XX}{\sem{\TT}{\sigma_0}{}})
	\cdot 
	\ff(\sigma_0\statesubst{\XX}{\sem{\TT}{\sigma_0}{}}) 
	\tag{$\statepred{\sigma_0\statesubst{\XX}{\sem{\TT}{\sigma_0}{}}{}}{}
		(\sigma_0\statesubst{\XX}{\sem{\TT}{\sigma_0}{}}) = 1$} \\
	\eeq& \wp{\ASSIGN{\XX}{\TT}}{
		\statepred{\sigma_0\statesubst{\XX}{\sem{\TT}{\sigma_0}{}}{}}{}
	}(\sigma_0) 
	\cdot 
	\ff(\sigma_0\statesubst{\XX}{\sem{\TT}{\sigma_0}{}})
	\tag{by definition}  \\
	\eeq& \sum_{\sigma \in \partitionedstates{x}}
	\wp{\ASSIGN{\XX}{\TT}}{\statepred{\sigma}{}}(\sigma_0) \cdot \ff(\sigma) ~.
	\tag{$\substack{\text{there is exactly one $\sigma \in \partitionedstates{x}$ with
		$\wp{\ASSIGN{\XX}{\TT}}{\statepred{\sigma}{}}(\sigma_0) = 1$} \\ \text{and for this $\sigma$ we have $\equivstates{\sigma}{\varseq{x}}{\sigma_0\statesubst{\XX}{\sem{\TT}{\sigma_0}{}}}$}}$}
	\end{align*}
	As the induction hypothesis now assume that the theorem holds for some arbitrary, but fixed programs $\cc_1, \cc_2$ and all postexpectations $\ff$. \\ \\
	\noindent
	\emph{The case $\cc = \COMPOSE{\cc_1}{\cc_2}$.} We have
	\begin{align*}
	&\wp{\COMPOSE{\cc_1}{\cc_2}}{\ff}(\sigma_0) \\
	\eeq&\wp{\cc_1}{\wp{\cc_2}{\ff}}(\sigma_0)
	\tag{by definition} \\
	\eeq&\sum_{\sigma \in \partitionedstates{x}} \wp{\cc_1}{\statepred{\sigma}{}}(\sigma_0)
	\cdot \wp{\cc_2}{\ff}(\sigma)
	\tag{I.H.\ on $\cc_1$} \\
	\eeq & \sum_{\sigma \in \partitionedstates{x}} \wp{\cc_1}{\statepred{\sigma}{}}(\sigma_0)
	\cdot \sum_{\sigma' \in \partitionedstates{x}} \wp{\cc_2}{\statepred{\sigma'}{}}(\sigma)
	\cdot \ff(\sigma')
	\tag{I.H.\ on $\cc_2$} \\
	\eeq & \sum_{\sigma \in \partitionedstates{x}}  \sum_{\sigma' \in \partitionedstates{x}} \wp{\cc_1}{\statepred{\sigma}{}}(\sigma_0)
	\cdot\wp{\cc_2}{\statepred{\sigma'}{}}(\sigma)
	\cdot \ff(\sigma')
	\tag{algebra} \\
	\eeq &  \sum_{\sigma' \in \partitionedstates{x}} \sum_{\sigma \in \partitionedstates{x}}  \wp{\cc_1}{\statepred{\sigma}{}}(\sigma_0)
	\cdot\wp{\cc_2}{\statepred{\sigma'}{}}(\sigma)
	\cdot \ff(\sigma')
	\tag{algebra} \\
	\eeq &  \sum_{\sigma' \in \partitionedstates{x}}  \wp{\cc_1}{\wp{\cc_2}{\statepred{\sigma'}{}}}(\sigma_0)
	\cdot \ff(\sigma')
	\tag{I.H.\ on $\cc_1$} \\
	\eeq &\sum_{\sigma' \in \partitionedstates{x}} \wp{\COMPOSE{\cc_1}{\cc_2}}{\statepred{\sigma'}{}}(\sigma_0)
	\cdot \ff(\sigma')~.
	\tag{by definition}
	\end{align*}
	\emph{The case $\cc = \ITE{\BB}{\cc_1}{\cc_2}$.} We distinguish the cases 
	$\iverson{\BB}(\sigma_0) = 1$ and $\iverson{\neg \BB}(\sigma_0) = 1$. For $\iverson{\BB}(\sigma_0) = 1$, we have
	\begin{align*}
	& \wp{\ITE{\BB}{\cc_1}{\cc_2}}{\ff}(\sigma_0) \\
	\eeq & \iverson{\BB}(\sigma_0) \cdot \wp{\cc_1}{\ff}(\sigma_0)
	+
	\iverson{\neg \BB}(\sigma_0) \cdot \wp{\cc_2}{\ff}(\sigma_0)
	\tag{by definition} \\
	\eeq & \iverson{\BB}(\sigma_0) \cdot \wp{\cc_1}{\ff}(\sigma_0)
	\tag{$\iverson{\neg \BB}(\sigma_0)=0$ by assumption} \\
	\eeq& \iverson{\BB}(\sigma_0) \cdot \sum_{\sigma_1 \in \partitionedstates{x}}
	\wp{\cc_1}{\statepred{\sigma_1}{}}(\sigma_0) \cdot \ff(\sigma_1) 
	\tag{I.H.\ on $\cc_1$} \\
	\eeq& \sum_{\sigma_1 \in \partitionedstates{x}}\iverson{\BB}(\sigma_0) \cdot 
	\wp{\cc_1}{\statepred{\sigma_1}{}}(\sigma_0) \cdot \ff(\sigma_1) 
	\tag{algebra} \\
	\eeq& \sum_{\sigma_1 \in \partitionedstates{x}}
	\left(
	\iverson{\BB}(\sigma_0) \cdot 
	\wp{\cc_1}{\statepred{\sigma_1}{}}(\sigma_0)
	+
	\iverson{\neg \BB}(\sigma_0) \cdot \wp{\cc_2}{\statepred{\sigma_1}{}}(\sigma_0)
	\right)
	\cdot \ff(\sigma_1) 
	\tag{$\iverson{\neg \BB}(\sigma_0)=0$ by assumption} \\
	\eeq& \sum_{\sigma_1 \in \partitionedstates{x}}
	\wp{\ITE{\BB}{\cc_1}{\cc_2}}{\sigma_1}(\sigma_0)
	\cdot \ff(\sigma_1) 
	\tag{by definition}~.
	\end{align*}
	The case $\iverson{\neg \BB}(\sigma_0)=1$ is completely analogous. \\ \\
	\noindent
	\emph{The case $\cc = \WHILEDO{\BB}{\cc_1}$.} This case is more involved.
	First observe that for every $\sigma_1 \in \States$,
	\begin{align}
	\label{eqn:proof_thm_wp_prob_times_exp_1}
	\sup_{k \in \Nats} \charwpn{\BB}{\cc_1}{\statepred{\sigma_1}{}}{k}(0)(\sigma_0)
	\eeq
	\wp{\WHILEDO{\BB}{\cc_1}}{\statepred{\sigma_1}{}}(\sigma_0)~.
	\tag{by Lemma~\ref{lem:kleene_for_wp}}
	\end{align}
	We proceed by induction on $k$ to show that
	\begin{align}
	&\charwpn{\BB}{\cc_1}{\ff}{k}(0)(\sigma_0) \notag \\
	\eeq& 
	\sum_{\sigma_0,\ldots,\sigma_{k-1} \in \partitionedstates{x}}
	(\iverson{\neg \BB} \cdot \ff)(\sigma_{k-1})
	\cdot
	\prod\limits_{i=0}^{k-2} \wp{\ITE{\BB}{\cc_1}{\SKIP}}{\statepred{\sigma_{i+1}}{}}(\sigma_i) \notag \\
	\eeq& \sum_{\sigma_1 \in \partitionedstates{x}}
	\charwpn{\BB}{\cc_1}{\statepred{\sigma_1}{}}{k}(0)(\sigma_0) \cdot \ff(\sigma_1)~.
	\label{eqn:proof_thm_wp_prob_times_exp_2}
	\end{align}
	This implies the claim, since
	\begin{align*}
	&\wp{\WHILEDO{\BB}{\cc_1}}{\ff}(\sigma_0) \\
	\eeq&\sup_{k \in \Nats} \charwpn{\BB}{\cc_1}{\ff}{k}(0)(\sigma_0)
	\tag{by Lemma~\ref{lem:kleene_for_wp}} \\
	\eeq& \sup_{k \in \Nats}\sum_{\sigma_1 \in \partitionedstates{x}}
	\charwpn{\BB}{\cc_1}{\statepred{\sigma_1}{}}{k}(0)(\sigma_0) \cdot \ff(\sigma_1)
	\tag{by Equation~\ref{eqn:proof_thm_wp_prob_times_exp_2}} \\
	\eeq& \sup_{k \in \Nats} \sup_{k' \in \Nats}\sum\limits_{i=0}^{k'}
	\charwpn{\BB}{\cc_1}{\statepred{\sfsymbol{enum}(i)}{}}{k}(0)(\sigma_0) \cdot \ff(\sfsymbol{enum}(i))
	\tag{choose some bijection $\sfsymbol{enum}\colon \Nats \to \partitionedstates{x}$, value of infinite series is supremum of partial sums} \\
	\eeq&\sup_{k' \in \Nats} \sup_{k \in \Nats} \sum\limits_{i=0}^{k'}
	\charwpn{\BB}{\cc_1}{\statepred{\sfsymbol{enum}(i)}{}}{k}(0)(\sigma_0) \cdot \ff(\sfsymbol{enum}(i))
	\tag{swap suprema} \\
	\eeq&\sup_{k' \in \Nats}  \sum\limits_{i=0}^{k'} \sup_{k \in \Nats}
	\charwpn{\BB}{\cc_1}{\statepred{\sfsymbol{enum}(i)}{}}{k}(0)(\sigma_0) \cdot \ff(\sfsymbol{enum}(i))
	\tag{algebra, sum is finite} \\
	\eeq& \sup_{k' \in \Nats}  \sum\limits_{i=0}^{k'} \wp{\WHILEDO{\BB}{\cc_1}}{\statepred{\sfsymbol{enum}(i)}{}}(\sigma_0) \cdot \ff(\sfsymbol{enum}(i)) 
	\tag{by Lemma~\ref{lem:kleene_for_wp}} \\
	\eeq& \sum_{\sigma_1 \in \partitionedstates{x}} \wp{\WHILEDO{\BB}{\cc_1}}{\statepred{\sigma_1}{}}(\sigma_0) \cdot \ff(\sigma_1) 
	\tag{value of infinite series is supremum of partial sums}~.
	\end{align*}
	\noindent
	\emph{Base case $n=0$.} For $\charwpn{\BB}{\cc_1}{\ff}{0}(0)(\sigma_0)$, we have
	\begin{align*}
	&\charwpn{\BB}{\cc_1}{\ff}{0}(0)(\sigma_0) \\
	\eeq & 0 \\
	\eeq  &\sum_{\sigma_0,\ldots,\sigma_{k-1} \in \partitionedstates{x}}
	(\iverson{\neg \BB} \cdot \ff)(\sigma_{k-1})
	\cdot
	\prod\limits_{i=0}^{k-2} \wp{\ITE{\BB}{\cc_1}{\SKIP}}{\statepred{\sigma_{i+1}}{}}(\sigma_i)~.
	\tag{empty sum evaluates to $0$}
	\end{align*}
	For $\sum_{\sigma_1 \in \partitionedstates{x}}
	\charwpn{\BB}{\cc_1}{\statepred{\sigma_1}{}}{0}(0)(\sigma_0) \cdot \ff(\sigma_1)$, we have
	\begin{align*}
	&\sum_{\sigma_1 \in \partitionedstates{x}}
	\charwpn{\BB}{\cc_1}{\statepred{\sigma_1}{}}{0}(0)(\sigma_0) \cdot \ff(\sigma_1) \\
	\eeq & \sum_{\sigma_1 \in \partitionedstates{x}}
	0 \cdot \ff(\sigma_1) \\
	\eeq & 0~.
	\end{align*}
	As the induction hypothesis now assume that Equation~(\ref{eqn:proof_thm_wp_prob_times_exp_2}) holds for some arbitrary, but fixed, $k\in\Nats$. \\ \\
	\noindent
	\emph{Induction Step.} For $\charwpn{\BB}{\cc_1}{\ff}{k+1}(0)(\sigma_0)$, we have
	\begin{align*}
	& \charwpn{\BB}{\cc_1}{\ff}{k+1}(0)(\sigma_0) \\
	\eeq& \charwp{\BB}{\cc_1}{k} \left( \charwpn{\BB}{\cc_1}{\ff}{k}(0)\right)(\sigma_0)
	\tag{by definition} \\
	\eeq&\iverson{\BB}(\sigma_0) \cdot \wp{\cc_1}{\charwpn{\BB}{\cc_1}{\ff}{k}(0)}(\sigma_0)
	+ \iverson{\neg\BB}(\sigma_0) \cdot \ff(\sigma_0)
	\tag{by definition} \\
	\eeq&\iverson{\BB}(\sigma_0)  \\
	& \quad \cdot \wp{\cc_1}{ \lambda \sigma_0 \mydot
		\sum_{\sigma_0,\ldots,\sigma_{k-1} \in \partitionedstates{x}}
		(\iverson{\neg \BB} \cdot \ff)(\sigma_{k-1})
		\cdot
		\prod\limits_{i=0}^{k-2} \wp{\ITE{\BB}{\cc_1}{\SKIP}}{\statepred{\sigma_{i+1}}{}}(\sigma_i)    
	}(\sigma_0) \\
	&\quad + \iverson{\neg\BB}(\sigma_0) \cdot \ff(\sigma_0)
	\tag{I.H. on $k$} \\
	\eeq&\iverson{\BB}(\sigma_0) \cdot
	\sum_{\sigma_1 \in \partitionedstates{x}}
	\wp{\cc_1}{\statepred{\sigma_1}{}}(\sigma _0) \\
	& \quad \cdot 
	\left( \lambda \sigma_0 \mydot
	\sum_{\sigma_0,\ldots,\sigma_{k-1} \in \partitionedstates{x}}
	(\iverson{\neg \BB} \cdot \ff)(\sigma_{k-1})
	\cdot
	\prod\limits_{i=0}^{k-2} \wp{\ITE{\BB}{\cc_1}{\SKIP}}{\statepred{\sigma_{i+1}}{}}(\sigma_i)    
	\right) (\sigma_1) \\
	&\quad + \iverson{\neg\BB}(\sigma_0) \cdot \ff(\sigma_0)
	\tag{I.H. on $\cc_1$} \\
	\eeq&\iverson{\BB}(\sigma_0) \cdot
	\sum_{\sigma_1 \in \partitionedstates{x}}
	\wp{\cc_1}{\statepred{\sigma_1}{}}(\sigma _0) \\
	& \quad \cdot 
	\left( 
	\sum_{\sigma_1,\ldots,\sigma_{k} \in \partitionedstates{x}}
	(\iverson{\neg \BB} \cdot \ff)(\sigma_{k})
	\cdot
	\prod\limits_{i=1}^{k-1} \wp{\ITE{\BB}{\cc_1}{\SKIP}}{\statepred{\sigma_{i+1}}{}}(\sigma_i)    
	\right)  \\
	&\quad + \iverson{\neg\BB}(\sigma_0) \cdot \ff(\sigma_0)
	\tag{applying $\sigma_1$ and index shift} \\
	\eeq&
	\sum_{\sigma_1 \in \partitionedstates{x}} \iverson{\BB}(\sigma_0) \cdot
	\wp{\cc_1}{\statepred{\sigma_1}{}}(\sigma _0) \\
	& \quad \cdot 
	\left( 
	\sum_{\sigma_1,\ldots,\sigma_{k} \in \partitionedstates{x}}
	(\iverson{\neg \BB} \cdot \ff)(\sigma_{k})
	\cdot
	\prod\limits_{i=1}^{k-1} \wp{\ITE{\BB}{\cc_1}{\SKIP}}{\statepred{\sigma_{i+1}}{}}(\sigma_i)    
	\right)  \\
	&\quad + \iverson{\neg\BB}(\sigma_0) \cdot \ff(\sigma_0)
	\tag{algebra} \\
	\eeq&
	\sum_{\sigma_0,\ldots,\sigma_{k} \in \partitionedstates{x}}
	(\iverson{\neg \BB} \cdot \ff)(\sigma_{k})
	\cdot
	\prod\limits_{i=0}^{k-1} \wp{\ITE{\BB}{\cc_1}{\SKIP}}{\statepred{\sigma_{i+1}}{}}(\sigma_i) ~.
	\tag{see below} 
	\end{align*}
	To see that the last step is sound, distinguish the cases $\iverson{\BB}(\sigma_0) =1$
	and $\iverson{\BB}(\sigma_0)=0$. If $\iverson{\BB}(\sigma_0)=1$, then 
	$\wp{\ITE{\BB}{\cc_1}{\SKIP}}{\statepred{\sigma_1}{}}(\sigma_0)
	= \wp{\cc_1}{\statepred{\sigma_1}{}}(\sigma_0)$. Conversely, if $\iverson{\BB}(\sigma_0)=0$, then there is \emph{exactly one}
	sequence of states $\sigma_0, \ldots,\sigma_k = \sigma_0, \ldots, \sigma_0$ such that
	\[
	\prod\limits_{i=0}^{k-1} \wp{\ITE{\BB}{\cc_1}{\SKIP}}{\statepred{\sigma_{i+1}}{}}(\sigma_i)
	\]
	evaluates to $1$. For all other sequences, the above product evaluates to $0$. This gives us 
	\begin{align*}
	&\sum_{\sigma_0,\ldots,\sigma_{k} \in \partitionedstates{x}}
	(\iverson{\neg \BB} \cdot \ff)(\sigma_{k-1})
	\cdot
	\prod\limits_{i=0}^{k-1} \wp{\ITE{\BB}{\cc_1}{\SKIP}}{\statepred{\sigma_{i+1}}{}}(\sigma_i) \\
	\eeq& 
	(\iverson{\neg \BB} \cdot \ff)(\sigma_{0})
	\cdot
	\prod\limits_{i=0}^{k-1} \wp{\ITE{\BB}{\cc_1}{\SKIP}}{\statepred{\sigma_{0}}{}}(\sigma_0)  \\
	\eeq & (\iverson{\neg \BB} \cdot \ff)(\sigma_{0})~,
	\end{align*}
	which completes this case. \\ \\ 
	\noindent
	For $\sum_{\sigma_1 \in \partitionedstates{x}}
	\charwpn{\BB}{\cc_1}{\statepred{\sigma_1}{}}{k+1}(0)(\sigma_0) \cdot \ff(\sigma_1)$, we have
	\begin{align*}
	&\sum_{\sigma_1 \in \partitionedstates{x}}
	\charwpn{\BB}{\cc_1}{\statepred{\sigma_1}{}}{k+1}(0)(\sigma_0) \cdot \ff(\sigma_1) \\
	\eeq&\sum_{\sigma_1 \in \partitionedstates{x}}
	\charwp{\BB}{\cc_1}{\statepred{\sigma_1}{}}(\charwpn{\BB}{\cc_1}{\statepred{\sigma_1}{}}{k}(0))(\sigma_0)
	\cdot \ff(\sigma_1) 
	\tag{by definition}\\
	\eeq&\sum_{\sigma_1 \in \partitionedstates{x}}
	\left(
	\iverson{\BB}(\sigma_0)  
	\cdot
	\wp{\cc_1}{\charwpn{\BB}{\cc_1}{\statepred{\sigma_1}{}}{k}(0)}(\sigma_0)
	+ (\iverson{\neg\BB} \cdot \statepred{\sigma_1}{})(\sigma_0)
	\right)
	\cdot \ff(\sigma_1)
	\tag{by definition} \\
	\eeq&\sum_{\sigma_1 \in \partitionedstates{x}}
	\iverson{\BB}(\sigma_0)  
	\cdot
	\wp{\cc_1}{\charwpn{\BB}{\cc_1}{\statepred{\sigma_1}{}}{k}(0)}(\sigma_0)
	\cdot \ff(\sigma_1) \\
	&
	+ (\iverson{\neg\BB} \cdot \ff)(\sigma_0)
	\tag{$(\iverson{\neg\BB} \cdot \statepred{\sigma_1}{})(\sigma_0) \neq 0$ 
		only if $\equivstates{\sigma_0}{\varseq{x}}{\sigma_1}$} \\
	\eeq&\sum_{\sigma_1 \in \partitionedstates{x}}
	\iverson{\BB}(\sigma_0)  
	\cdot
	\left(\sum_{\sigma \in \partitionedstates{x}}
	\wp{\cc_1}{\statepred{\sigma}{}}(\sigma_0)
	\cdot \charwpn{\BB}{\cc_1}{\statepred{\sigma_1}{}}{k}(0)(\sigma)\right)
	\cdot \ff(\sigma_1) \\
	&
	+ (\iverson{\neg\BB} \cdot \ff)(\sigma_0)
	\tag{I.H.\ on $\cc_1$} \\
	\eeq&\sum_{\sigma_1 \in \partitionedstates{x}} \sum_{\sigma \in \partitionedstates{x}}
	\iverson{\BB}(\sigma_0)  
	\cdot
	\wp{\cc_1}{\statepred{\sigma}{}}(\sigma_0)
	\cdot \charwpn{\BB}{\cc_1}{\statepred{\sigma_1}{}}{k}(0)(\sigma)
	\cdot \ff(\sigma_1) \\
	&
	+ (\iverson{\neg\BB} \cdot \ff)(\sigma_0)
	\tag{algebra} \\
	\eeq&\sum_{\sigma \in \partitionedstates{x}} \sum_{\sigma_1 \in \partitionedstates{x}} 
	\iverson{\BB}(\sigma_0)  
	\cdot
	\wp{\cc_1}{\statepred{\sigma}{}}(\sigma_0)
	\cdot \charwpn{\BB}{\cc_1}{\statepred{\sigma_1}{}}{k}(0)(\sigma)
	\cdot \ff(\sigma_1) \\
	&
	+ (\iverson{\neg\BB} \cdot \ff)(\sigma_0)
	\tag{swap sums} \\
	\eeq&\sum_{\sigma \in \partitionedstates{x}} 
	\iverson{\BB}(\sigma_0)  
	\cdot
	\wp{\cc_1}{\statepred{\sigma}{}}(\sigma_0)
	\cdot \left(
	\sum_{\sigma_1 \in \partitionedstates{x}} \charwpn{\BB}{\cc_1}{\statepred{\sigma_1}{}}{k}(0)(\sigma)
	\cdot \ff(\sigma_1) \right) \\
	&
	+ (\iverson{\neg\BB} \cdot \ff)(\sigma_0)
	\tag{algebra} \\
	\eeq&\sum_{\sigma \in \partitionedstates{x}} 
	\iverson{\BB}(\sigma_0)  
	\cdot
	\wp{\cc_1}{\statepred{\sigma}{}}(\sigma_0) \\
	& \quad \cdot \left(
	\lambda \sigma_0 \mydot \sum_{\sigma_0,\ldots,\sigma_{k-1} \in \partitionedstates{x}}
	(\iverson{\neg \BB} \cdot \ff)(\sigma_{k-1})
	\cdot
	\prod\limits_{i=0}^{k-2} \wp{\ITE{\BB}{\cc_1}{\SKIP}}{\statepred{\sigma_{i+1}}{}}(\sigma_i)
	\right)(\sigma) \\
	&
	+ (\iverson{\neg\BB} \cdot \ff)(\sigma_0)
	\tag{I.H. on $k$} \\ 
	\eeq&\sum_{\sigma_1 \in \partitionedstates{x}} 
	\iverson{\BB}(\sigma_0)  
	\cdot
	\wp{\cc_1}{\statepred{\sigma_1}{}}(\sigma_0) \\
	& \quad \cdot \left(
	\lambda \sigma_0 \mydot \sum_{\sigma_0,\ldots,\sigma_{k-1} \in \partitionedstates{x}}
	(\iverson{\neg \BB} \cdot \ff)(\sigma_{k-1})
	\cdot
	\prod\limits_{i=0}^{k-2} \wp{\ITE{\BB}{\cc_1}{\SKIP}}{\statepred{\sigma_{i+1}}{}}(\sigma_i)
	\right)(\sigma_1) \\
	&
	+ (\iverson{\neg\BB} \cdot \ff)(\sigma_0)
	\tag{rename $\sigma$ by $\sigma_1$} \\ \\
	\eeq&\sum_{\sigma_1 \in \partitionedstates{x}} 
	\iverson{\BB}(\sigma_0)  
	\cdot
	\wp{\cc_1}{\statepred{\sigma_1}{}}(\sigma_0) \\
	& \quad \cdot \left(
	\sum_{\sigma_1,\ldots,\sigma_{k} \in \partitionedstates{x}}
	(\iverson{\neg \BB} \cdot \ff)(\sigma_{k}) 
	\cdot 
	\prod\limits_{i=1}^{k-1} \wp{\ITE{\BB}{\cc_1}{\SKIP}}{\statepred{\sigma_{i+1}}{}}(\sigma_i)
	\right) \\
	&
	+ (\iverson{\neg\BB} \cdot \ff)(\sigma_0)
	\tag{applying $\sigma_1$ and index shift} \\ 
	\eeq&
	\sum_{\sigma_0,\ldots,\sigma_{k} \in \partitionedstates{x}}
	(\iverson{\neg \BB} \cdot \ff)(\sigma_{k})
	\cdot
	\prod\limits_{i=0}^{k-1} \wp{\ITE{\BB}{\cc_1}{\SKIP}}{\statepred{\sigma_{i+1}}{}}(\sigma_i) ~.
	\tag{see reasoning for previous case}
	\end{align*}
	This completes the proof.
\end{proof}

\subsection{Proof of \Cref{thm:expressive}}
\label{proof:epressiveness_appendix}
We employ an auxiliary result.
    \begin{lemma}
	\label{lem:express_statepred_exp}
	Let $\cc \in \pgcl$. Then, for every $\sigma \in \States$, we have
	\begin{align*}
	\wp{\cc}{\statepred{\sigma}{\varseq{x}}} \eeq 
	\wp{\cc}{\iverson{x_0 = x_0' \wedge \ldots \wedge x_{n-1}=x_{n-1}'}}
	\left[ x_0' / \sigma(x_0) , \ldots ,  x_{n-1}' / \sigma(x_{n-1}) \right]~.
	\end{align*}
\end{lemma}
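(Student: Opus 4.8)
The plan is to observe that the postexpectation on the left-hand side is nothing but the ``primed'' predicate on the right-hand side after substituting the constants $\sigma(x_i)$ for the fresh variables $x_i'$; that is,
\[
  \statepred{\sigma}{\varseq{x}} \eeq \iverson{x_0 = x_0' \wedge \ldots \wedge x_{n-1} = x_{n-1}'}\subst{x_0'}{\sigma(x_0)} \cdots \subst{x_{n-1}'}{\sigma(x_{n-1})}~.
\]
Thus the claim reduces to showing that substituting a fresh variable by a constant \emph{commutes} with taking weakest preexpectations. Concretely, I would isolate the auxiliary statement: for every $\cc \in \pgcl$, every semantic expectation $\fg$, every $r \in \PosRats$, and every variable $y \notin \Vars(\cc)$,
\[
  \wp{\cc}{\fg}\subst{y}{r} \eeq \wp{\cc}{\fg\subst{y}{r}}~.
\]
Iterating this equality over the fresh variables $x_0', \ldots, x_{n-1}'$ (none of which occur in $\cc$, by the definition of $\varseq{x}$) then yields the lemma.

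The commutation statement I would prove by induction on the structure of $\cc$, following the rules of Figure~\ref{table:wp}. For $\SKIP$ both sides equal $\fg\subst{y}{r}$. For an assignment $\ASSIGN{x}{\TT}$, freshness gives $x \neq y$ and $y \notin \Vars(\TT)$, so that $\sem{\TT}{\tau\statesubst{y}{r}}{} = \sem{\TT}{\tau}{}$ for every state $\tau$ and updates to distinct variables commute; hence $(\fg\subst{x}{\TT})\subst{y}{r} = (\fg\subst{y}{r})\subst{x}{\TT}$, which is precisely the two sides. The cases of sequential composition, probabilistic choice, and conditional choice follow directly from the induction hypotheses, using that $\subst{y}{r}$ distributes over the pointwise operations $+$ and $\cdot$ and that $\iverson{\BB}\subst{y}{r} = \iverson{\BB}$ because the guard $\BB$ is part of $\cc$ and therefore does not mention $y$.

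The main obstacle is the loop $\cc = \WHILEDO{\BB}{\cc_1}$, where I must push the substitution through a least fixed point. Here I would invoke Kleene's theorem (Lemma~\ref{lem:kleene_for_wp}) to write $\wp{\cc}{\fg} = \sup_{n \in \Nats} \charwpn{\BB}{\cc_1}{\fg}{n}(0)$, and then prove by an \emph{inner} induction on $n$ that $\charwpn{\BB}{\cc_1}{\fg}{n}(0)\subst{y}{r} = \charwpn{\BB}{\cc_1}{\fg\subst{y}{r}}{n}(0)$, using the \emph{outer} induction hypothesis for $\cc_1$ together with the distributivity facts and $\iverson{\BB}\subst{y}{r} = \iverson{\BB}$ already noted. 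The final ingredient is that $\subst{y}{r}$ commutes with suprema, i.e.\ $(\sup_{n} \fg_n)\subst{y}{r} = \sup_{n}(\fg_n\subst{y}{r})$, which is immediate since $\subst{y}{r}$ is defined pointwise as evaluation at the shifted state $\tau\statesubst{y}{r}$. Chaining these equalities gives $\wp{\cc}{\fg}\subst{y}{r} = \wp{\cc}{\fg\subst{y}{r}}$, completing the induction and hence the lemma.
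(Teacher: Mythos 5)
Your proof is correct and takes essentially the same approach as the paper, whose entire proof of this lemma is the single line \enquote{By induction on $\cc$}. Your isolated commutation property --- $\wp{\cc}{\fg}\subst{y}{r} \eeq \wp{\cc}{\fg\subst{y}{r}}$ for fresh $y \notin \Vars(\cc)$, with the loop case handled via Kleene iteration and the pointwise commutation of $\subst{y}{r}$ with suprema --- is exactly the generalization needed to make that structural induction go through (a naive induction on the lemma's statement alone would stall at sequential composition, since the intermediate postexpectation $\wp{\cc_2}{\statepred{\sigma}{\varseq{x}}}$ is no longer of the form $\statepred{\tau}{\varseq{x}}$), so your write-up supplies precisely the details the paper omits.
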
 
\begin{proof}
	By induction on $\cc$.
\end{proof}

Since we encode program states $\sigma$ as G\"odel numbers $\seqnum{\sigma}$, we define for every $\FF \in \SyntE$ a syntactic expectation $\gsubst{\varseq{x}}{\FF}{\VV}$ such that $\gsubst{\varseq{x}}{\FF}{\seqnum{\sigma}}$ is equivalent to the syntactic expectation obtained from substituting every $x \in \varseq{x}$ by $\sigma(x)$ in $\FF$. For that, let $\VV_0,\ldots,\VV_{n-1}$ be fresh variables not occurring in $\FF$. Now define
\begin{align*}
&\gsubst{\varseq{x}}{\FF}{\VV} \\
\ddefeq&
\Sup \VV_0 \colon \ldots \colon \Sup \VV_{n-1} \colon
\iverson{\rseqelem{\VV}{0}{\VV_0}
	\wedge \ldots \wedge \rseqelem{\VV}{n-1}{\VV_{n-1}}}
\exprod
\FF
\left[ x_0 / \VV_0 , \ldots ,  x_{n-1} / \VV_{n-1} \right]~.
\end{align*}
\begin{lemma}
	\label{lem:subst_goedel}
	%
	For every $\sigma \in \States$ and $\gnum = \seqnum{\sigma}$, we have
	\begin{align*}
	\FF\left[ x_0 / \sigma(x_0) , \ldots ,  x_n / \sigma(x_n) \right] 
	\eequiv
	\gsubst{x}{\FF}{\gnum}~.
	\end{align*}
\end{lemma}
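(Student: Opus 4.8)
The plan is to prove the equivalence pointwise: I fix an arbitrary state $\tau$ and show that both sides take the same value on $\tau$. The engine of the argument is the defining property of the G\"odel number $\seqnum{\sigma}$ combined with the functional behaviour of $\rseqelemsymbol$ from \Cref{thm:rho_expectation}. Concretely, whenever $\gnum$ denotes $\seqnum{\sigma}$, for each index $i \in \{0,\ldots,n-1\}$ and each $s \in \PosRats$ we have $\sem{\rseqelem{\gnum}{i}{s}}{\tau}{} = \true$ if and only if $s = \sigma(x_i)$; in particular, for every $i$ there is exactly one witnessing value, namely $\sigma(x_i)$. I would isolate this as the key fact before unfolding anything.

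First I would unfold the right-hand side $\gsubst{x}{\FF}{\gnum}$ according to its definition and the semantics of the nested $\Sup$ quantifiers, obtaining a supremum over all tuples $(r_0,\ldots,r_{n-1}) \in \PosRats$ of the body evaluated under $\tau\statesubst{\VV_0}{r_0}\cdots\statesubst{\VV_{n-1}}{r_{n-1}}$. Using \Cref{thm:exp_subsumes_fo_rats} together with the key fact, the Iverson bracket $\iverson{\rseqelem{\gnum}{0}{\VV_0} \wedge \ldots \wedge \rseqelem{\gnum}{n-1}{\VV_{n-1}}}$ evaluates to $1$ at precisely the single tuple $(r_0,\ldots,r_{n-1}) = (\sigma(x_0),\ldots,\sigma(x_{n-1}))$ and to $0$ at every other tuple. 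Applying \Cref{cor:unrestricted_product} to rewrite $\exprod$ as ordinary multiplication, the body vanishes at every non-distinguished tuple, so the supremum collapses to the single value contributed by the distinguished tuple.

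It then remains to identify that value with the left-hand side. Evaluating the distinguished body yields $\sem{\FF\subst{x_0}{\VV_0}\cdots\subst{x_{n-1}}{\VV_{n-1}}}{\tau\statesubst{\VV_0}{\sigma(x_0)}\cdots\statesubst{\VV_{n-1}}{\sigma(x_{n-1})}}{}$, and an $n$-fold application of the substitution lemma (\Cref{lem:substitution}), together with the freshness of $\VV_0,\ldots,\VV_{n-1}$ (so that their values are irrelevant to $\FF$ after substitution), rewrites this as $\sem{\FF}{\tau\statesubst{x_0}{\sigma(x_0)}\cdots\statesubst{x_{n-1}}{\sigma(x_{n-1})}}{}$. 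A final application of \Cref{lem:substitution} to the left-hand side shows it equals the very same quantity, closing the argument. The point requiring genuine care---and the reason the construction uses the unrestricted product $\exprod$ rather than plain guarding or scaling---is the interaction with the convention $0 \cdot \infty = 0$: since $\sem{\FF\subst{x_0}{\VV_0}\cdots}{}{}$ may evaluate to $\infty$, I must ensure that the $0$ produced by the Iverson bracket annihilates it at the non-distinguished tuples, which is exactly what \Cref{cor:unrestricted_product} guarantees, and which is also what lets me conclude that the supremum over the non-negative bodies is attained at, and equals, the distinguished tuple.
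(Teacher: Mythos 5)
Your proof is correct and follows essentially the same route as the paper's: unfold the definition of $\gsubst{x}{\FF}{\gnum}$, use the uniqueness property of $\rseqelemsymbol$ on the G\"odel number $\seqnum{\sigma}$ to collapse the nested suprema to the single tuple $(\sigma(x_0),\ldots,\sigma(x_{n-1}))$, and identify the surviving term with the left-hand side. You merely spell out pointwise details (the $0\cdot\infty$ annihilation via \Cref{cor:unrestricted_product} and the final identification via \Cref{lem:substitution}) that the paper compresses into a three-line chain of equivalences.
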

\begin{proof}
	See Appendix~\ref{proof:subst_goedel}.
\end{proof}
If $\FV{\FF} \subseteq \varseq{x}$, then substituting every $x \in \varseq{x}$ by $\sigma'(x)$ in $\FF$ corresponds to \emph{evaluating} $\eval{\FF}$ in $\sigma'$:
\begin{lemma}
	\label{lem:apply_goedel}
	If $\FV{\FF} \subseteq \varseq{x}$, then , for all states $\sigma, \sigma'$ and every $\gnum = \seqnum{\sigma'}$, we have
	\[
	\sem{\gapply{x}{\FF}{\gnum}}{\sigma}{\interpret}
	\eeq
	\sem{\FF}{}{}(\sigma')~.
	\]
\end{lemma}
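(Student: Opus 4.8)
The plan is to reduce \Cref{lem:apply_goedel} to the already-established \Cref{lem:subst_goedel} (the syntactic substitution lemma for G\"odel-encoded states), together with \Cref{lem:substitution} and a standard coincidence property for the semantics of syntactic expectations. First I would unfold the hypothesis $\gnum = \seqnum{\sigma'}$: it means $\sigma(\gnum) = \seqnum{\sigma'}$, i.e.\ under $\sigma$ the variable occupying the G\"odel-number slot of $\gapply{x}{\FF}{\gnum}$ evaluates to the number encoding $\sigma'$ with respect to $\varseq{x}$.

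The first step is to replace the variable $\gnum$ by the constant $\seqnum{\sigma'}$. Applying \Cref{lem:substitution} to the composite expectation $\gapply{x}{\FF}{\gnum}$ with the variable $\gnum$ and the constant $\seqnum{\sigma'}$, and using that $\sigma\statesubst{\gnum}{\seqnum{\sigma'}} = \sigma$ (which holds precisely because $\sigma(\gnum) = \seqnum{\sigma'}$), I obtain
\[
   \sem{\gapply{x}{\FF}{\gnum}}{\sigma}{\interpret}
   \eeq
   \sem{\gapply{x}{\FF}{\seqnum{\sigma'}}}{\sigma}{\interpret}~.
\]
Next I invoke \Cref{lem:subst_goedel} with $\sigma'$ and $\gnum = \seqnum{\sigma'}$, which yields the semantic equivalence $\gapply{x}{\FF}{\seqnum{\sigma'}} \equiv \FF\left[ x_0 / \sigma'(x_0), \ldots, x_{n-1} / \sigma'(x_{n-1}) \right]$, and hence the corresponding equality of semantics under $\sigma$. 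Finally I apply \Cref{lem:substitution} once per coordinate to pull all $n$ substitutions into the state, reaching $\sem{\FF}{\sigma\statesubst{x_0}{\sigma'(x_0)} \cdots \statesubst{x_{n-1}}{\sigma'(x_{n-1})}}{\interpret}$; since the $x_i$ are pairwise distinct these updates commute and produce a state that coincides with $\sigma'$ on every variable of $\varseq{x}$.

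The genuinely load-bearing step is the last one: a coincidence lemma stating that $\sem{\FF}{\cdot}{\interpret}$ depends only on the values assigned to the free variables $\FV{\FF}$. Given $\FV{\FF} \subseteq \varseq{x}$ by hypothesis, and since the updated state agrees with $\sigma'$ on all of $\varseq{x}$, this identifies $\sem{\FF}{\sigma\statesubst{x_0}{\sigma'(x_0)} \cdots}{\interpret}$ with $\eval{\FF}(\sigma')$, closing the chain. I expect this coincidence property to be the only ingredient requiring real (albeit routine) work; it is standard and provable by induction on the structure of $\FF$, the $\Sup/\Inf$ cases being handled by the usual argument that re-binding a variable outside $\FV{\FF}$ is irrelevant, and if the paper records it earlier I would simply cite it. All remaining manipulations are bookkeeping about G\"odel encodings that \Cref{lem:subst_goedel} already discharges. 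As an alternative to citing \Cref{lem:subst_goedel}, one could unfold the definition of $\gapply{x}{\FF}{\gnum}$ directly and use \Cref{thm:rho_expectation} to argue that the nested suprema select the unique witnesses $\VV_i = \sigma'(x_i)$, with the remaining terms vanishing by $0 \cdot \infty = 0$ via \Cref{cor:unrestricted_product}; but reusing \Cref{lem:subst_goedel} is shorter.
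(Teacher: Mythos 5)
Your proof is correct, and it takes a more modular route than the paper's. The paper proves \Cref{lem:apply_goedel} directly: it unfolds the definition of $\sfsymbol{Subst}_{\varseq{x}}\left[\FF, \gnum\right]$, collapses the $\Sup$ quantifiers using the fact that $\rseqelem{\gnum}{j}{\VV_j}$ can hold only for $\VV_j = \sigma'(x_j)$, drops the then-trivial Iverson bracket, and only afterwards performs your last two steps, namely converting the syntactic substitution $\FF\left[x_0/\sigma'(x_0), \ldots, x_{n-1}/\sigma'(x_{n-1})\right]$ into a state update via \Cref{lem:substitution} and concluding by coincidence ($\Vars(\sem{\FF}{}{}) \subseteq \FV{\FF} \subseteq \varseq{x}$). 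In other words, the paper re-derives inside this proof essentially the content of \Cref{lem:subst_goedel}, whereas you observe that $\gapply{x}{\FF}{\gnum}$ and $\gsubst{x}{\FF}{\gnum}$ denote the very same construct and simply cite \Cref{lem:subst_goedel}, instantiated at $\sigma'$, to jump straight to the substituted expectation; this is legitimate and non-circular, since \Cref{lem:subst_goedel} precedes \Cref{lem:apply_goedel} and its proof does not depend on it. Your remaining ingredients match the paper's: iterated application of \Cref{lem:substitution} (unproblematic here because the substituted terms are rational constants, so iterated and simultaneous substitution agree), and the coincidence property, which the paper likewise invokes without a standalone proof; your plan to establish it by a routine induction on the structure of $\FF$ is exactly what is needed, so you meet the paper's own standard of rigor there. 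Two minor remarks: your preliminary step replacing the variable $\gnum$ by the constant $\seqnum{\sigma'}$ is harmless but unnecessary under the paper's reading, in which $\gnum = \seqnum{\sigma'}$ already denotes the G\"odel number itself; and your sketched alternative (unfolding the definition and using \Cref{thm:rho_expectation} to pin down the unique witnesses) is precisely the paper's own argument, so either variant would have been acceptable.
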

\begin{proof}
	See Appendix~\ref{proof:apply_goedel}.
\end{proof}
In particular, $\sem{\gapply{x}{\FH}{\VV}}{\sigma}{\interpret}$ is independent of $\sigma$ since $\gapply{x}{\FH}{\VV}$ does not contain free program variables. \\ \\
\noindent
\emph{Proof of of \Cref{thm:expressive}.}

We prove that

\begin{align*}
&\wp{\WHILEDO{\BB}{\cc_1}}{\sem{\FF}{}{}}(\sigma)   \\
\eeq & \semleft{\Sup length \colon \Sup nums \colon
	\gsumsymbol\big[\vsum, \iverson{\stateseq{x}{\vsum}{length}}}{\sigma}{} \\
&\qquad \qquad \qquad \qquad\qquad\qquad \qquad 
\exprod	\pathexp{length}{\vsum}, {nums} \big] \semright ~.
\end{align*}
Assuming that $\pathexpsymbol$ satisfies Equations (\ref{eqn:pathexp_spec_1}) and (\ref{eqn:pathexp_spec_2}), we have
\begin{align*}
& \semleft{\Sup length \colon \Sup nums \colon
	\gsumsymbol\big[\vsum, \iverson{\stateseq{x}{\vsum}{length}}}{\sigma}{} \\
&\qquad \qquad \qquad \qquad\qquad\qquad \qquad 
\exprod	\pathexp{length}{\vsum}, {nums} \big] \semright \\
\eeq&  \sup \big\{
\semleft{ \Sup nums \colon
	\gsumsymbol\big[\vsum, \iverson{\stateseq{x}{\vsum}{length}}}{\sigma\statesubst{length}{r}}{} \\
&\qquad \qquad \qquad \qquad\qquad\qquad \qquad 
\exprod	\pathexp{length}{\vsum}, {nums} \big] \semright 
~\mid~ {r \in \PosRats} \big\}
\tag{by definition} \\
\eeq&\sup_{r \in \PosRats}
\sem{\Sup nums \colon
	\gsum{\iverson{\stateseq{x}{\vsum}{r}} \exprod	\pathexp{r}{\vsum}}{nums}}{\sigma}{\interpret}
\tag{rewrite supremum} \\
\eeq&\sup_{r\in \PosRats} \sum\limits_{j=0}^{\infty}
\sem{\iverson{\stateseq{x}{\vsum}{r}} \exprod	\pathexp{r}{\vsum}}{\sigma\statesubst{\vsum}{j}}{\interpret\statesubst{\vsum}{j}}
\tag{by Theorem~\ref{thm:sum_exp}} \\
\eeq & \sup_{r\in \PosRats} \sum\limits_{j=0}^{\infty}
\sem{\iverson{\stateseq{x}{j}{r}} \exprod	\pathexp{r}{j}}{\sigma}{\interpret}
\tag{Lemma \ref{lem:substitution}} \\
\eeq & \sup_{k\in \Nats} \sum_{j=0}^{\infty}
\sem{\iverson{\stateseq{x}{j}{r}} \exprod \pathexp{k}{j}}{\sigma}{\interpret}
\tag{$\pathexp{r}{j}=0$, if $r \not\in \Nats$, replace $r$ by $k$} \\
\eeq & \sup_{k\in \Nats} \sum_{\substack{j=\seqnum{\sigma_0,\ldots,\sigma_{k-1}}
		\\ \sigma_0,\ldots,\sigma_{k-1} \in \partitionedstates{x} \\ \equivstates{\sigma_0}{\varseq{x}}{\sigma}}}
\sem{\pathexp{k}{j}}{\sigma}{\interpret}
\tag{for every sequence $\sigma_0,\ldots,\sigma_{k-1} \in \partitionedstates{x}$ there is exactly one G\"odel number and $\equivstates{\sigma_0}{\varseq{x}}{\sigma}$} \\
\eeq& \sup_{k\in \Nats} \sum_{\substack{j=\seqnum{\sigma_0,\ldots,\sigma_{k-1}}
		\\ \sigma_0,\ldots,\sigma_{k-1} \in \partitionedstates{x}}} \statepred{\sigma_0}{\varseq{x}}(\sigma) \cdot
(\iverson{\neg \BB} \cdot \sem{\FF}{}{})(\sigma_{k-1})
\cdot
\prod\limits_{i=0}^{k-2} \wp{\ITE{\BB}{\cc_1}{\SKIP}}{\statepred{\sigma_{i+1}{}}{}}(\sigma_i)
\tag{by Equation~(\ref{eqn:pathexp_spec_1}) and $\statepred{\sigma_0}{\varseq{x}}(\sigma) = 1$ iff $\equivstates{\sigma_0}{\varseq{x}}{\sigma}$} \\
\eeq& \sup_{k\in \Nats} \sum_{ \sigma_0,\ldots,\sigma_{k-1} \in \partitionedstates{x}}
\statepred{\sigma_0}{\varseq{x}}(\sigma) \cdot
(\iverson{\neg \BB} \cdot \sem{\FF}{}{})(\sigma_{k-1})
\cdot
\prod\limits_{i=0}^{k-2} \wp{\ITE{\BB}{\cc_1}{\SKIP}}{\statepred{\sigma_{i+1}{}}{}}(\sigma_i)
\tag{product does not depend on $j$} \\
\eeq& \wp{\WHILEDO{\BB}{\cc_1}}{\sem{\FF}{}{}}(\sigma)  
\tag{by Theorem~\ref{thm:wp_loop_as_sum}}~.
\end{align*}
It remains to give $\pathexp{\VV_1}{\VV_2}$.
By I.H., there is a $\FG \in \SyntE$ with  
\[
\wp{\ITE{\BB}{\cc_1}{\SKIP}}{\iverson{x_0 = x_0' \wedge \ldots \wedge x_n=x_n'}} \eeq \eval{\FG}~.
\]
Hence, by Lemma~\ref{lem:express_statepred_exp}, we have for every $\sigma \in \States$:
\begin{align}
\label{eqn:g_express_wp_body}
\wp{\ITE{\BB}{\cc_1}{\SKIP}}{\statepred{\sigma}{\varseq{x}}}
\eeq
\eval{\FG\left[ x_0' / \sigma(x_0) , \ldots ,  x_n' / \sigma(x_n) \right]}
\end{align}
Now define
\begin{align*}
&\pathexp{\VV_1}{\VV_2} \\
\ddefeq & 
\iverson{\VV_1 < 2} \cdot (\Sup \gnum \colon \iverson{\seqelem{\VV_2}{\VV_1 -1}{\gnum}}
\exprod \gapply{x}{(\iverson{\neg b} \cdot f)}{\gnum}) \\
+& \iverson{\VV_1 \geq 2} \cdot
(\Sup \gnum \colon \iverson{\seqelem{\VV_2}{\VV_1 -1}{\gnum}}
\exprod \gapply{x}{(\iverson{\neg b} \cdot f)}{\gnum}) \\
&\quad\exprod
\gproductsymbol \big(\Sup \gnum_1 \colon \Sup \gnum_2 \colon
\iverson{\seqelem{\VV_2}{\vprod}{\gnum_1} \wedge \seqelem{\VV_2}{\vprod+1}{\gnum_2}} \\
&\qquad  \quad 
\exprod
\gapply{\varseq{x}}{\gsubst{\varseq{x'}}{g}{\gnum_2}}{\gnum_1}, \VV_1 - 2 \big)
\end{align*}
Here $\Sup \gnum \colon \iverson{\seqelem{\VV_2}{\VV_1 -1}{\gnum}} \exprod \gapply{x}{(\iverson{\neg b} \cdot f)}{\gnum}$ is a shorthand for 
\[
\Sup \gnum \colon \Sup \VV \colon \iverson{\VV +1 = \VV_1} \cdot \iverson{\seqelem{\VV_2}{\VV}{\gnum}} \exprod \gapply{x}{(\iverson{\neg b} \cdot f)}{\gnum}~.
\]
Similarily, $\gproduct{\ldots}{\VV_1 -2}$ is shorthand for
\[
\Sup \VV \colon \iverson{\VV + 2 = \VV_1} \cdot \gproduct{\ldots}{\VV}~.
\]
We now show that $\pathexp{\VV_1}{\VV_2}$ indeed satisfies the specification from (\ref{eqn:pathexp_spec_1}) and (\ref{eqn:pathexp_spec_2}). We distinguish the following cases: \\ \\
\noindent
\emph{The case $\sigma(\VV_1) \not\in \Nats$.} By Theorem~\ref{thm:fo_rats_subsumes_fo_nats}, we have 
\[
\sem{\Sup \gnum \colon \iverson{\seqelem{\VV_2}{\VV_1 -1}{\gnum}} \exprod \gapply{x}{(\iverson{\neg b} \cdot f)}{\gnum}}{\sigma}{\interpret} = 0
\]
and hence $\sem{\pathexp{\VV_1}{\VV_2}}{\sigma}{\interpret} = 0$. \\ \\
\noindent
\emph{The case $\sigma(\VV_1) = 0$.} In this case, we have
\[
\sem{\iverson{\VV +1 = \VV_1}}{\sigma\statesubst{\VV}{\RR}}{\interpret\statesubst{\VV}{\RR}} \eeq 0
\]
for all $\RR \in \PosRats$ and thus $\sem{\pathexp{\VV_1}{\VV_2}}{\sigma}{\interpret} = 0$. \\ \\
\noindent
\emph{The case $\sigma(\VV_1) = 1$ and $\sigma(\VV_2) = \stateseqnum{\sigma_0}$.} We have
\begin{align*}
&\sem{\pathexp{1}{\stateseqnum{\sigma_0}}}{\sigma}{\interpret} \\
\eeq&\sem{\Sup \gnum \colon \iverson{\seqelem{\stateseqnum{\sigma_0}}{0}{\gnum}}
	\exprod \gapply{x}{(\iverson{\neg b} \cdot f)}{\gnum}}{\sigma}{\interpret}
\tag{$\sem{\iverson{\VV_1 \geq 2}}{\sigma}{\interpret} = 0$
	and $\sem{\iverson{\VV_1 < 2}}{\sigma}{\interpret} = 1$} \\
\eeq&\sem{\gapply{x}{(\iverson{\neg b} \cdot f)}{\seqnum{\sigma_0}}}{\sigma}{\interpret}
\tag{$\sem{\iverson{\seqelem{\stateseqnum{\sigma_0}}{0}{\gnum}}}{\sigma}{\interpret} = 1$ only for $\sigma(\gnum) = \seqnum{\sigma_0}$} \\
\eeq & \sem{\iverson{\neg b} \cdot \FF}{}{}(\sigma_0)
\tag{by Lemma~\ref{lem:apply_goedel}} \\
\eeq&(\iverson{\neg b} \cdot \sem{\FF}{}{})(\sigma_0)
\cdot
\prod\limits_{i=0}^{\sigma(\VV_1)-2} \wp{\ITE{b}{\cc_1}{\SKIP}}{\statepred{\sigma_{i+1}{}}{}}(\sigma_i)~.
\tag{empty product equals $1$}
\end{align*}

\noindent
\emph{The case $\sigma(\VV_1) = k \in \Nats$ with $k \geq 2$ and $\sigma(\VV_2) = \stateseqnum{\sigma_0,\ldots,\sigma_{k-1}}$.} We have
\begin{align*}
&\sem{\pathexp{k}{\stateseqnum{\sigma_0, \ldots, \sigma_{k-1}}}}{\sigma}{\interpret} \\
\eeq& \semleft{(\Sup \gnum \colon \iverson{\seqelem{\stateseqnum{\sigma_0,\ldots,\sigma_{k-1}}}{k -1}{\gnum}}
	\exprod \gapply{x}{(\iverson{\neg b} \cdot f)}{\gnum})}{\sigma}{\interpret} \\
&\quad\exprod
\gproductsymbol \big(\Sup \gnum_1 \colon \Sup \gnum_2 \colon
[\seqelem{\stateseqnum{\sigma_0,\ldots,\sigma_{k-1}}}{\vprod}{\gnum_1}\\
& \qquad \quad \wedge \seqelem{\stateseqnum{\sigma_0,\ldots,\sigma_{k-1}}}{\vprod+1}{\gnum_2}] %
\exprod
\gapply{\varseq{x}}{\gsubst{\varseq{x'}}{g}{\gnum_2}}{\gnum_1}, k - 2 \big)
\semright
\tag{$\sem{\iverson{\VV_1 \geq 2}}{\sigma}{\interpret} = 1$
	and $\sem{\iverson{\VV_1 < 2}}{\sigma}{\interpret} = 0$} \\
\eeq& \semleft{(\Sup \gnum \colon \iverson{\seqelem{\stateseqnum{\sigma_0,\ldots,\sigma_{k-1}}}{k -1}{\gnum}}
	\exprod \gapply{x}{(\iverson{\neg b} \cdot f)}{\gnum})}{\sigma}{\interpret} \semright \\
&\quad\cdot\semleft{
	\gproductsymbol \big(\Sup \gnum_1 \colon \Sup \gnum_2 \colon
	[\seqelem{\stateseqnum{\sigma_0,\ldots,\sigma_{k-1}}}{\vprod}{\gnum_1}}{\sigma}{\interpret}\\
& \qquad \quad \wedge \seqelem{\stateseqnum{\sigma_0,\ldots,\sigma_{k-1}}}{\vprod+1}{\gnum_2}] %
\exprod
\gapply{\varseq{x}}{\gsubst{\varseq{x'}}{g}{\gnum_2}}{\gnum_1}, k  -2 \big)
\semright
\tag{by Theorem~\ref{cor:unrestricted_product}}\\
\eeq& \sem{\iverson{\neg b} \cdot \FF}{}{}(\sigma_{k-1}) \\
&\quad\cdot\semleft{
	\gproductsymbol \big(\Sup \gnum_1 \colon \Sup \gnum_2 \colon
	[\seqelem{\stateseqnum{\sigma_0,\ldots,\sigma_{k-1}}}{\vprod}{\gnum_1}}{\sigma}{\interpret}\\
& \qquad \quad \wedge \seqelem{\stateseqnum{\sigma_0,\ldots,\sigma_{k-1}}}{\vprod+1}{\gnum_2}] %
\exprod
\gapply{\varseq{x}}{\gsubst{\varseq{x'}}{g}{\gnum_2}}{\gnum_1}, k -2  \big)
\semright
\tag{see reasoning for previous case} \\
\eeq& \sem{\iverson{\neg b} \cdot \FF}{}{}(\sigma_{k-1})  \\
&\quad\cdot
\prod\limits_{i=0}^{k-2 }\semleft{\Sup \gnum_1 \colon \Sup \gnum_2 \colon
	[\seqelem{\stateseqnum{\sigma_0,\ldots,\sigma_{k-1}}}{i}{\gnum_1}}{\sigma}{\interpret}\\
& \qquad \qquad\quad  \wedge \seqelem{\stateseqnum{\sigma_0,\ldots,\sigma_{k-1}}}{i+1}{\gnum_2}] %
\exprod
\gapply{\varseq{x}}{\gsubst{\varseq{x'}}{g}{\gnum_2}}{\gnum_1}
\semright
\tag{by Theorem~\ref{thm:prod_exp} and $k \geq 2$ by assumption} \\
\eeq& \sem{\iverson{\neg b} \cdot \FF}{}{}(\sigma_{k-1})  \\
&\quad\cdot
\prod\limits_{i=0}^{k-2 }\sem{ \gapply{\varseq{x}}{\gsubst{\varseq{x'}}{g}{\seqnum{\sigma_{i+1}}}}{\seqnum{\sigma_{i}}}}{\sigma}{\interpret}
\tag{$\Sup$ quantifiers enforce $\sigma(\gnum_1) = \seqnum{\sigma_i}$
	and $\sigma(\gnum_2) = \seqnum{\sigma_{i+1}}$} \\
\eeq& \sem{\iverson{\neg b} \cdot \FF}{}{}(\sigma_{k-1})  \\
&\quad\cdot
\prod\limits_{i=0}^{k-2 }\sem{ \gapply{\varseq{x}}{\FG\left[ x_0' / \sigma_{i+1}(x_0) , \ldots ,  x_n' / \sigma_{i+1}(x_n) \right] }{\seqnum{\sigma_{i}}}}{\sigma}{\interpret}
\tag{by Lemma~\ref{lem:subst_goedel}} \\
\eeq& \sem{\iverson{\neg b} \cdot \FF}{}{}(\sigma_{k-1})  \\
&\quad\cdot
\prod\limits_{i=0}^{k-2 }\sem{ \FG\left[ x_0' / \sigma_{i+1}(x_0) , \ldots ,  x_n' / \sigma_{i+1}(x_n) \right] }{}{}(\sigma_i)
\tag{by Lemma~\ref{lem:apply_goedel}}  \\
\eeq& \sem{\iverson{\neg b} \cdot \FF}{}{}(\sigma_{k-1})  \\
&\quad\cdot
\prod\limits_{i=0}^{k-2 }\wp{\ITE{\BB}{\cc_1}{\SKIP}}{\statepred{\sigma_{i+1}}{\varseq{x}}}(\sigma_i)~,
\tag{by Equation~\ref{eqn:g_express_wp_body}} 
\end{align*}
which is what we had to show. Hence, we finally obtain
\begin{align*}
& \wp{\WHILEDO{\BB}{\cc_1}}{\sem{\FF}{}{}} \\
\eeq 
& \semleft{\Sup length \colon \Sup nums \colon
	\gsumsymbol\big[\vsum, \iverson{\stateseq{x}{\vsum}{length}}}{}{} \\
&\qquad \qquad \qquad \qquad\qquad\qquad \qquad 
\exprod	\pathexp{length}{\vsum}, {nums} \big] \semright ~.
\end{align*}
This completes the proof.

\subsection{Proof of Lemma~\ref{lem:subst_goedel}}

\begin{proof}
	\label{proof:subst_goedel}
	We have
	\begin{align*}
	& \gsubst{x'}{\FF}{\gnum} \\
	\eequiv& \Sup \VV_0 \colon \ldots \colon \Sup \VV_{n-1} \colon
	\iverson{\rseqelem{\gnum}{0}{\VV_0}
		\wedge \ldots \wedge \rseqelem{\gnum}{n-1}{\VV_{n-1}}}
	\exprod
	\FF
	\left[ x_0' / \VV_0 , \ldots ,  x_{n-1}' / \VV_{n-1} \right]
	\tag{by definition} \\
	\eequiv & 
	\iverson{\rseqelem{\gnum}{0}{\sigma(x_0)}
		\wedge \ldots \wedge \rseqelem{\gnum}{n-1}{\sigma(x_{n-1})}}
	\exprod
	\FF
	\left[ x_0' / \sigma(x_{0}) , \ldots ,  x_{n-1}' / \sigma(x_{n-1}) \right]
	\tag{for $0 \leq j \leq n-1$, $\rseqelem{\gnum}{j}{m} \equiv 1$ only for $m = \sigma(x_j)$} \\
	\eequiv & \FF
	\left[ x_0' / \sigma(x_{0}) , \ldots ,  x_{n-1}' / \sigma(x_{n-1}) \right]~.
	\tag{$\iverson{\rseqelem{\gnum}{0}{\sigma(x_0)}
			\wedge \ldots \wedge \rseqelem{\gnum}{n-1}{\sigma(x_{n-1})}} \equiv 1$}
	\end{align*}
	
\end{proof}

\subsection{Proof of Lemma~\ref{lem:apply_goedel}}

\begin{proof}
	\label{proof:apply_goedel}
	
	We have
	\begin{align*}
	& \sem{\gapply{x}{\FF}{\gnum}}{\sigma}{\interpret} \\
	\eeq & \sem{\Sup \VV_0 \colon \ldots \colon \Sup \VV_{n-1} \colon
		\iverson{\rseqelem{\gnum}{0}{\VV_0}
			\wedge \ldots \wedge \rseqelem{\gnum}{n-1}{\VV_{n-1}}}
		\exprod
		\FF
		\left[ x_0 / \VV_0 , \ldots ,  x_{n-1} / \VV_{n-1} \right]}{\sigma}{\interpret}
	\tag{by definition} \\
	\eeq& \sem{\iverson{\rseqelem{\gnum}{0}{\sigma'(x_0)}
			\wedge \ldots \wedge \rseqelem{\gnum}{n-1}{\sigma'(x_{n-1})}}
		\exprod
		\FF
		\left[ x_0 / \sigma'(x_0) , \ldots ,  x_{n-1} / \sigma'(x_{n-1}) \right]}{\sigma}{\interpret}
	\tag{for $0 \leq j  \leq n-1$, $\sem{\rseqelem{\gnum}{j}{\VV_{j}}}{\sigma}{\interpret} = 1$ only for $\sigma(\VV_j) = \sigma'(x_j)$, Lemma~\ref{lem:substitution}} \\
	\eeq &\sem{\FF
		\left[ x_0 / \sigma'(x_0) , \ldots ,  x_{n-1} / \sigma'(x_{n-1}) \right]}{\sigma}{\interpret}
	\tag{$\iverson{\rseqelem{\gnum}{0}{\sigma'(x_0)}
			\wedge \ldots \wedge \rseqelem{\gnum}{n-1}{\sigma'(x_{n-1})}} \eequiv 1$} \\
	\eeq &\sem{\FF}{}{}(\sigma')~.
	\tag{$\Vars(\sem{\FF}{}{})\subseteq \FV{\FF} \subseteq  \varseq{x}$ and Lemma~\ref{lem:substitution}}
	\end{align*}

\end{proof}

\end{document}